\documentclass[sigconf,nonacm]{aamas}

\usepackage{hyperref}
\usepackage{balance}

\setcopyright{ifaamas}
\acmConference[AAMAS '26]{Proc.\@ of the 25th International Conference
on Autonomous Agents and Multiagent Systems (AAMAS 2026)}{May 25 -- 29, 2026}
{Paphos, Cyprus}{C.~Amato, L.~Dennis, V.~Mascardi, J.~Thangarajah (eds.)}
\copyrightyear{2026}
\acmYear{2026}
\acmDOI{}
\acmPrice{}
\acmISBN{}

\acmSubmissionID{1228}

\title[]{Metric Hedonic Games on the Line}

\author{Merlin de la Haye}
\affiliation{
  \institution{Hasso Plattner Institute}
  \city{Potsdam}
  \country{Germany}}
\email{Merlin.delaHaye@hpi.de}

\author{Pascal Lenzner}
\affiliation{
  \institution{University of Augsburg}
  \city{Augsburg}
  \country{Germany}}
\email{pascal.lenzner@uni-a.de}

\author{Farehe Soheil}
\affiliation{
  \institution{Hasso Plattner Institute}
  \city{Potsdam}
  \country{Germany}}
\email{Farehe.Soheil@hpi.de}

\author{Marcus Wunderlich}
\affiliation{
  \institution{University of Augsburg}
  \city{Augsburg}
  \country{Germany}}
\email{marcus.wunderlich@uni-a.de}

\begin{abstract}
Hedonic games are fundamental models for investigating the formation of coalitions among a set of strategic agents, where every agent has a certain utility for every possible coalition of agents it can be part of. To avoid the intractability of defining exponentially many utilities for all possible coalitions, many variants with succinct representations of the agents' utility functions have been devised and analyzed, e.g., modified fractional hedonic games by Monaco et al.~\cite{MonacoStableOutcomesModified2020}. 
We extend this by studying a novel succinct variant that is related to modified fractional hedonic games. In our model, each agent has a fixed type-value and an agent's cost for some given coalition is based on the differences between its value and those of the other members of its coalition. This allows to model natural situations like athletes forming training groups with similar performance levels or voters that partition themselves along a political spectrum. 

In particular, we investigate natural variants where an agent's cost is defined by distance thresholds, or by the maximum or average value difference to the other agents in its coalition. For these settings, we study the existence of stable coalition structures, their properties, and their quality in terms of the price of anarchy and the price of stability. Further, we investigate the impact of limiting the maximum number of coalitions. Despite the simple setting with metric distances on a line, we uncover a rich landscape of models, partially with counter-intuitive behavior. Also, our focus on both swap stability and jump stability allows us to study the influence of fixing the number and the size of the coalitions. Overall, we find that stable coalition structures always exist but that their properties and quality can vary widely. 
\end{abstract}

\keywords{Game Theory, Coalition Formation, Hedonic Games, Clustering, Metric Distances}

\newcommand{\BibTeX}{\rm B\kern-.05em{\sc i\kern-.025em b}\kern-.08em\TeX}

\usepackage{mathtools}
\usepackage{csquotes}
\usepackage{multirow,multicol}
\usepackage{nicematrix}
\usepackage[normalem]{ulem}
\usepackage{thm-restate}
\usepackage{thmtools}
\usepackage{cleveref}
\usepackage{graphicx}
\usepackage{todonotes}

\usepackage{xcolor}

\newcommand{\N}{\mathbb{N}}
\newcommand{\R}{\mathbb{R}}
\newcommand{\nh}{\lfloor n/2 \rfloor}
\newcommand{\nhc}{\lceil n/2 \rceil}

\newcommand{\soc}[1]{\text{SC}(#1)}

\DeclareMathOperator{\dist}{d}
\DeclareMathOperator{\cost}{cost}
\DeclareMathOperator{\SC}{SC}
\DeclareMathOperator{\OPT}{OPT}
\DeclareMathOperator{\EQ}{EQ}
\DeclareMathOperator{\PoA}{PoA}
\DeclareMathOperator{\PoS}{PoS}
\DeclareMathOperator{\SW}{SW}
\DeclareMathOperator{\avg}{avg}
\DeclareMathOperator{\bigO}{\mathcal{O}}

\newcommand{\cev}[1]{\reflectbox{\ensuremath{\vec{\reflectbox{\ensuremath{#1}}}}}}

\Crefname{prop}{Proposition}{Propositions}

\Crefname{obs}{Observation}{Observations}

\newcommand{\doublecheckmark}{\checkmark\hspace{-5pt}\checkmark}
\newcommand{\existsnotall}{\exists\hspace{-1pt}{\not\hspace{-1pt}\forall}}
\begin{document}

\pagestyle{fancy}
\fancyhead{}

\maketitle

\section{Introduction}
Consider athletes preparing for a marathon who want to form running groups that are supervised by coaches.
Each athlete has an individual performance level (e.g. their pace) and wants to be in a group with other athletes who are on a similar level.
The utility of an athlete then only depends on the comparison between their individual performance level and those of the other athletes in their running group.
As there is only a limited number of coaches available, the maximum number of running groups is restricted.

This setting belongs to the research field of agent-based coalition formation that has been widely studied in the last decades, starting with the work of~\citet{MillerCoalitionFormationCharacteristic1980} on coalition formation games and later the seminal concept of hedonic games studied by~\citet{DrezeHedonicCoalitionsOptimality1980,BogomolnaiaStabilityHedonicCoalition2002} and~\citet{BanerjeeCoreSimpleCoalition2001}; see also the surveys by~\citet{AzizHedonicGames2016} and~\citet{HajdukovaCoalitionFormationGames2006}.

The key feature of hedonic games is that an agent's utility depends only on the members of its coalition and not on external factors such as the composition or size of other coalitions. 
In our athletics example, the training quality of an individual athlete only depends on the other athletes running in the same group with them.

In their full generality, hedonic games are hard to describe and analyze due to the existence of exponentially many possible coalitions and valuations. However, many natural coalition formation settings can be captured by hedonic games that have a succinct representation.
Classical examples are additively separable hedonic games~\cite{BogomolnaiaStabilityHedonicCoalition2002}, and fractional hedonic games~\cite{AzizFractionalHedonicGames2019, MonacoStableOutcomesModified2020}, where agents value each other individually and their utility in a coalition is the sum or average over all the valuations of agents in their~coalition.

Numerous succinct variants have been proposed; see the survey by~\citet{KerkmannHedonicGamesOrdinal2020}. However, none of them closely captures our setting, in which agents have fixed real-valued types and derive their utility from the distances between these types.

In our model, instead of considering utilities, we take the equivalent cost-based approach and 
focus on three natural cost functions that depend only on the type differences within a coalition: 
\begin{itemize}
\item \textsc{Average}: the cost of an agent is the average type difference to the other agents (without self-effect),
\item \textsc{Maximum}: the cost of an agent is the maximum type difference to the other agents,
\item \textsc{Cutoff}: the cost of an agent is the number of other agents who have a type difference larger than a given threshold $\lambda$.
\end{itemize}
See \Cref{exmp:intro} for an illustration of the cost functions. These cost functions serve as a proof of concept for capturing natural agent behavior. Moreover, they have been recently used in a closely related agent-based model investigating residential segregation~\cite{BiloSchellingGamesContinuous2023}. 

\begin{figure}[t]
    \centering
    \includegraphics[width=0.9\linewidth,page=1]{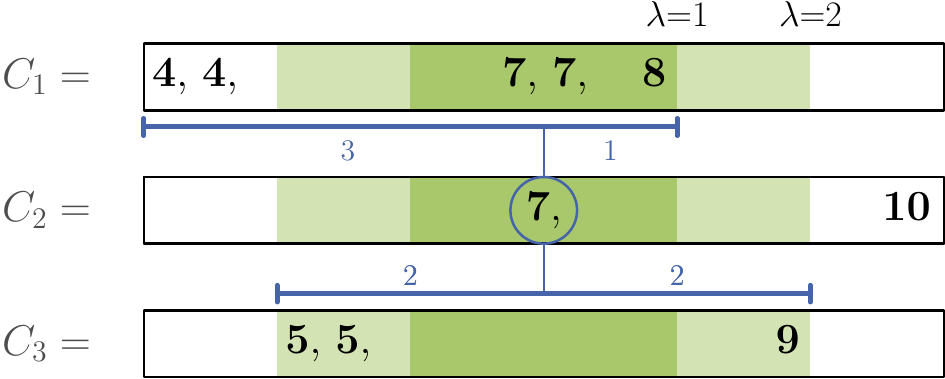}
    \caption{Coalition structure $\mathcal{C}=\{C_1,C_2,C_3\}$ with $n=10$ agents and $k=3$ coalitions. Bold black numbers represent agents with this value. Green areas are the intervals with radius $\lambda=1$ (green) and $\lambda=2$ (light green) around the value $7$. The blue lines show the maximum distance of the circled agent to both sides for coalitions $C_1$ and~$C_3$.}
    \label{fig:example}
    \Description{Three coalitions are shown as three rows, where as the utilities are as follows: $C_1=\{4,4,7,7,8\}$, $C_2=\{7,10\}$, $C_3=\{5,5,9\}$. Agent 7 of coalition $C_2$ is highlighted. Example 1.1 discusses its utilities for the different cost functions.}
\end{figure}

\begin{example}\label{exmp:intro}
    Consider the game instance with coalition structure, i.e., a partition of the agents into disjoint coalitions, $\mathcal{C}=\{C_1,C_2,C_3\}$, shown in \Cref{fig:example}. In this example the circled agent with value $7$, called agent $x$, is currently in its worst coalition~$C_2$ (for all three cost functions) but would prefer to join either coalition $C_1$ or $C_3$, depending on the cost function (and on $\lambda$ for \textsc{Cutoff}). With the \textsc{Average} cost function, the agents in coalitions $C_1$ and $C_3$ have an average distance of $1.4$ and $2$ to agent $x$, respectively. Thus, agent $x$ prefers $C_1$ over $C_3$ in this setting. With the \textsc{Maximum} cost function, on the other hand, agent~$x$ prefers coalition $C_3$ over $C_1$, because the largest distance from $x$ to another agent is smaller in $C_3$ (distance $2$) than in $C_1$ (distance $3$). For the \textsc{Cutoff} cost function, agent $x$'s preference might depend on the threshold value $\lambda$ we set. When $\lambda=2$, all agents in coalition $C_3$ are within the threshold distance to agent $x$, while some agents in $C_1$ are outside. Thus, agent $x$ prefers $C_3$ over $C_1$ when $\lambda=2$. However, with a smaller threshold of $\lambda=1$, agent $x$ slightly prefers $C_1$ over $C_3$. See \Cref{sec:model} for the exact definition of the cost functions and \Cref{appx:intro} for details.
    \hfill $\vartriangleleft$
\end{example}
For our coalition formation problem with three natural cost functions and possibly a fixed number of coalitions~$k$, we establish the existence of stable coalition structures. Furthermore, we study structural properties of both stable and socially optimal states and analyze the quality of the stable states.

\subsection{Model}\label{sec:model}
\noindent\textbf{Agents and Coalition Structures.\,}
In our model there are $n$ agents denoted by $[n] \coloneqq \{1,2,\ldots,n\}$, each having a fixed real value $v_i \in \R$ (for all agents $i \in [n]$). We assume, wlog that the values are sorted in non-decreasing order, i.e., $L=v_1\leq v_2\leq \dots\leq v_n=H$.
Multiple agents may share the same value, which we call \emph{repetition}. 

Every subset of the agents is called a \emph{coalition}.
The coalition $[n]$ is called the \emph{grand coalition}, while a coalition $\{i\} \subseteq [n]$ consisting of a single agent $i \in [n]$ is called a \emph{singleton coalition}.
A set of coalitions $\mathcal{C}$ is called a \emph{coalition structure} if every agent is part of exactly one coalition in $\mathcal{C}$. For explicit coalition structures, we often represent an agent by its value. If a value $v \in \R$ or an entire coalition $\{a,b,c\}$ occurs $x \in \mathbb{N}$ times, we denote this by $$\overbrace{v\ldots v}^x, \text{ resp. }\overbrace{\{a,b,c\}}^x.$$ In this paper, we only focus on coalition structures consisting of at most $k \leq n$ many coalitions, and write $\mathcal{C} \coloneqq \{C_i\}_{i \in [k]}$, where $k \in \N$ is a model parameter.
Given a coalition structure $\mathcal{C}$, we denote the coalition of agent $i \in [n]$ as $\mathcal{C}(i)$. We say that a coalition structure~$\mathcal{C}$ 
(with possibly empty coalitions) is \emph{sorted}, if for all agents $i,j$ in some coalition $C \in \mathcal{C}$, it holds that all agents with a value between $v_i$ and $v_j$ are also part of the same coalition~$C$. In a sorted coalition structure, the coalitions can be sorted by the value of their agents. Then, w.l.o.g., we name the coalitions to be ascending with respect to the value of any agents in each coalition, with $C_1$ being the coalition with the agents with lowest value, etc. 

\vspace{6pt}
\noindent\textbf{Cost Functions.\,}
The \emph{distance} $\dist(i,j)$ between two agents $i,j \in [n]$ is defined as the Euclidean distance between the respective values of the agents, i.e., $|v_i-v_j|$. To calculate the cost of an agent $i$, there are many ways to aggregate the distances between it and a set of agents. 
In this paper, we explore three natural cost functions (average distance, maximum distance and cutoff), where for agent $i$, we aggregate its distances to \textit{all other agents in its current coalition} $C(i)$ while calculating its cost. Such cost functions are called \emph{without self-effect}. Because of this, singleton coalitions need to be treated separately: We differentiate between the \emph{happy in isolation (HIS)} and the \emph{unhappy in isolation (UIS)} case, where the cost of an agent $i \in [n]$ in a singleton coalition $\{i\}$ is zero or infinity, respectively. 

For coalitions with at least two agents, we define the three cost functions as follows.
The \textsc{Average} cost function
    $$\cost_\textrm{AVG}(i,C) \coloneqq \frac{1}{|C\setminus \{i\}|}\sum_{j \in C\setminus \{i\}}\dist(v_i,v_j)$$
takes the average distance to all other agents. The \textsc{Max} cost function 
    $$\cost_\textrm{MAX}(i,C) \coloneqq \max_{j \in C\setminus \{i\}}\dist(v_i,v_j)$$
takes the maximum distance to all other agents in the coalition. To define the \textsc{Cutoff} cost function, we need an additional model parameter $\lambda > 0$. Based on that, we partition the coalition $C$ into \emph{friends} and \emph{enemies} of agent $i$. The set of friends $N^+_i(C)$ of agent $i$ in coalition $C$ is the set of all other agents $j \in C \setminus \{i\}$ whose distance to agent $i$ is at most $\lambda$. All other agents in coalition $C$ are the \emph{enemies} $N^-_i(C) \coloneqq C \setminus (N^+_i(C) \cup \{i\})$ of agent $i$. The \textsc{Cutoff} cost function with respect to $\lambda > 0$ is then
    $$\cost_{\textrm{CUT}-\lambda}(i,C) \coloneqq \frac{|N_i^-(C)|}{|C \setminus \{i\}|},$$
i.e., the fraction of enemies of agent $i$ in coalition $C$ among all other agents in the coalition $C$. When it is clear from the context, we omit the subscript of the cost functions.

\vspace{6pt}
\noindent\textbf{Stability Concepts.\,}
Given a coalition structure $\mathcal{C}$, 
an agent $i$ 
has an \emph{improving move} if the cost of agent $i$ in its current coalition $C \in \mathcal{C}$ is strictly higher than its cost in some other coalition $C^\prime \in \mathcal{C} \setminus \{C\}$. A coalition structure is called \emph{jump stable} or a \emph{jump equilibrium} if no agent has an improving move. We refer to this  as \emph{jump stability}\footnote{In some related works this is called \emph{Nash stability}.}. 

Similarly, given a coalition structure $\mathcal{C}\coloneqq \{C_i\}_{i \in [k]}$, there is an \emph{improving swap} between agents $i$ and $j$ if both can strictly decrease their cost by swapping their coalitions, i.e., both leaving their current coalition and joining the other coalition. A coalition structure is called \emph{swap stable} if there is no improving swap.

\vspace{6pt}
\noindent\textbf{Terminology.\,}
We use the scheme \textsc{COST-ST-XIS} for naming combinations of cost functions and stability concepts, where 
\begin{itemize}
    \item \textsc{COST} indicates the cost function, which is either \textsc{Avg} (for Average), \textsc{Max} (for Maximum) or \textsc{Cutoff}$_\lambda$ (for the Cutoff cost function with parameter $\lambda$),
    \item \textsc{ST} indicates the stability concept, which is either \textsc{Jump} for jump stability or \textsc{Swap} for swap stability,
    \item \textsc{XIS} indicates whether we are in the happy in isolation (\textsc{HIS}) or unhappy in isolation (\textsc{UIS}) setting.
\end{itemize}

Note that, when considering swap stability, the sizes of the coalition cannot change. Therefore, singleton coalitions stay as they are. Moreover, every agent in a singleton coalition in the initial coalition structure cannot leave, since either it does not want to (HIS) or no other agent wants to swap into a singleton coalition (UIS). As singletons thus do not contribute to any game dynamic in the swap setting, we will only consider games without singletons and omit the \textsc{UIS} and \textsc{HIS} flag in the terminology for \textsc{Swap} games.

All game instances are defined by the number of agents $n$, their values $(v_i)_{i \in [n]}$ and the maximum number of allowed coalitions, $k$. Indeed, every instance of a \textsc{Jump} game is defined by the tuple $(n, (v_i)_{i \in [n]}, k)$. For \textsc{Swap} games, we additionally need to fix the sizes of the coalitions, i.e., every instance of a \textsc{Swap} game is defined by the tuple $(n, (v_i)_{i \in [n]}, k, (k_i)_{i \in [k]})$.

\vspace{6pt}
\noindent\textbf{Quality of Equilibria.\,}
The quality of a coalition structure $\mathcal{C}$ is measured by the \emph{social cost}, which we define as the sum of costs over all agent's cost in the game, i.e.,
    $\text{SC}(\mathcal{C}) \coloneqq \sum_{i \in [n]}\cost(i,\mathcal{C}(i))$.
A coalition structure with the smallest social cost is called an \emph{optimum} $\OPT(\mathcal{I})$ of a game instance $\mathcal{I}$. The quality of equilibria is measured by the Price of Anarchy (PoA) and the Price of Stability (PoS)~\cite{KoutsoupiasWorstcaseEquilibria2009,AnshelevichPriceStabilityNetwork2008}. Let $\mathcal{G}$ be a set of game instances and $\EQ$ the set of equilibria (given some stability concept). Then the PoA (PoS) is defined to be the worst ratio between the cost of the worst (best) equilibrium and the optimum cost over all instances of $\mathcal{G}$, i.e.,
\begin{align*}
    \PoA(\mathcal{G}) \coloneqq \sup_{\mathcal{I} \in \mathcal{G}}\sup_{\mathcal{C} \in \EQ(\mathcal{I})}\frac{\SC(\mathcal{C})}{\OPT(\mathcal{I})},\\
    \PoS(\mathcal{G}) \coloneqq \sup_{\mathcal{I} \in \mathcal{G}}\inf_{\mathcal{C} \in \EQ(\mathcal{I})}\frac{\SC(\mathcal{C})}{\OPT(\mathcal{I})}.
\end{align*}

Regarding game dynamics, 
a game has the \textit{finite improvement property} (FIP), if any sequence of improving moves is finite, i.e., eventually must end in an equilibrium. 
We will use \textit{ordinal potential functions}~\cite{MondererPotentialGames1996} to prove this property in a setting, or show the existence of an \textit{improving response cycle} (IRC) to disprove it.

To discuss the PoS for \textsc{Cutoff$_\lambda$-Jump} games, we use the technical concept of a $\lambda$-block. Given some $\lambda > 0$, a \emph{$\lambda$-block} is an interval $I \subseteq \R$ of size $|I|=\lambda$. A (multi-)set of values $V$ is said to be \emph{covered} by a set $\mathcal{B}$ of $\lambda$-blocks, if for every value $v \in V$ there is at least one $\lambda$-block $B \in \mathcal{B}$ that includes the values $v$, i.e., $v \in B$. We call a \textsc{Cutoff$_\lambda$-Jump} game instance $(n,(v_i)_{i\in [n]},k)$ \emph{nice} if the set of values $\{v_i\}_{i \in [n]}$ can be covered by a set of at most $k$ $\lambda$-blocks.

\begin{table*}[!ht]
\centering
\begin{NiceTabular}{|c||c|c|c||c|c|c||c|c|c|}[hvlines]
    \Block{2-1}{ } & 
    \Block{1-3}{\textsc{Average}} & & & 
    \Block{1-3}{\textsc{Cutoff$_\lambda$}} & & & 
    \Block{1-3}{\textsc{Max}} \\
    & 
    \textsc{Swap} & \textsc{J-HIS} & \textsc{J-UIS} & 
    \textsc{Swap} & \textsc{J-HIS} & \textsc{J-UIS} & 
    \textsc{Swap} & \textsc{J-HIS} & \textsc{J-UIS} \\ \Hline \Hline
    EQ Existence & 
    \doublecheckmark C\ref{cor:swap-potential-games} & \checkmark T2.4~\cite{MilchtaichStabilitySegregationGroup2002} & \checkmark GC & 
    \doublecheckmark C\ref{cor:swap-potential-games} & \checkmark T\ref{theorem:his-sorted-pne} & \checkmark GC & 
    \doublecheckmark T1~\cite{BiloSchellingGamesContinuous2023} & \Block{1-2}{\doublecheckmark L\ref{lemma:max-jump-fip}}\\
    \Block{2-1}{EQ sorted} & 
    
    $\forall\exists$ L\ref{theorem:swap-sorted-eq-exists} & 
    $\forall\exists$ T2.4~\cite{MilchtaichStabilitySegregationGroup2002}& 
    $\forall\exists$ GC & 
    
    \;$\forall\exists$ L\ref{theorem:swap-sorted-eq-exists}\; & 
    \;\;$\forall\exists$ T\ref{theorem:his-sorted-pne}\;\; & 
    $\;\;\;\forall\exists$ GC\;\;\; & 
    
    $\forall\exists$ T\ref{theorem:swap-sorted-eq-exists} & 
    $\forall\exists$ T\ref{theorem:his-sorted-pne} & 
    $\forall\exists$ GC \\
    & 
    
    $\existsnotall$ L\ref{lemma:swap-unsorted-opt} & \Block{1-2}{$\existsnotall$ E2.3~\cite{MilchtaichStabilitySegregationGroup2002}}&&
    
    $\existsnotall$ L\ref{lemma:swap-unsorted-opt} & \Block{1-2}{$\existsnotall$ L\ref{lemma:cutoff-jump-unsorted}}&&
    
    $\existsnotall$ L\ref{lemma:swap-unsorted-opt} & \Block{1-2}{$\existsnotall$ L\ref{lemma:PoA-Max-Avg-Jump-Swap}}&\\
    OPT sorted & 
    
    $\exists{\not\exists}$ L\ref{lemma:swap-unsorted-opt} & 
    \Block{1-2}{($\forall\exists$) \textit{Conjecture}} & & 
    
    \Block{1-3}{$\exists{\not\exists}$ L\ref{lemma:swap-unsorted-opt}} & & &
    
    $\exists{\not\exists}$ L\ref{lemma:swap-unsorted-opt} & 
    \Block{1-2}{$\exists{\not\exists}$ L\ref{lemma:unsorted-unstable-max-jump-opt}} & 
    \\
    PoA & 
    
    \Block{1-3}{$\infty$ L\ref{lemma:PoA-Max-Avg-Jump-Swap}} & & & 
    
    \Block{1-3}{$\infty$ L\ref{lemma:PoA-cutoff}} & & &  
    
    \Block{1-3}{$\infty$ L\ref{lemma:PoA-Max-Avg-Jump-Swap}} & & \\
    PoS & 
    
    =$1$ C\ref{cor:cutoff-avg-swap-pos} & 
    \Block{1-2}{${>}1$ L\ref{lemma:avg-max-jump-pos}} & & 
    
    =$1$ C\ref{cor:cutoff-avg-swap-pos} & 
    \Block{1-2}{${=}1$ (nice) / ${>}1$ (else) L\ref{lemma:pos-cutoff-jump}} &
    &
    
    =$1$ L\ref{lemma:max-swap-pos} &
    \Block{1-2}{${>}1$ L\ref{lemma:avg-max-jump-pos}} &
\end{NiceTabular}
\caption{Result Overview. The symbol \enquote{$\doublecheckmark$} means that this setting has the FIP, while \enquote{$\checkmark$} only indicates simple equilibrium existence for all instances (e.g. GC stands for grand coalition). As for the logic quantifier, the first is over the set of instances of this setting while the second one is over the set of equilibria of the respective instance, i.e., \enquote{$\existsnotall$} can be read as \enquote{There is an instance such that not all equilibria are sorted}; \enquote{$\forall\exists$} means \enquote{For all instances there is at least one sorted equilibrium / optimum}; \enquote{$\exists\not\exists$} means \enquote{There are instances where no optimum is sorted}. The \enquote{$\infty$} in the PoA row indicates an unbounded~PoA.\label{table:results}}
\end{table*}

\subsection{Related Work}
Given the breadth of variants of hedonic games, we focus our discussion on models with \emph{cardinal}-based comparisons, where each agent has a valuation function scoring every coalition and prefers the higher (lower) valued coalition in the utility (cost) variant. 
However, some models based on ordinal comparisons employ ideas similar to the cost functions we study but their results cannot be applied to our model. Examples include: hedonic games with $\mathcal{W}$-preferences~\cite{AzizIndividualbasedStabilityHedonic2012,CechlarovaStablePartitionsWpreferences2004,CechlarovaStabilityCoalitionFormation2001}, where coalitions are compared based on the worst-ranked agent (similar to our \textsc{Max} cost function), and hedonic games with friends-and-enemies preferences~\cite{DimitrovSimplePrioritiesCore2006,LangRepresentingSolvingHedonic2015,SchlueterFriendEnemyorientedHedonic2024}, where each agent partitions others into friends and enemies (similar to our \textsc{Cutoff} cost function) and compares coalitions based on the number of friends and enemies they include.

The most prominent cardinal-based hedonic games with succinct representations are \emph{additively separable hedonic games (ASHGs)} introduced by \citet{BogomolnaiaStabilityHedonicCoalition2002} and \emph{fractional hedonic games (FHGs)} introduced by \citet{AzizFractionalHedonicGames2019}. In both games, every agent $i$ has a valuation $v_i(j) \in \R$ for every agent $j$ in the game. In ASHGs, the utility of an agent is the sum of these valuations over all agents in its coalition, while in FHGs, the agents take the average of the values in its coalition.  
Recently, \citet{MonacoStableOutcomesModified2020} studied a variant of FHGs called \emph{modified FHGs} where every agent takes the average only over all \emph{other} agents' valuations, excluding its own (without self-effect). Indeed, the \textsc{Avg-Jump-UIS} and \textsc{Cutoff$_\lambda$-Jump-UIS} games can be seen as cost-based variants of these modified FHGs provided that the number of coalitions is not restricted. 
For the \textsc{Cutoff$_\lambda$} games, the valuations are additionally in $\{0,1\}$, which is why they are a subclass of so called \emph{unweighted} modified FHGs. Also, \citet{MonacoStableOutcomesModified2020} showed that in modified FHGs with non-negative and symmetric valuations, there is always a jump stable outcome (the grand coalition) and the PoA and PoS is linear in $n$. However, these results do not hold for our cost-based variants; especially when restricting the number of coalitions.

Several hedonic game variants use some notion of distance in their models. \citet{ReyFENHedonicGamesDistanceBased2022} use the distance between an agent's ordinal preference list and its coalition to determine their valuation of that coalition. Further, some variants of FHGs position the agents on an unweighted host graph $G$ and use the distance in $G$ to get the valuations between agents. Examples include \emph{distance hedonic games} (DHGs) introduced by \citet{FlamminiDistanceHedonicGames2021} and \emph{social distance games} introduced by \citet{BranzeiSocialDistanceGames2011}. In social distance games (SDGs), the valuation of an agent $i$ for an agent $j \neq i$ in a coalition $C$ is $\frac{1}{d_{G[C]}(i,j)}$, where $d_{G[C]}(i,j)$ is the distance from node $i$ to node $j$ in the subgraph of $G$ induced by the node-set~$C$. Distance hedonic games (DHGs) generalize SDGs by adding a scoring vector $\alpha$ mapping each possible distance $d_{G[C]}(i,j)$ to a real number. Some instances of our \textsc{Avg-Jump} game can be modeled as DHGs. Here, agents have consecutive natural numbers as fixed values and the matching DHG instance is played on a path graph with a decreasing scoring vector $\alpha$. To the best of our knowledge, there no results are known for this case.

Only a few models consider distances between points embedded in the Euclidean space. In \emph{hedonic clustering games} introduced by \citet{FeldmanHedonicClusteringGames2015}, coalitions are interpreted as clusters, and each agent’s cost is defined by its distance to the cluster center or median. 
Also, they examine the \enquote{Fixed Clustering} setting, in which the number of coalitions (or clusters) is fixed. Both variants are related to our model, but incomparable due to the different cost functions.

Regarding limiting the number of coalitions, \citet{SlessFormingCoalitionsFacilitating2018} study ASHGs where exactly $k$ coalitions must be formed. \citet{LiFractionalHedonicGames2021} extends this by considering FHGs with an upper bound on the number of coalitions. 
\citet{MilchtaichStabilitySegregationGroup2002} analyze a model incorporating all three key aspects of our models: limited number of coalitions, geometric embedding, and ignoring the self-effect within coalitions. In particular, they study exactly the \textsc{Avg-Jump-HIS} games and show that sorted 
PNE always exist, although unsorted PNE can also occur, and an IRC exists even if there are only two coalitions. In order to show the existence of sorted PNE, they propose an algorithm that we will extend and adapt to our other cost functions. Interestingly, they also consider the Distance-to-Average cost function 
in higher dimensions, where the existence of PNE is no longer guaranteed.

A model similar to our \textsc{Avg-Jump-HIS} game is also analyzed by \citet{Ahmadi2022}. For the general setting in multiple dimensions, they show NP-hardness and for the one-dimensional average setting with Euclidean distances, they rediscover the $\bigO(kn)$ algorithm by \citet{MilchtaichStabilitySegregationGroup2002}.
Their model was later generalized by \citet{Aamand2023}, which allows for arbitrary distance aggregation functions for the \textsc{Jump-HIS} setting.
With this, they investigate a minimum distance and maximum distance function, similar to our \textsc{Max-Jump-HIS} setting, where they show approximation results on the optimum in a multidimensional setting.

Another way to limit the number of coalitions is by fixing the size of each coalition. As jumps would change coalition sizes, swap stability is more applicable in this setting. However, we note that some studies on hedonic games consider upper bounds on coalition sizes, which allows jumps~\cite{BiloHedonicGamesFixedsize2022,CsehParetoOptimalCoalitions2019}. Swap stability has been studied in various models including matching markets~\cite{AlcaldeExchangeproofnessDivorceproofnessStability1994,AzizCoalitionalExchangeStable2017,DamammePowerSwapDeals2015}, Schelling Games~\cite{ChauhanSchellingSegregationStrategic2018,EchzellConvergenceHardnessStrategic2019,AgarwalSchellingGamesGraphs2021,KanellopoulosModifiedSchellingGames2021,BiloTopologicalInfluenceLocality2022}, and also recently in hedonic games with $\mathcal{W}$-preferences~\cite{CsehParetoOptimalCoalitions2019} and ASHGs~\cite{BiloHedonicGamesFixedsize2022}.
Schelling Games~\cite{ChauhanSchellingSegregationStrategic2018,AgarwalSchellingGamesGraphs2021}, in particular, can be seen as hedonic games with overlapping coalitions. In these games, agents of different types occupy nodes on a host graph and aim to optimize their neighborhood, which corresponds to their coalition in this setting. Formally, hedonic games are Schelling games on a host graph consisting of disconnected cliques, while preserving the same cost function and stability concept. We highlight the model by \citet{BiloSchellingGamesContinuous2023}, as they study the same set of cost functions and stability concepts as those used in our hedonic games. Using the reduction mentioned above, we obtain that our \textsc{Max-Swap} games are potential games. Furthermore, they show that the PoA is unbounded in almost all cases except for \textsc{Max-Swap} and \textsc{Avg-Swap}, where the PoA is linear in the number of agents. They also show that the PoS is $1$ for \textsc{Avg-Swap} and \textsc{Cutoff$_\lambda$-Swap} games on regular graphs. Note that the latter result does not directly apply to our models, as the reduction to Schelling games does not create regular and connected graphs as used by~\citet{BiloSchellingGamesContinuous2023}.

\subsection{Our Contribution}
We propose and analyze a novel succinct variant of hedonic games where each agent has a fixed real-valued type and costs are induced by type-distances. To capture natural settings where agents cannot opt out to form a singleton coalition, the maximum number of coalitions might be fixed.   
For an overview of our results with the respective references, see \Cref{table:results}.

For all combinations of cost functions (average distance, distance cutoff, maximum distance; in columns) and stability concepts (swap and jump; in sub-columns), we show that: every game instance admits at least one equilibrium (table row 1), either because they fulfill the FIP, or as we show a way to construct one for any instance.
Further, every game instance admits a sorted equilibrium (row 2) and there exists at least one game instance in each setting admitting an unsorted equilibrium (row 3). 
Furthermore, we identify game instances in which all socially optimal structures are unsorted in every setting (row 4), except for \textsc{Avg-Jump}, where we conjecture that all optimal structures are sorted. 
To substantiate this, we provide partial results in this direction~(\Cref{appx:avg-jump-opt}).
Regarding the quality of equilibria, we provide examples showing that the price of anarchy (PoA) is unbounded for most settings (row 5), and we characterize the price of stability (PoS) for all \textsc{Swap} games.
For \textsc{Jump} games, the PoS is strictly greater than $1$, except for a specific subclass of \textsc{Cutoff$_\lambda$-Jump} games (row~6).

In \Cref{sec:swap,sec:jump}, we address swap stability and jump stability, separately.
Each section first establishes the existence of equilibria and then investigates their structural properties, along with those of the optimal structures.
Finally, in \Cref{sec:eq-quality}, we analyze the quality of both swap and jump equilibria in terms of the PoA and PoS.

\section{Swap Stability}\label{sec:swap}
In this section, we study the existence and properties of swap equilibria as well as the properties of the respective optima. 

\subsection{Equilibrium Existence}
In this subsection, we show that all \textsc{Swap} games have at least one equilibrium. \citet{BiloSchellingGamesContinuous2023} give potential functions for all three cost functions for Schelling games with continuous types on regular graphs (Theorem 2 and Proposition 3) and for \textsc{Max-Swap} even on general graphs (Theorem 1). As our hedonic games can be transformed to their Schelling games on a set of (different-sized) cliques, we already have equilibrium existence for \textsc{Max-Swap}. For \textsc{Avg-Swap} and \textsc{Cutoff$_\lambda$-Swap}, they showed that the social cost is a potential function. \citet{BiloHedonicGamesFixedsize2022} showed that the social welfare (the utility equivalent of social cost) is also a potential function for ASHG and FHG restricted for swap stability. In the following \Cref{lemma:swap-mfhg}, we show that this also applies to a subclass of modified FHGs~\cite{MonacoStableOutcomesModified2020} that includes utility variants for our \textsc{Avg} and \textsc{Cutoff$_\lambda$} cost functions. The proof can be found in~\Cref{appx:swapStability}.

\begin{restatable}{lemma}{lemmaSwapMFHG}
\label{lemma:swap-mfhg}
Every modified FHG with symmetric and non-negative valuation has a swap equilibrium, and they has the FIP.
\end{restatable}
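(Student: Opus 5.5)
I will use the social welfare $\SW(\mathcal{C}) \coloneqq \sum_{i\in[n]} u_i(\mathcal{C}(i))$ as an exact potential for improving swaps. Here $u_i(C) = \frac{1}{|C|-1}\sum_{j\in C\setminus\{i\}} v_i(j)$ is the modified FHG utility. As discussed above, singleton coalitions never take part in a swap, so only coalitions of size at least two matter. The key observation is that swaps preserve all coalition sizes. So for a coalition $C$ of fixed size $s\ge 2$, the total utility of its members is $\frac{1}{s-1}\sum_{i\in C}\sum_{j\in C\setminus\{i\}} v_i(j)$. By symmetry this equals $\frac{2}{s-1}\sum_{\{i,j\}\subseteq C} v_i(j)$, which is twice the weight of the edges inside $C$ scaled by a size-dependent constant.

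Now consider an improving swap between $a\in C$ and $b\in D$, producing $C' = C\setminus\{a\}\cup\{b\}$ and $D'=D\setminus\{b\}\cup\{a\}$. Write $s=|C|$ and $t=|D|$; these sizes are unchanged by the swap. Only the welfare of $C$ and $D$ changes. Edges among $C\setminus\{a\}$ are kept, so the change in $C$'s total utility is
$$\frac{2}{s-1}\Big(\sum_{j\in C\setminus\{a\}} v_b(j) - \sum_{j\in C\setminus\{a\}} v_a(j)\Big).$$

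The main step is to link this to the individual gains. Agent $b$'s new utility is $\frac{1}{s-1}\sum_{j\in C\setminus\{a\}} v_b(j)$. Agent $a$'s old utility is $\frac{1}{s-1}\sum_{j\in C\setminus\{a\}} v_a(j)$. So the change in $C$'s welfare equals $2\,(u_b(C') - u_a(C))$, and symmetrically the change in $D$'s welfare is $2\,(u_a(D') - u_b(D))$. Regrouping, $\Delta\SW = 2\big[(u_a(D')-u_a(C)) + (u_b(C')-u_b(D))\big]$. This is twice the sum of the two agents' utility gains, which is strictly positive for an improving swap.

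Hence $\SW$ strictly increases with every improving swap. There are finitely many coalition structures, so no sequence of improving swaps can revisit a structure, and every such sequence is finite. This gives the FIP, and starting from any structure it yields the existence of a swap equilibrium. The only delicate point is the bookkeeping: that $a$ and $b$ are excluded from each other's sums, which is handled because without self-effect the denominators are $s-1$ and $t-1$ and the swapped-in agent replaces the swapped-out one. Non-negativity is not really needed beyond matching the stated class.
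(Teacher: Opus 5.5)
Your proof is correct and follows the same argument as the paper: social welfare is a potential, because with coalition sizes fixed an improving swap changes it by exactly twice the sum of the two swapping agents' utility gains. The paper uses non-negativity only to rule out swaps involving singletons (no agent can strictly gain by moving into a singleton, whose utility is $0$), and that is the step you take for granted. Your closing remark is still right: under the zero-utility convention for singletons, the identity $\Delta\SW = 2(\text{gain}_a+\text{gain}_b)$ holds in that case too.
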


Using \Cref{lemma:swap-mfhg} for \textsc{Avg-} and \textsc{Cutoff$_\lambda$-Swap} games, as well as the potential function for \textsc{Max-Swap} Schelling games by~\citet{BiloSchellingGamesContinuous2023}, we prove \Cref{cor:swap-potential-games} in \Cref{appx:swapStability}.

\begin{restatable}{corollary}{corSwapPotentialGames}
\label{cor:swap-potential-games}
Every \textsc{Max}-, \textsc{Avg}- and \textsc{Cutoff$_\lambda$-Swap} game has an equilibrium, and all three fulfill the FIP.
\end{restatable}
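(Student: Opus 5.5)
We split the statement by cost function. For \textsc{Avg} and \textsc{Cutoff$_\lambda$}, we recast the cost-based game as a modified FHG and then apply \Cref{lemma:swap-mfhg}. For \textsc{Max}, we invoke the potential function of \citet{BiloSchellingGamesContinuous2023} through the clique reduction.

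\textbf{Avg and Cutoff.} We first turn costs into non-negative symmetric valuations. Let $D \coloneqq H-L$ and set $u_i(j) \coloneqq D - \dist(v_i,v_j) \geq 0$ for \textsc{Avg}. This is symmetric because $\dist$ is. For any coalition $C$ with $|C|\geq 2$, the modified FHG utility of $i$ is
$$\frac{1}{|C\setminus\{i\}|}\sum_{j\in C\setminus\{i\}} u_i(j) = D - \cost_{\textrm{AVG}}(i,C).$$
For \textsc{Cutoff$_\lambda$} we set $u_i(j)\coloneqq 1$ if $\dist(v_i,v_j)\le\lambda$ and $u_i(j)\coloneqq 0$ otherwise. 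This is again symmetric and non-negative, and the utility equals $1-\cost_{\textrm{CUT}-\lambda}(i,C)$. In both cases utility is a strictly decreasing affine function of cost, with the same constant for every agent. Hence an improving swap in our game is exactly an improving swap in the modified FHG.

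Singletons need no special treatment. We only consider \textsc{Swap} games without singletons, and swaps preserve coalition sizes, so the without-self-effect averages are always over nonempty sets. \Cref{lemma:swap-mfhg} then gives existence of a swap equilibrium and the FIP for these two settings.

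\textbf{Max.} Following the reduction sketched in the related work, we map a \textsc{Max-Swap} instance with coalition sizes $(k_i)_{i\in[k]}$ to a Schelling game with continuous types. The host graph is the disjoint union of cliques $K_{k_1},\dots,K_{k_k}$, and each agent occupies a node with type $v_i$. An agent's neighborhood is then precisely the rest of its coalition. Its Schelling \textsc{Max} cost coincides with $\cost_{\textrm{MAX}}$, and swaps of positions coincide with swaps of coalitions. Theorem~1 of \citet{BiloSchellingGamesContinuous2023} provides an ordinal potential for \textsc{Max-Swap} on arbitrary graphs, so it applies here despite the graph being disconnected and non-regular. This gives the FIP, and existence follows because the strategy space is finite.

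The main point to verify carefully is the exact correspondence in each case. For the affine transformation, the constant $D$ must be the same for all agents, so that strict improvements for both swapping agents are preserved. For \textsc{Max}, we must check that the potential argument in \citet{BiloSchellingGamesContinuous2023} does not rely on connectivity or regularity; their Theorem~1 is stated for general graphs. The remaining steps are routine.
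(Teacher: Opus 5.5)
Your proposal is correct and takes essentially the paper's route. The paper likewise gets \textsc{Max} from the clique reduction to the Schelling model of \citet{BiloSchellingGamesContinuous2023}, and gets \textsc{Avg} and \textsc{Cutoff$_\lambda$} from \Cref{lemma:swap-mfhg} by subtracting costs from a common maximum; your valuations $D-\dist(v_i,v_j)$ and the $0/1$ friend indicator just spell that out. One small correction to your closing remark: strict improvements would be preserved even with agent-specific constants, and the actual reason $D$ must be common to all agents is that \Cref{lemma:swap-mfhg} requires symmetric valuations $u_i(j)=u_j(i)$.
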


\subsection{Properties of Equilibria and Optima}
In this section, we study whether or when swap equilibria and optima of \textsc{Swap} games are sorted. First, we show the intuitive statement, that sorted coalition structures are swap stable for all three cost functions (see \Cref{appx:swapStability} for the proof).

\begin{restatable}{theorem}{theoremSwapSortedEqExists}
\label{theorem:swap-sorted-eq-exists}
Every sorted coalition is swap stable for \textsc{Max}, \textsc{Avg} and \textsc{Cutoff$_\lambda$} games.
\end{restatable}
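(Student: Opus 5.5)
The plan is to fix a sorted coalition structure $\mathcal{C}$ and two agents $i \in C_a$ and $j \in C_b$ with $a<b$. Sortedness gives $v_y \le v_x$ for all $y \in C_a$ and $x \in C_b$; in particular $v_i \le v_j$. Set $A \coloneqq C_a\setminus\{i\}$ and $B\coloneqq C_b\setminus\{j\}$. After the swap, $i$'s coalition-mates are exactly $B$ and $j$'s are exactly $A$. So an improving swap needs both
\[
\cost(i,B\cup\{i\})<\cost(i,A\cup\{i\}) \quad\text{and}\quad \cost(j,A\cup\{j\})<\cost(j,B\cup\{j\}).
\]
Since we only consider games without singletons, $A$ and $B$ are nonempty. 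I will derive a contradiction separately for each cost function.

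\emph{Average.} I would use a one-dimensional \enquote{uncrossing} inequality. For $y\in A$ and $x\in B$, write $p=v_y$, $q=v_i$, $r=v_x$, $s=v_j$, so that $p,q\le r,s$. Then
\[
\dist(i,x)+\dist(j,y)=(r-q)+(s-p)\;\ge\;|q-p|+|s-r|=\dist(i,y)+\dist(j,x).
\]
This follows by writing $s-p=(s-r)+(r-q)+(q-p)$ and checking the sign cases; the leftover terms are $2(r-q)\ge0$, or $2(s-q)\ge 0$, or $2(r-p)\ge0$. Averaging over all pairs $(y,x)\in A\times B$ turns it into
\[
\cost(i,B\cup\{i\})+\cost(j,A\cup\{j\})\ge \cost(i,A\cup\{i\})+\cost(j,B\cup\{j\}).
\]
This contradicts the sum of the two strict inequalities above.

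\emph{Max.} Here I would argue directly with extreme values. Because every value in $B$ is at least $v_i$, we have $\cost(i,B\cup\{i\})=\max B - v_i$. Because every value in $A$ is at most $v_j$, we have $\cost(j,A\cup\{j\})=v_j-\min A$. Both improvement conditions then read
\[
\max B-v_i<\max(v_i-\min A,\ \max A-v_i)
\quad\text{and}\quad
v_j-\min A<\max(v_j-\min B,\ \max B-v_j).
\]
The next step uses $\max A\le \min B\le \max B$ and $\min A\le v_i\le v_j$. The term $\max A-v_i$ is at most $\max B-v_i$, so in the first condition the maximum must be attained by $v_i-\min A$, giving $\max B-v_i<v_i-\min A$. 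Similarly $v_j-\min B\le v_j-\min A$, so in the second condition the maximum must be attained by $\max B-v_j$, giving $v_j-\min A<\max B-v_j$. Adding these two inequalities yields $\max B+v_j<2v_i+\max B-v_j$, i.e.\ $v_j<v_i$, a contradiction.

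\emph{Cutoff.} I expect this case to be the main obstacle. The pairwise uncrossing inequality fails for indicators $\mathbf{1}[\dist>\lambda]$: with $p<q\le r<s$ we can have $q-p>\lambda$ and $s-r>\lambda$ while $r-q\le\lambda$. So I would argue with the friend sets instead.
\begin{itemize}
\item $i$ improves only if the fraction of enemies among $B$ is smaller than among $A$. The enemies of $i$ in $B$ are exactly the $x$ with $v_x>v_i+\lambda$.
\item $j$ improves only if the fraction of enemies among $A$ is smaller than among $B$. The enemies of $j$ in $A$ are exactly the $y$ with $v_y<v_j-\lambda$.
\end{itemize}
I would case on $\delta=v_j-v_i$. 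If $\delta>\lambda$, then every friend of $j$ in $A$ lies in $[v_j-\lambda,v_i]$ and every friend of $i$ in $B$ lies in $[v_j,v_i+\lambda]$, which is empty. Hence $i$'s cost in $B$ is $1$ and $i$ cannot improve. If $\delta\le\lambda$, I would compare the enemy counts via the inclusions $N^-_j(A)\subseteq N^-_i(A)$ and $N^-_i(B)\subseteq N^-_j(B)$. The first holds because $v_y<v_j-\lambda\le v_i$ implies $\dist(i,y)=v_i-v_y>\lambda$. The second holds because $v_x>v_i+\lambda$ implies $v_x> v_j$ and $\dist(j,x)= v_x-v_j$; note this comparison still needs checking when $\delta>0$. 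If the second inclusion goes through, I would then show that the two strict fraction inequalities cannot hold simultaneously. The first thing I would try is bounding the enemy fractions of $i$ and $j$ through the common thresholds $v_i+\lambda$ and $v_j-\lambda$, and I expect some small case distinction on the positions of $\min A$ and $\max B$ relative to these thresholds to be necessary.
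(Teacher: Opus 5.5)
\paragraph{Max and Avg.} Your \textsc{Max} case is correct and is essentially the paper's argument. Your \textsc{Avg} case is also correct, but it takes a different route. The paper fixes a separating value $z$, splits $\mathcal{C}(x)$ into the agents left and right of $x$, and shows by direct estimates that if $x$ wants to swap then $y$'s cost cannot decrease. You instead average the uncrossing inequality $\dist(i,x)+\dist(j,y)\ge\dist(i,y)+\dist(j,x)$ over $A\times B$. Your version is shorter and isolates the real reason, namely a Monge-type property of the line metric across the cut. It still yields the paper's one-sided conclusion, since it gives $\cost(j,A\cup\{j\})-\cost(j,B\cup\{j\})\ge\cost(i,A\cup\{i\})-\cost(i,B\cup\{i\})$.

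\paragraph{Cutoff: the gap.} The \textsc{Cutoff$_\lambda$} case has a genuine gap.

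\emph{The case $\delta>\lambda$.} Your argument implicitly treats $i$ as the right-most agent of $C_a$ and $j$ as the left-most agent of $C_b$. In general, $A$ may contain values above $v_i$ and $B$ may contain values below $v_j$. Take $\lambda=1$, $C_a=\{-2,0\}$, $C_b=\{\frac12,2\}$, $v_i=0$, $v_j=2$. Then $\delta>\lambda$, yet $i$'s cost drops from $1$ to $0$ by the swap, so ``$i$ cannot improve'' is false.

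\emph{The case $0<\delta\le\lambda$.} Both stated inclusions are false. With $\lambda=1$, $v_i=0$, $v_j=\frac12$, an agent of $A$ at $-0.8$ is an enemy of $j$ but a friend of $i$, and an agent of $B$ at $1.2$ is an enemy of $i$ but a friend of $j$. They also point the wrong way: they would only give $\cost(j,A\cup\{j\})\le\cost(i,A\cup\{i\})$ and $\cost(i,B\cup\{i\})\le\cost(j,B\cup\{j\})$, which are compatible with both agents improving.

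What you actually need is $N^-_i(A)\subseteq N^-_j(A)$ and $N^-_j(B)\subseteq N^-_i(B)$. These do hold when $\delta\le\lambda$. An enemy of $i$ in $A$ cannot lie to its right, because $\max A\le v_j\le v_i+\lambda$. So it lies below $v_i-\lambda\le v_j-\lambda$ and is also an enemy of $j$; the argument for $B$ is symmetric. That case is then finished by the chain below. For $\delta>\lambda$, however, these inclusions fail without further input: with $\lambda=1$, $v_i=0$, $v_j=2$, an agent of $A$ at $1.5$ is an enemy of $i$ and a friend of $j$.

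\paragraph{The missing ingredient.} You need to use the improvement hypothesis itself. Cutoff costs are at most $1$, so an improving $i$ must have a friend $f\in B$, i.e.\ $v_i\le v_f\le v_i+\lambda$. Every agent of $A$ to the right of $i$ then lies in $[v_i,v_f]$ and is a friend of $i$. Hence every enemy of $i$ in $A$ lies below $v_i-\lambda\le v_j-\lambda$ and is an enemy of $j$, which gives $N^-_i(A)\subseteq N^-_j(A)$. Symmetrically, an improving $j$ has a friend in $A$, which forces $N^-_j(B)\subseteq N^-_i(B)$. Then
\[ \cost(j,A\cup\{j\})\ge\cost(i,A\cup\{i\})>\cost(i,B\cup\{i\})\ge\cost(j,B\cup\{j\}) \]
contradicts $j$'s improvement, with no case split on $\delta$. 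This is exactly the paper's argument.
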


Although \Cref{theorem:swap-sorted-eq-exists} shows that all sorted coalition structures are swap stable, this does not necessarily hold in the other direction.
In \Cref{lemma:swap-unsorted-opt}, we give examples indicating that neither swap equilibria nor (swap) optima are always sorted. 

\begin{restatable}{lemma}{lemmaSwapUnsortedOpt}
\label{lemma:swap-unsorted-opt}
    For \textsc{Max}-, \textsc{Avg}- and \textsc{Cutoff$_\lambda$-Swap} games, there is an instance with only unsorted optima and an unsorted equilibrium.
\end{restatable}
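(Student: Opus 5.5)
The plan is to prove the statement with one explicit, small, parametrised family of \textsc{Swap} instances, checked separately for each of the three cost functions. The idea is to make the prescribed coalition sizes $(k_i)_{i\in[k]}$ mismatch the natural clusters of values. The smallest candidate shape is: one agent at a low extreme $L$, one agent at a high extreme $H$, and a block of $m$ identical agents in the middle. The sizes are $(2,m)$ with $k=2$. The unsorted structure $\{L,H\},\{\text{middle block}\}$ makes all $m$ middle agents cost $0$ and charges only the two extremes. Every sorted structure must put one extreme with some middle agents, so the extreme's distance is charged to several agents. For \textsc{Max} with $m\ge 3$ the unsorted structure costs $2(H-L)$. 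A sorted structure $\{L,\text{mid}\},\{\text{mid},\dots,\text{mid},H\}$ instead pays $H-\text{mid}$ for each of $m$ agents plus $\text{mid}-L$ twice. This is strictly worse once $m$ is large. For \textsc{Cutoff}$_\lambda$, choose $\lambda$ smaller than the gaps so that only cross-cluster pairs are enemies. Then the unsorted structure costs $2$, whereas a sorted one costs $2+1+(m-1)\cdot\frac{1}{m-1}$ or similar, hence more.

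\textbf{Step 1 (enumeration).} Since $k=2$ and the sizes are fixed, the sorted structures are exactly the ones splitting the sorted value list at position $2$ or at position $n-2$. There are only two such splits, up to repeated values. I will compute the social cost of both splits and of the unsorted candidate, and show the unsorted one is strictly better. Then every optimum is unsorted, since only the $O(1)$ possible structures need comparing.

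\textbf{Main obstacle (\textsc{Avg}).} With a symmetric middle, \textsc{Avg} tends to tie. For example, for values $0,1,1,1,1,2$ with sizes $(2,4)$, both the unsorted and the sorted structure cost $4$. My fix is to break the symmetry and add weight. I would replace the single middle value by two nearby middle values, or use several agents at each extreme with sizes $(2,m)$. Alternatively I would move to three coalitions, e.g.\ sizes $(2,2,2)$ on values $0,0,a,a,b,b$ perturbed. I would then solve the resulting linear inequality in the parameters to get a strict gap. Should \textsc{Avg} remain stubborn, I would use a separate instance for it, since the lemma only claims an instance per cost function.

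\textbf{Step 2 (unsorted equilibrium).} For \textsc{Avg} and \textsc{Cutoff}$_\lambda$, the social cost is an exact potential by \Cref{lemma:swap-mfhg} and \Cref{cor:swap-potential-games}. Hence any optimum is a swap equilibrium, and the unsorted optimum from Step 1 already gives the unsorted equilibrium. For \textsc{Max}, I will check directly that no improving swap exists in the unsorted structure. The middle agents cost $0$ and never want to move, and any swap must involve one of them, so no swap is improving for both agents.
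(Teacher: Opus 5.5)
\paragraph{Where the proposal stands.} Your \textsc{Max} and \textsc{Cutoff$_\lambda$} cases go through. Take one agent at $L$, one at $H$, $m\ge 3$ agents at $M$, and sizes $(2,m)$. Under \textsc{Max} the two sorted splits cost $2(M-L)+m(H-M)$ and $2(H-M)+m(M-L)$, against $2(H-L)$ for $\{L,H\},\{M,\dots,M\}$. Under \textsc{Cutoff$_\lambda$} they cost $4$ against $2$. The unsorted structure is swap stable because every swap involves a middle agent of cost $0$. Comparing only against the sorted splits is also enough for ``only unsorted optima''; the paper's full enumeration proves uniqueness, which is more than needed.

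\paragraph{The gap: \textsc{Avg}.} You diagnose the tie but never exhibit an instance, and two of your proposed fixes cannot work. Already for a single middle value, the $m-1$ agents sharing a coalition with $H$ each pay only $\frac{H-M}{m-1}$. So the extreme distance is charged once in total, not $m$ times as under \textsc{Max}. With a single agent at each extreme and sizes $(2,m)$, this tie is structural rather than a symmetry artifact. Let the middle values be $u_1\le\dots\le u_m$, let $D$ be the sum of pairwise distances among $u_2,\dots,u_m$, and let $E=\sum_{j\ge 2}(u_j-u_1)$. Then
\begin{align*}
\SC\bigl(\{L,u_1\},\{u_2,\dots,u_m,H\}\bigr)&=2(H-L)+\tfrac{2}{m-1}(D-E)\\
&\le 2(H-L)+\tfrac{2}{m-1}(D+E)=\SC\bigl(\{L,H\},\{u_1,\dots,u_m\}\bigr),
\end{align*}
so spreading the middle block only favours the sorted split. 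Your $(2,2,2)$ idea fails for a different reason. If every coalition has size $2$, the \textsc{Avg} social cost is twice the weight of a perfect matching. On a line the consecutive matching is always minimum, so some optimum is sorted however you perturb.

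\paragraph{The fix that works.} Your second fix is the right one, provided the extreme coalition grows with the extremes; sizes $(2,m)$ cannot hold several agents per extreme. Take $a$ agents at $L$, $a$ at $H$, $m$ at $M$, and sizes $(2a,m)$. Without self-effect, each agent in the extreme coalition pays only $\frac{a}{2a-1}(H-L)$. For the split whose small coalition is the bottom $2a$ agents, a direct computation gives
\begin{align*}
\SC(\text{sorted})-\SC(\text{unsorted})&=(H-M)\Bigl(\frac{2a(m-a)}{m-1}-\frac{2a^2}{2a-1}\Bigr)\\
&=\frac{2a(a-1)(m-2a)}{(m-1)(2a-1)}\,(H-M),
\end{align*}
and the other split gives the same with $M-L$ in place of $H-M$. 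This vanishes identically for $a=1$, which is exactly your tie, and is positive for $a\ge 2$, $m>2a$.

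The paper uses $a=2$, $m=5$, i.e.\ values $(1,1,2,2,2,2,2,3,3)$ with sizes $(4,5)$. There $\{1,1,3,3\},\{2,2,2,2,2\}$ costs $\frac{16}{3}$ against $\frac{17}{3}$ under \textsc{Avg}. The same instance also separates \textsc{Max} ($8$ vs.\ $9$) and \textsc{Cutoff$_\lambda$} ($\frac{8}{3}$ vs.\ $\frac{17}{3}$), so one instance covers all three cost functions. Swap stability follows from the cost-$0$ middle block exactly as in your Step 2.

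A minor point on Step 2: \Cref{lemma:swap-mfhg} only shows that social cost strictly decreases along improving swaps. That makes it an ordinal potential, not an exact one; your conclusion is unaffected.
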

\begin{proof}
    Consider the following game instance with $n \coloneqq9$ agents, $k\coloneqq2$ coalitions of size $k_1\coloneqq4$ and $k_2\coloneqq5$ and values $(v_i)_{i \in [9]}\coloneqq(1,1,2,2,2,2,2,3,3)$. For all cost functions, we show in \Cref{appx:swapStability} that the unsorted coalition structure
    \begin{align*}
        \mathcal{C}^* \coloneqq  \bigl \{\{1,1,3,3\},\{2,2,2,2,2\} \bigr \}
    \end{align*}
    is the social optimum and a swap equilibrium. For the \textsc{Cutoff$_\lambda$} cost function, we choose $\lambda < 1$. However, one can scale the values of this example to get the same results for an arbitrary $\lambda$. 
\end{proof}

\section{Jump Stability}\label{sec:jump}
\Cref{sec:swap} shows that swap equilibria are easy to find due to several potential functions. In this section, we show that equilibria are harder to find for \textsc{Jump} games, since the FIP only holds for \textsc{Max-Jump} games, and sorted coalition structures as well as social optima are not always stable.

\subsection{Equilibrium Existence}
First, note that in the unhappy in isolation (UIS) setting, the grand coalition is a stable coalition structure, as no agent would jump out into a singleton coalition. Thus, equilibrium existence is given for all cost functions in this case.

\begin{observation}\label{obs:Jump-UIS-grand-coalition-stable}
    For all three \textsc{Jump-UIS} game variants, the grand coalition is always stable.
\end{observation}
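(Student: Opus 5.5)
The plan is to check directly that no agent has an improving move from the grand coalition $[n]$. The argument is the same for all three cost functions, because it only uses the UIS convention and the definition of a jump.

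First, note that the grand coalition is a valid coalition structure. Since $k \geq 1$, it uses at most $k$ coalitions, and we may regard the remaining (up to $k-1$) coalitions as empty. Next, I would make precise what a jump is when $\mathcal{C}=\{[n]\}$. An agent $i$ can only move to some other coalition $C' \in \mathcal{C}\setminus\{[n]\}$, and every such coalition is empty. Jumping into it therefore places $i$ in the singleton coalition $\{i\}$. (This is the only possible deviation, and it is exactly the move referred to in the remark preceding the observation.)

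I then compare costs, splitting into two cases.

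\textbf{Case $n \geq 2$.} Agent $i$'s current cost $\cost(i,[n])$ is finite for all three cost functions. Indeed, $\cost_\textrm{AVG}$ and $\cost_\textrm{MAX}$ are bounded by $H-L$, and $\cost_{\textrm{CUT}-\lambda}$ is at most $1$, each being a finite aggregate over the nonempty set $[n]\setminus\{i\}$. In the UIS setting, however, the cost of the singleton $\{i\}$ is $\infty$. Hence $\cost(i,[n]) < \infty = \cost(i,\{i\})$, so the jump is not improving.

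\textbf{Case $n=1$.} The grand coalition is itself a singleton and no other agent exists. The only alternative is another empty coalition, which yields the same singleton with the same cost, so no strict improvement is possible.

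Since no agent has an improving move, the grand coalition is jump stable. There is no real obstacle here. The only points requiring care are that all alternative coalitions are empty, so a jump necessarily produces a singleton, and that costs in coalitions of size at least two are always finite.
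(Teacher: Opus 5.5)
Your proof is correct and matches the paper's reasoning. Under UIS, the only possible deviation from the grand coalition is a jump into an empty coalition. That jump produces a singleton of infinite cost, while the agent's current cost in a coalition of size at least two is finite. Your separate treatment of the degenerate case $n=1$ is a harmless addition that the paper leaves implicit.
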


Moreover, \textsc{Max-Jump-UIS} games and also \textsc{Max-Jump-HIS} games are potential games, which we show in the following \Cref{lemma:max-jump-fip}. \citet{BiloSchellingGamesContinuous2023} show that the potential function for \textsc{Max-Swap} in Schelling games with continuous types fails on regular graphs where the number of empty nodes is higher than the maximal degree in the host graph. However, as \Cref{lemma:max-jump-fip} implies, this potential function works on any set of independent cliques in the \textsc{HIS} setting, and with minor adjustments also for the UIS setting.

\begin{restatable}{lemma}{lemmaMaxJumpFIP}
\label{lemma:max-jump-fip}
    Every \textsc{Max-Jump-UIS}, and \textsc{-HIS} game has an equilibrium, and they fulfill the FIP.
\end{restatable}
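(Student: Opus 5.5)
The plan is to use the ordinal potential $\Phi(\mathcal{C})$, defined as the vector of all agents' costs sorted in non-increasing order and compared lexicographically. This is the \textsc{Max-Swap} potential of \citet{BiloSchellingGamesContinuous2023}. Since there are only finitely many coalition structures, a strict lexicographic decrease of $\Phi$ along every improving jump gives the FIP, and hence existence of an equilibrium.

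\textbf{Key step for the HIS case.} Suppose agent $i$ jumps from $C$ to $C'$ and its cost drops from $c$ to $c'<c$.
\begin{itemize}
\item Every agent remaining in $C$ loses a potential farthest partner, so its \textsc{Max} cost weakly decreases. If it becomes a singleton, its cost drops to $0$.
\item An agent $j\in C'$ whose cost increases now has cost $d(i,j)\le \cost(i,C'\cup\{i\})=c'<c$. In particular, a former singleton $j$ goes from $0$ to $d(i,j)<c$.
\item If $C'$ is an empty coalition (when fewer than $k$ coalitions are in use), nobody else is affected and $c'=0$.
\end{itemize}
Hence no entry of $\Phi$ that is at least $c$ increases, and no entry is raised to a value $\ge c$. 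Moreover, the number of entries equal to $c$ drops by at least one, namely $i$'s own entry. Therefore $\Phi$ strictly decreases lexicographically.

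\textbf{The UIS case.} This is where the ``minor adjustments'' are needed, and it is the step I expect to be the main obstacle. The argument above still covers every jump in which $i$ leaves a coalition of size at least three. Two moves are problematic:
\begin{itemize}
\item \emph{Leaving a singleton.} Agent $i$ moves from cost $\infty$ to some finite cost. This is fine if singletons are counted as $\infty$ in $\Phi$, since the number of $\infty$-entries then strictly decreases.
\item \emph{Leaving a pair $\{i,j\}$.} Agent $j$ is pushed up to $\infty$. This is fine if singletons are counted as $0$, as in the HIS case.
\end{itemize}
So neither convention alone works. My first attempt would be to keep singletons at value $0$ inside the sorted vector, which handles the pair move exactly as in the HIS case. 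I would then prepend a coordinate that handles singleton exits, trying the number of non-singleton agents, or a quantity such as the number of agents whose farthest coalition partner is at distance at least $c$.

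The crux is to check that moves leaving a pair never increase this first coordinate in a way that cannot be compensated. If no such coordinate works, the fallback is a direct argument: an improving response cycle would have to contain a singleton exit. One would take the largest cost value $c^\ast$ occurring as the pre-move cost of a non-singleton mover in the cycle, and derive a contradiction from the fact that entries at or above $c^\ast$ can only be destroyed.
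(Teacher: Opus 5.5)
Your HIS argument is correct and matches the paper's: the non-increasingly sorted cost vector drops lexicographically. The mover's entry $c$ falls below $c$. Remaining members of $C$ weakly improve, and a deserted partner gets $0$. Any member of $C'$ whose cost rises ends at $d(i,j)\le c'<c$.

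The gap is the UIS case, which you leave open, and none of your proposed routes closes it.

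\begin{itemize}
\item \textbf{Number of non-singleton agents.} This fails in both orientations. A singleton exit increases it. A jump out of a pair $\{i,j\}$ into a coalition of size at least two decreases it. Such jumps can be improving: take values $0,10$ in a pair and $C'=\{1,2\}$.
\item \textbf{Agents whose farthest partner is at distance at least $c$.} This is not a function of the coalition structure, since $c$ depends on the move.
\item \textbf{The fallback.} It does not go through as sketched. The premise that ``entries at or above $c^\ast$ can only be destroyed'' fails at singleton exits. There the newcomer's new cost, or the raised costs of members of $C'$, can be at least $c^\ast$.
\end{itemize}

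The missing idea is to use the number of non-empty coalitions $|\mathcal{C}|_{\neq\emptyset}$ as the leading coordinate. This is what the paper does, with $\Phi(\mathcal{C}) \coloneqq (|\mathcal{C}|_{\neq\emptyset},\Phi_{\cost}(\mathcal{C}))$ and singletons recorded as $0$ in $\Phi_{\cost}$. It works for three reasons:

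\begin{itemize}
\item Under UIS, no improving jump enters an empty coalition, since that costs $\infty$. So the count never increases.
\item A singleton exit strictly decreases the count.
\item A jump out of a coalition of size at least two, pairs included, leaves the count unchanged. In that case the mover's old and new costs are both finite and coincide with the HIS costs, so your HIS argument yields a strict decrease of the second coordinate.
\end{itemize}

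Rephrased for your fallback: no improving response cycle can contain a singleton exit, because $|\mathcal{C}|_{\neq\emptyset}$ never increases, drops at every singleton exit, and would have to return to its starting value.
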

\begin{proof}[Proof sketch]
    We show that both kind of games are potential games. Both potential functions are based on the idea that the non-increasingly sorted cost vector of all agents decreases lexicographically with every improving jump. Formally, let $\Phi_{\cost}(\mathcal{C})$ be the vector of the costs of all agents in the HIS setting that is sorted non-increasingly. Then, we show that $\Phi_{\cost}(\mathcal{C})$ is a potential function for \textsc{Max-Jump-HIS} games and that $\Phi(\mathcal{C}) \coloneqq (|\mathcal{C}|_{\neq \emptyset},\Phi_{\cost}(\mathcal{C}))$ is a potential function for \textsc{Max-Jump-HIS} games, where $|\mathcal{C}|_{\neq \emptyset}$ is the number of non-empty coalitions in the given coalition structure~$\mathcal{C}$. Details can be found in \Cref{appx:JumpStability}. 
\end{proof}

Next, we provide an IRC for \textsc{Avg-} and \textsc{Cutoff$_\lambda$-Jump} games in \Cref{theorem:avg-cutoff-jump-not-potential} showing that both are not potential games in comparison to the \textsc{Max-Jump} games. Note that, \citet{MilchtaichStabilitySegregationGroup2002} provided an IRC for \textsc{Avg-Jump} games. However, we want to highlight that the IRC given in \Cref{theorem:avg-cutoff-jump-not-potential} works for both \textsc{Avg-} and \textsc{Cutoff$_\lambda$-Jump} games.

~

\begin{theorem}\label{theorem:avg-cutoff-jump-not-potential}
    \textsc{Avg-} and \textsc{Cutoff$_\lambda$-Jump} games admit an IRC.
\end{theorem}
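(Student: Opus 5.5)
The plan is to exhibit one concrete instance together with an explicit cyclic sequence of improving jumps, and to choose the instance so that the same sequence is improving under both cost functions. The key observation is the following. If all agents have only two distinct values, say $0$ and $1$, then every distance is $0$ or $1$. Under \textsc{Average}, an agent's cost in a coalition $C$ is then exactly the fraction of agents in $C\setminus\{i\}$ with the other value. Under \textsc{Cutoff}$_\lambda$ with $\lambda<1$, an agent's friends are exactly the agents with the same value, so its cost is that same fraction. Hence $\cost_\textrm{AVG}(i,C)=\cost_{\textrm{CUT}-\lambda}(i,C)$ for every agent and coalition, and any IRC for this two-type instance is an IRC for both games. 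For an arbitrary $\lambda>0$ we use the values $0$ and $\mu$ with $\mu>\lambda$ instead. This rescales every \textsc{Average} cost by the factor $\mu$ and leaves every \textsc{Cutoff}$_\lambda$ cost unchanged, so all preference comparisons, and thus the IRC, are preserved.

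To make the cycle independent of the HIS/UIS distinction, every coalition in every state of the cycle should contain at least two agents. Then no agent ever compares against a singleton, and no singleton coalition is ever formed. The problem therefore reduces to a fractional, Schelling-like hedonic game with two types (red $=0$, blue $=1$) and a fixed number $k$ of coalitions, where each agent wants to minimize its fraction of other-type coalition members.

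For the concrete cycle I would take $k=3$ and a small population, say around $6$ to $8$ agents. The aim is a rotationally symmetric pattern: a short sequence of jumps maps the structure $\{C_1,C_2,C_3\}$ to the same multiset of coalition compositions with the roles of the coalitions permuted cyclically. Repeating this block three times, with agents of the same type treated as interchangeable, or adding the few extra jumps needed to restore the original agents, returns exactly to the starting structure. The natural mechanism is that an agent leaves a coalition that has become mixed for a coalition where its type is relatively more frequent. By doing so it makes its old coalition more homogeneous for the other type and its new coalition more mixed for some third agent, who then moves on in turn. For each jump in the cycle I would tabulate the mover's fraction of other-type members before and after the move and verify strict improvement; these are routine comparisons of small fractions.

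The main obstacle is finding compositions for which every step is strictly improving while all coalitions keep size at least $2$. Because costs are fractions without self-effect, small changes in coalition sizes can reverse a preference, so a naive rotation may fail at one step. My first approach would be a brute-force search by hand or computer over structures with $n\le 8$ two-typed agents and $k=3$, looking for a cycle in the directed graph of improving jumps. If no cycle exists there, I would allow a third value and check directly that the \textsc{Average} and \textsc{Cutoff}$_\lambda$ preferences still agree along the cycle found.
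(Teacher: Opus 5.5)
Your proposal has a genuine gap. The theorem is an existence statement whose whole content is an explicit cycle, and you never produce one. Worse, the reduction you rely on is a dead end: two-valued instances admit no IRC at all, for any $k$ and $n$, under HIS or UIS.

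Call a coalition \emph{mixed} if it contains both values. Three facts control the dynamics.
\begin{enumerate}
\item An agent in a pure coalition of size at least $2$ has cost $0$ and never moves.
\item Entering a nonempty coalition consisting only of the other value gives the maximum possible cost. So only a UIS singleton would ever do it.
\item Under UIS nobody enters an empty coalition, and a moving singleton irreversibly reduces the number of nonempty coalitions. Under HIS singletons never move, and a coalition created by jumping into an empty one stays pure forever.
\end{enumerate}
Hence along any cycle no singleton moves and pure and empty coalitions never change. Every jump therefore goes between two coalitions that are mixed both before and after it.

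For such jumps, write $r_C,b_C$ for the numbers of the two types and $n_C=|C|$, and set
\[
\Phi(\mathcal{C})=\sum_{C\in\mathcal{C}\text{ mixed}}\log\frac{(n_C-1)!}{(r_C-1)!\,(b_C-1)!}.
\]
A jump of a $0$-valued agent from $C_j$ to $C_l$ changes $\Phi$ by $\log\frac{r_j-1}{n_j-1}-\log\frac{r_l}{n_l}$. This is negative exactly when $\frac{b_j}{n_j-1}>\frac{b_l}{n_l}$, i.e., exactly when the jump is improving; the other type is symmetric. So two-valued games have the FIP. Your brute-force search over red/blue instances, with $k=3$ or any other $k$, cannot succeed.

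Your fallback of adding a third value is therefore not a fallback but the entire proof. It also gives up the coincidence of \textsc{Average} and \textsc{Cutoff}$_\lambda$ that motivated the reduction, so each jump must then be checked under both cost functions separately.

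That is what the paper does. It takes $k=2$, twelve agents with values $1,5,5,5,6,7,8,9,10,11,14,14$, and $\lambda=4$. It lists an eight-jump cycle in which coalition sizes stay between $4$ and $8$, so HIS and UIS agree. For each jump it records the jumping agent's cost before and after under both \textsc{Cutoff}$_4$ and \textsc{Average}. Without such an explicit, verified instance the statement remains unproved.
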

\begin{proof}
    Consider the improving response cycle given in \Cref{table:jump-ircs}, which works for both cost functions. Please note that this IRC also holds for both the \textsc{HIS} and the \textsc{UIS} setting, as it does not include singleton coalitions.
\end{proof}

\begin{table}[!h]
    \begin{center}
        \begin{tabular}{ c | c | l}
        Coalition Structure & \textsc{Cutoff$_4$} & \textsc{Average} \\ \hline
        $\{14,11,5,6,7,9\},\,\{1,\cev{\mathbf{5}},5,8,10,14\}$ & $\frac{2}{5} \leadsto \frac{2}{6}$ & $\frac{21}{5} \leadsto \frac{22}{6}$ \\
        $\{14,11,5,5,6,7,9\},\,\{\cev{\mathbf{1}},5,8,10,14\}$ & $\frac{3}{4} \leadsto \frac{5}{7}$ & $\frac{33}{4} \leadsto \frac{50}{7}$ \\
        $\{14,11,1,5,5,6,7,\vec{\mathbf{9}}\},\,\{5,8,10,14\}$ & $\frac{2}{7} \leadsto \frac{1}{4}$ & $\frac{28}{7} \leadsto \frac{13}{5}$ \\
        $\{14,11,1,5,5,6,\vec{\mathbf{7}}\},\,\{5,8,9,10,14\}$ & $\frac{2}{6} \leadsto \frac{1}{5}$ & $\frac{22}{6} \leadsto \frac{15}{5}$ \\
        $\{14,11,1,5,\vec{\mathbf{5}},6\},\,\{5,7,8,9,10,14\}$ & $\frac{2}{5} \leadsto \frac{2}{6}$ & $\frac{20}{5} \leadsto \frac{23}{6}$ \\
        $\{14,11,\vec{\mathbf{1}},5,6\},\,\{5,5,7,8,9,10,14\}$ & $\frac{3}{4} \leadsto \frac{5}{7}$ & $\frac{33}{4} \leadsto \frac{40}{7}$ \\
        $\{14,11,5,6\},\,\{1,5,5,7,8,\cev{\mathbf{9}},10,14\}$ & $\frac{2}{7} \leadsto \frac{1}{4}$ & $\frac{25}{7} \leadsto \frac{14}{4}$ \\
        $\{14,11,5,6,9\},\,\{1,5,5,\cev{\mathbf{7}},8,10,14\}$ & $\frac{2}{6} \leadsto \frac{1}{5}$ & $\frac{21}{6} \leadsto \frac{16}{5}$ \\
        $\{14,11,5,6,7,9\},\,\{1,\cev{\mathbf{5}},5,8,10,14\}$ & \dots & \\
        \end{tabular}
    \end{center}
    \caption{IRC for the given game instance under the \textsc{Cutoff$_4$-Jump} setting, and \textsc{Avg-Jump} setting. The first column shows the coalition structures of the IRC with the jumping agent highlighted in bold. The other two columns indicate the cost change of the jumping agent.\label{table:jump-ircs}}
\end{table}

Although we know by \Cref{theorem:avg-cutoff-jump-not-potential} that \textsc{Avg-Jump} and \textsc{Cutoff$_\lambda$-Jump} are not potential games, there is at least one jump equilibrium for each such game instance. Indeed, \citet{MilchtaichStabilitySegregationGroup2002} gave an algorithm that calculates a jump stable coalition structure for all \textsc{Avg-Jump-HIS} game instances that is also sorted \cite[Theorem 2.4]{MilchtaichStabilitySegregationGroup2002}. In the following, we show that this algorithm also works for a large class of games, including \textsc{Avg-}, \textsc{Cutoff$_\lambda$-} and \textsc{Max-Jump-HIS} games.

In order to explain and analyze this algorithm, we introduce some additional notation and terminology regarding a sorted coalition structure $\mathcal{C}\coloneqq\{C_i\}_{i \in [k]}$. For every coalition $C_i \in \mathcal{C}$, we call the agent with the smallest and highest value in $C_i$ the \emph{left-most} agent $L(C_i)$ and the \emph{right-most} agent $R(C_i)$ of $C_i$, respectively. Having the same idea in mind, we also refer to coalition $C_{i-1}$ and $C_{i+1}$ as the coalitions \emph{left} and \emph{right} of coalition $C_i$, respectively\footnote{Note that there is no coalition left of coalition $C_1$ nor right of coalition $C_k$.}. If some left-most agent $x \coloneqq L(C_i)$ wants to move to the next coalition to its left, i.e., $C_{i-1}$, to reduce its cost, we call this move a \emph{left-improving move} of agent $x$. Similarly, a \emph{right-improving move} is a move of some right-most agent $x \coloneqq R(C_i)$ to coalition $C_{i+1}$ such that agent~$x$ has higher cost in $C_i$ than in $C_{i+1}$. 

The idea of Milchtaich and Winter's algorithm is to start with a sorted \enquote{right-heavy} coalition structure, having the $k-1$ smallest values of a game instance in singleton coalitions and the rest in the remaining coalition. Then the algorithm allows the agents to perform left-improving moves and stops if no agent has a left-improving move. \citet{MilchtaichStabilitySegregationGroup2002} show that, with the \textsc{Average} cost function, that the algorithm terminates after at most $nk$ many left-improving moves in jump stable coalition structure. In \Cref{def:monotone-cost-f}, we bundle sufficient properties of the \textsc{Average} cost function into a broad class of cost function inducing the same behavior to the algorithm.

\begin{definition}\label{def:monotone-cost-f}
    A cost function $\cost_m$ is called \emph{monotone}, if the following conditions hold for every game instance $(n,(v_i)_{i\in [n]},k)$:
    \begin{enumerate}
        \item[(i)] For a coalition $C$ and agents $x,y \notin C$:\\
            if $v_x \leq v_y \leq v_{L(C)}$ or $v_{R(C)} \leq v_y \leq v_x$ \\
            then $\cost_m(y,C) \leq \cost_m(x,C)$.
        \item[(ii)] For all coalitions $C,D$ and agents $x \notin C \cup D$: \\
        if $v_x{\leq}v_{L(C)}{\leq}v_{R(C)}{\leq}v_{L(D)}$ or $v_{R(D)}{\leq}v_{L(C)}{\leq}v_{R(C)}{\leq}v_x$\\
        then $\cost_m(x,C) \leq \cost_m(x,C\cup D)\leq \cost_m(x,D)$.
        \item[(iii)] For all coalitions $C,D$ over $\mathbb{R}$ and $c \in C$:\\
            if $\cost_m(c,C) > \cost_m(c,D)$, then
            \begin{itemize}
                \item $\cost_m \bigl(R(C),C \bigr) > \cost_m \bigl(R(C),D \bigr)$ $\left(\text{for }v_{R(C)} \leq v_{L(D)}\right)$,
                \item $\cost_m \bigl(L(C),C \bigr) > \cost_m \bigl(L(C),D \bigr)$ $\left(\text{for } v_{R(D)} \leq v_{L(C)}\right).$
            \end{itemize}
    \end{enumerate}
\end{definition}

In order to prove that the algorithm by \citet{MilchtaichStabilitySegregationGroup2002} calculates a jump stable coalition structure for all monotone cost functions (see \Cref{theorem:his-sorted-pne}), we first show that we only need to rule out left- and right-improving moves to prove stability for a given sorted coalition structure. This property is captured in the following \Cref{lemma:left-right-improving-jump} and can be proven as an implication of property (iii) of \Cref{def:monotone-cost-f} (see \Cref{appx:JumpStability} for details).

\begin{restatable}{lemma}{lemmaImprovingMoveStable}
    \label{lemma:left-right-improving-jump}
    Let $\mathcal{I} \coloneqq (n,(v_i)_{i\in [n]},k)$ be a game instance with monotone cost function $\cost_m$. Further, let $\mathcal{C}\coloneqq \{C_i\}_{i \in [k]}$ be a sorted coalition structure of $\mathcal{I}$. Then, either $\mathcal{C}$ is a jump equilibrium, or there is a left-improving move or a right-improving move.
\end{restatable}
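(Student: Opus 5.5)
We argue by contraposition: assuming $\mathcal{C}$ is not a jump equilibrium, we exhibit a left- or right-improving move. So suppose some agent $c$ in coalition $C \coloneqq C_i$ has an improving jump to another coalition $D \coloneqq C_j$, i.e., $\cost_m(c,C) > \cost_m(c,D)$. Since $\mathcal{C}$ is sorted, either $j > i$ (so $v_{R(C)} \leq v_{L(D)}$) or $j < i$ (so $v_{R(D)} \leq v_{L(C)}$). The two cases are symmetric, so we only treat $j>i$ and aim to show that $R(C)$ has a right-improving move to $C_{i+1}$.

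\emph{Step 1 (reduce the jumping agent to the boundary).} Apply property (iii) of \Cref{def:monotone-cost-f} to $C$, $D$ and $c$. Because $v_{R(C)} \leq v_{L(D)}$, we obtain $\cost_m(R(C),C) > \cost_m(R(C),D)$. Hence the right-most agent of $C$ also strictly prefers $D$.

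\emph{Step 2 (reduce the target coalition to the neighbour).} If $j=i+1$, we are done. Otherwise, set $x \coloneqq R(C)$, $C' \coloneqq C_{i+1}$ and $D' \coloneqq C_{i+2}\cup\dots\cup C_j$ if needed. Sortedness gives $v_x \leq v_{L(C_{i+1})} \leq v_{R(C_{i+1})} \leq v_{L(D)}$, and $x \notin C_{i+1}\cup D$. Property (ii) with $C=C_{i+1}$ and $D$ then yields $\cost_m(x,C_{i+1}) \leq \cost_m(x,C_{i+1}\cup D) \leq \cost_m(x,D)$. Here only the outer inequality is needed; we use the union merely as an intermediate quantity, which the definition permits since (ii) quantifies over all coalitions. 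Therefore
\[
\cost_m(x,C) > \cost_m(x,D) \geq \cost_m(x,C_{i+1}),
\]
so $x$ has a right-improving move.

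The case $j<i$ is handled identically, using the second bullet of (iii) to pass to $L(C)$ and the second alternative of (ii) to pass from $D$ to $C_{i-1}$, which produces a left-improving move.

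\emph{Main obstacle.} The expected difficulty lies in degenerate situations. Empty coalitions between $C$ and $D$ must be skipped: "the coalition to the right" should mean the next nonempty one, or else $C_{i+1}$ is empty and the cost is undefined. Coalitions of $\mathcal{C}$ may be singletons, in which case (iii) has to be applied with $c$ itself as $R(C)$ (trivially fine). Finally, ties in values cause $L$/$R$ to be ambiguous, and we pick the extreme agent consistent with sortedness. We would also check that (ii) applies even when $D$ or $C_{i+1}$ is a single agent, and handle the HIS/UIS singleton-cost conventions separately if $C$ is a singleton, since then a jump out of $C$ can never strictly improve under HIS.
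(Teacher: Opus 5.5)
Your proof is correct and follows the same route as the paper's: property (iii) of \Cref{def:monotone-cost-f} passes the improving jump to the extreme agent $R(C_i)$ (resp.\ $L(C_i)$). Your Step~2, which uses property (ii) to replace the target $C_j$ by the adjacent coalition $C_{i\pm1}$, spells out a step that the paper's one-line proof leaves implicit. That step is essentially part (ii) of \Cref{obs:cutoff-sorted}.
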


Now, we are ready to prove that the properties of monotone cost functions are sufficient for the algorithm of \citet{MilchtaichStabilitySegregationGroup2002} to work properly.

\begin{restatable}{theorem}{theoremHisSortedPne}\label{theorem:his-sorted-pne}
    Every game instance $\mathcal{I} \coloneqq (n,(v_1,\dots,v_n),k)$ with a monotone cost function $\cost_m$ in the \textsc{HIS}-setting has a sorted jump equilibrium, which can be found within $kn$ many improving moves.
\end{restatable}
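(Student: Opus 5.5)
We will analyze the algorithm of \citet{MilchtaichStabilitySegregationGroup2002}. Start from the sorted \enquote{right-heavy} structure in which $C_1,\dots,C_{k-1}$ are the singletons $\{1\},\dots,\{k-1\}$ and $C_k=\{k,\dots,n\}$. Then repeatedly perform left-improving moves, i.e.\ move $L(C_i)$ into $C_{i-1}$ whenever this strictly lowers its cost. Such a move keeps the structure sorted, because the moved agent is the left-most of $C_i$ and joins the coalition immediately to its left. If no left-improving move remains, \Cref{lemma:left-right-improving-jump} implies that the current sorted structure is a jump equilibrium, provided we also rule out right-improving moves. The proof therefore has two parts: (a) termination within $kn$ moves, and (b) the absence of right-improving moves throughout the run, maintained as an invariant.

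For (b), we will prove by induction over the run that no right-most agent $R(C_i)$ strictly prefers $C_{i+1}$ to $C_i$. Initially, each $C_i$ with $i<k-1$ is a singleton. In the \textsc{HIS} setting its only agent has cost $0$ and cannot improve. For $C_{k-1}=\{k-1\}$ the same holds. Now consider a left-improving move of $x=L(C_i)$ into $C_{i-1}$. Only the right-most agents of $C_{i-1}$, $C_i$, and $C_{i-2}$ are affected, since $R(C_{i-2})$ sees a changed $C_{i-1}$.

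\begin{itemize}
\item Agent $x$ becomes $R(C_{i-1})$ and weakly prefers $C_{i-1}\cup\{x\}$ to $C_i\setminus\{x\}$. To compare, we use property (ii): removing $x$ from $C_i$ yields a coalition lying further right, which is only worse for $x$. The improvement was strict with respect to $C_i$, so with respect to $C_i\setminus\{x\}$ it is strict as well.
\item The old $R(C_{i-1})$, call it $y$, is no longer right-most. However, the invariant only concerns right-most agents, so $y$ needs no check.
\item For $R(C_i)$, its own coalition lost a left agent and the coalition to its right is unchanged. Property (ii), applied to $C$ as the remainder and the removed part, gives that its cost in $C_i$ does not increase.
\item For $R(C_{i-2})$, the coalition $C_{i-1}$ gained an agent on its right side. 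Property (ii) says that adding agents further away weakly increases its cost there.
\end{itemize}

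Edge cases arise when $C_i$ becomes empty or a singleton. These are handled by the \textsc{HIS} convention that singleton cost is $0$. An emptied coalition can be treated as infinitely costly to join, or we can argue that it never occurs because a singleton in \textsc{HIS} never moves.

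For (a), we will show that each agent moves left at most $k-1$ times, and then sum over agents to get at most $kn$ moves. An agent only ever moves leftward across coalition indices, and its index is at most $k$, so it can move at most $k-1$ times. This immediately bounds the total number of moves by $(k-1)n\le kn$, because no right moves are ever performed. Termination then yields a structure with no left-improving moves, and by (b) no right-improving moves either. \Cref{lemma:left-right-improving-jump} then concludes that it is a jump equilibrium.

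The main obstacle is step (b). We must carefully verify, using properties (i) and (ii), that a right-most agent never acquires a strict incentive to move right after a left move elsewhere. The subtle case is the newly arrived right-most agent $x$ of $C_{i-1}$. Here we need property (i) together with the strictness of $x$'s improvement, and we must check that this strictness persists after $x$ is removed from $C_i$. If the direct comparison fails, we will instead compare via the coalition $C_i\setminus\{x\}$ together with property (ii) in the order $\cost_m(x,C_{i-1})\le\ldots$.
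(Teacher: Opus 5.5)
Your proposal is correct and follows the paper's proof essentially step for step: the same right-heavy start and the invariant that no right-improving move exists, verified only for $R(C'_{i-2})$ and $R(C'_i)$ via property (ii) and for the moved agent, then \Cref{lemma:left-right-improving-jump} to conclude; your explicit $(k-1)n$ move count and your treatment of singletons are a welcome addition. The case you flag as the main obstacle is actually immediate: costs are without self-effect, so $\cost_m(x,C_i\setminus\{x\})=\cost_m(x,C_i)>\cost_m(x,C_{i-1})$, i.e., a right move of $x$ would just undo its strictly improving left move, and neither property (i) nor property (ii) is needed there.
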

\begin{proof}[Proof sketch]
    The algorithm starts with the \enquote{right-heavy} coalition structure 
        $$\mathcal{C}_{\text{start}} \coloneqq \big\{\{v_1\},\dots,\{v_{k-1}\},\{v_k,\dots,v_n\}\big\}.$$
    Clearly, $\mathcal{C}_{\text{start}}$ is sorted and  stays sorted under left-improving moves. It is therefore valid to also number coalitions during the algorithm from $C_1$ to $C_k$ according to the ordering of their values. 
    
    To prove that this algorithm works under the monotone cost function $\cost_m$, we show by induction that there is no coalition structure during the algorithm admitting a right-improving move. By \Cref{lemma:left-right-improving-jump}, this proves that the final coalition structure of this algorithm, for which by definition no left-improving move is possible, is indeed a jump equilibrium.

    In the base case, $\mathcal{C}_{\text{start}}$ is sorted due to the assumed \textsc{HIS} setting. Now consider a sorted coalition structure $\mathcal{C}\coloneqq \{C_i\}_{i \in [k]}$ which does not admit any right-improving move, but at least one left-improving move. Let $L(C_i)$ be the agent with a left-improving move in $\mathcal{C}$, and $\mathcal{C}^\prime \coloneqq \{C^\prime_i\}_{i \in [k]}$ the coalition structure $\mathcal{C}$ after the left-improving move of agent $L(C_i)$. See \Cref{fig:cutoff-pne-proof} for an illustration. 

    \begin{figure}[h]
        \centering
        \includegraphics[width=\linewidth]{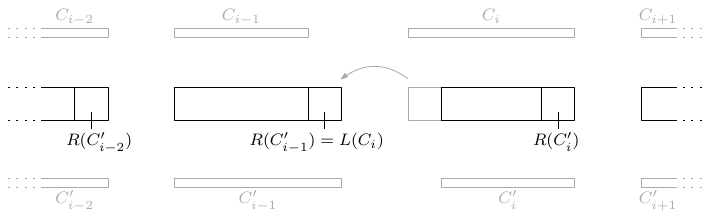}
        \caption{Sketch of a left-improving move from a coalition structure $\{C_i\}_{i \in [k]}$ to $\{C^\prime_i\}_{i \in [k]}$.}
        \label{fig:cutoff-pne-proof}
        \Description{
        Agents are grouped as their coalitions from left to right. The coalitions $C_{i-2}$,$C_{i-1}$,$C_{i}$ are depicted, as well as their right most agents $R(C_j)$. The left most agent $L(C_i)$ is shown as makes a left-improving move to $C_{i-1}$, thereby becoming $R(C'{i-1})$. 
        }
    \end{figure}

    The only coalitions that changed are $C_i$ and $C_{i-1}$. Therefore, we only need to show that none of the agents $R(C^\prime_{i-2})$, $R(C^\prime_{i-1})$ and $R(C^\prime_i)$ have a right-improving move. For complete proof, see~\Cref{appx:JumpStability}.
\end{proof}

Next, we show that all three cost functions are indeed monotone. This implies that all \textsc{Jump-HIS} game instances have at least one sorted equilibrium. For the proof of~\Cref{lemma:monotone-cost-f}, refer to \Cref{appx:JumpStability}.

\begin{restatable}{lemma}{lemmaMonotoneCostF}\label{lemma:monotone-cost-f}
    The \textsc{Max-}, \textsc{Average-} and \textsc{Cutoff$_\lambda$} cost function are all monotone.
\end{restatable}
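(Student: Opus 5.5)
The plan is to check the three conditions of \Cref{def:monotone-cost-f} directly for each of the three cost functions. Throughout, $\cost_m(x,C)$ for $x\notin C$ is read as the cost $x$ would have after joining $C$; it is an aggregate over the distances from $v_x$ to the members of $C$. The empty-coalition case is handled by the singleton convention and is treated separately. All statements are symmetric under the reflection $v\mapsto -v$, so I only prove the first alternative in each of (i)--(iii) and obtain the mirrored one for free.

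\emph{Condition (i).} If $v_x\le v_y\le v_{L(C)}$, then $\dist(y,j)\le \dist(x,j)$ for every $j\in C$, and the normalising denominator $|C|$ is the same for $x$ and $y$. The maximum and the average are monotone in pointwise-smaller distances. For \textsc{Cutoff$_\lambda$}, every enemy of $y$ in $C$ is also an enemy of $x$, so the enemy count can only drop.

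\emph{Condition (ii).} If $v_x\le v_{L(C)}\le v_{R(C)}\le v_{L(D)}$, every distance from $x$ into $C$ is at most every distance from $x$ into $D$.
\begin{itemize}
\item \textsc{Max}: $\cost(x,C)\le\cost(x,D)=\cost(x,C\cup D)$.
\item \textsc{Avg}: the average over $C$ is at most the average over $D$. The average over $C\cup D$ is a convex combination of the two, so it lies between them.
\item \textsc{Cutoff$_\lambda$}: the enemy fraction over $C\cup D$ is again a weighted mean of the two fractions. It therefore suffices to show that the fraction over $C$ is at most the fraction over $D$. If $x$ has any enemy in $C$, then every agent of $D$, being at least as far away, is an enemy, so the fraction over $D$ equals $1$. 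Otherwise the fraction over $C$ equals $0$.
\end{itemize}

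\emph{Condition (iii).} Let $c\in C$, $R:=R(C)$, $v_R\le v_{L(D)}$, $\delta:=v_R-v_c\ge 0$, and assume $\cost(c,C)>\cost(c,D)$. Since all of $D$ lies to the right of $R$, we have $\cost(R,D)=\cost(c,D)-\delta$ for \textsc{Max} and \textsc{Avg}, and $N^-_R(D)\subseteq N^-_c(D)$ for \textsc{Cutoff$_\lambda$}. The goal is therefore to show that $R$'s cost inside $C$ drops by at most $\delta$, or for cutoff does not drop at all.
\begin{itemize}
\item \textsc{Max}: $\cost(R,C)=v_R-v_{L(C)}\ge\cost(c,C)>\cost(c,D)\ge\cost(c,D)-\delta$.
\item \textsc{Avg}: compare the sums over $C\setminus\{R\}$ and $C\setminus\{c\}$, which have the same size $|C|-1$. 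The terms with $j\ne c,R$ shrink by at most $\delta$ each by the triangle inequality, and the exchanged terms are both equal to $\delta$. Hence $\cost(R,C)\ge\cost(c,C)-\delta>\cost(R,D)$.
\end{itemize}

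The step I expect to be the real obstacle is (iii) for \textsc{Cutoff$_\lambda$}. Here $R$ can genuinely have fewer enemies in $C$ than $c$, so the "cost drops by at most $\delta$" argument does not work for this cost function. My plan is the following case split.
\begin{itemize}
\item If $c$ has an enemy $j\in C$ with $v_j>v_c+\lambda$, then every member of $D$ lies even further right and is also an enemy. So $\cost(c,D)=1\ge\cost(c,C)$, contradicting the hypothesis.
\item Otherwise all enemies of $c$ in $C$ satisfy $v_j<v_c-\lambda\le v_R-\lambda$, so they are enemies of $R$ as well. Moreover the swapped element, namely $R$ in $c$'s list versus $c$ in $R$'s list, is a friend of $c$ and at worst an enemy of $R$.
\end{itemize}
Hence $|N^-_R(C)|\ge|N^-_c(C)|$ with equal denominators. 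Combined with $\cost(R,D)\le\cost(c,D)$, this gives the strict inequality $\cost(R,C)>\cost(R,D)$. The edge cases where $D=\emptyset$ or $|C|=1$ (the HIS singleton cost $0$) need to be checked separately but should be immediate.
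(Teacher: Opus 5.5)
Your proposal is correct, and its skeleton matches the paper's: you verify (i)--(iii) of \Cref{def:monotone-cost-f} for each cost function in one orientation and get the mirrored case by symmetry. Part (i), part (ii) for \textsc{Max} and \textsc{Average}, and part (iii) for \textsc{Cutoff$_\lambda$} are essentially the paper's arguments. In particular, your case split for \textsc{Cutoff$_\lambda$}~(iii) is exactly the paper's: $R(C)$ must be a friend of $c$, since otherwise $\cost(c,D)=1$, and then $N^-_c(C)\subseteq N^-_{R(C)}(C)$.

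The genuine difference is part (iii) for \textsc{Average}. The paper does not prove it but delegates it to \citet[Lemma~2.5]{MilchtaichStabilitySegregationGroup2002}. Your exchange argument makes the lemma self-contained:
\begin{itemize}
\item $\cost(R(C),D)=\cost(c,D)-\delta$ holds exactly, because $D$ lies entirely to the right;
\item $\cost(R(C),C)\ge\cost(c,C)-\frac{|C|-2}{|C|-1}\delta$ follows from the triangle inequality on the $|C|-2$ common terms, since the two exchanged terms are both $\delta$.
\end{itemize}
This also shows that \textsc{Max} and \textsc{Average} satisfy (iii) for the same structural reason. Moving from $c$ to $R(C)$ lowers the cost inside $C$ by at most $\delta$, but lowers the cost against a coalition lying wholly to the right by exactly $\delta$. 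It further explains cleanly why \textsc{Cutoff$_\lambda$} needs its own argument: the enemy fraction has no such Lipschitz behaviour.

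There are two smaller streamlinings:
\begin{itemize}
\item For \textsc{Max}~(iii), the paper first derives $\cost(c,C)=v_c-v_{L(C)}$ from the hypothesis. You simply note that $\cost(R(C),C)$ is the diameter of $C$ and hence bounds every cost inside $C$.
\item For \textsc{Cutoff$_\lambda$}~(ii), the paper splits into three cases on the position of $v_x+\lambda$. You use a weighted-mean observation plus a dichotomy: an enemy of $x$ in $C$ forces $\cost(x,D)=1$, and otherwise $\cost(x,C)=0$. This mirrors the convex-combination argument for \textsc{Average}.
\end{itemize}
The paper's version is shorter on the page only because of the citation; yours is self-contained.
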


\subsection{Properties of Optima and Equilibria}
First, let us observe that if we fix $n$, $k$, and the size of each coalition, the optimum structure for both \textsc{Jump} and \textsc{Swap} games is the same, as social costs are computed in the same way. 
But as the game dynamics of \textsc{Jump} games allow the coalition number and sizes to change dynamically, we consider optima only depending on $n$ and maximum $k$.
Especially for \textsc{Jump}-\textsc{HIS} games, the stable coalition structures will always have $k$ occupied coalitions, as singleton resp. single-value coalitions will always incur the minimum cost of 0 for each contained agent, and thus instances with a higher maximum $k$ may have a lower optimum social cost.
In contrast to this, in the \textsc{UIS} setting, agents in singleton coalitions will always leave them and never enter empty coalitions, and thus coalitions may stay empty, and the grand coalition will always be an equilibrium.

The following result has already been mentioned but not proven by \citet[Example 2.3]{MilchtaichStabilitySegregationGroup2002}. The proof and further observations can be found in \Cref{appx:JumpStability}.

\begin{restatable}{lemma}{lemmaAvgUnsortedJumpEq}
\label{lemma:avg-unsorted-jump-eq}
    For \textsc{Avg}-\textsc{Jump} games, there is an instance with at least one unsorted equilibrium.
\end{restatable}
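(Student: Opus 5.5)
We prove the statement by giving one explicit \textsc{Avg-Jump} instance together with an unsorted coalition structure, and then checking that no agent has an improving jump. Checking stability means, for every agent $x$ and every other coalition $C'$, comparing $\cost_\textrm{AVG}(x,\mathcal{C}(x))$ with $\cost_\textrm{AVG}(x,C'\cup\{x\})$. With $k=2$ and few distinct values (each taken with some multiplicity), this is a finite list of inequalities between averages. Agents with equal value in the same coalition have equal cost, so it suffices to check one agent per (value, coalition) pair.

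\textbf{Why the construction needs care.} Small trials show that the naive interleavings $\{0,10\},\{5,\dots\}$ or $\{0,0,3,3\},\{1,2\}$ fail. In these, the extreme agents of the wide coalition jump into the coalition sitting in the middle, or the middle agents jump into the wide one. So I will use a \emph{parametric} family with large multiplicities. This lets one tune each average independently.

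\textbf{Candidate structure.} Take two coalitions
\[
A=\{0^{a},\,10^{b}\}, \qquad B=\{5^{q},\,M^{p}\},
\]
with $M$ far to the right, so that $A$ and $B$ interleave via the value $5$. The required inequalities are the following.
\begin{itemize}
\item An agent at $5$ must not prefer $A$. Its cost in $B$ is $(M-5)p/(q+p-1)$, and its cost if it joins $A$ is $5$. This forces $q$ to be large relative to $p$.
\item An agent at $M$ must not prefer $A$. Its distances to $A$ are all large, so this inequality should be easy.
\item Agents at $0$ and at $10$ must not prefer $B$. Their cost in $A$ is $10b/(a+b-1)$ and $10a/(a+b-1)$, respectively. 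Their cost if they join $B$ is a weighted average of $5$ and $M$, or of $5$ and $M-10$.
\end{itemize}
This last condition is the main obstacle. A large $q$ pulls the average distance to $B$ down towards $5$, while a balanced $A$ gives its members cost about $5$ as well.

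\textbf{How I would resolve it.} First, make $A$ unbalanced enough that both endpoint groups have cost below roughly $5$. This is impossible with only two values, so I would add a third value inside $A$, e.g. $A=\{0^{a},\,c^{r},\,10^{b}\}$, to reduce the average distance of the endpoints. Second, choose the ratio $q:p$ so that the weighted average $(5q+(M-10)p)/(q+p)$ stays above the endpoint costs, while $(M-5)p/(q+p-1)\le 5$ still holds. This gives a small linear program in the ratios $a:r:b$ and $q:p$ for fixed $M$.

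I would solve it numerically (a small brute-force search over multiplicities up to about $10$ and $M\le 30$), fix the first integer solution found, and then present the table of at most $8$ cost comparisons as the proof. I would only fall back to $k=3$, adding a far-away singleton or block to absorb problematic agents, if the two-coalition system is infeasible.

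Since the example contains no singleton coalitions and every inequality used is strict in the non-jumping direction, the same instance serves for both \textsc{Avg-Jump-HIS} and \textsc{Avg-Jump-UIS}.
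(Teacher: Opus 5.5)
\textbf{There is a genuine gap.} The lemma is purely existential, so its proof consists of an explicit instance together with the stability check, and your proposal never produces one. It stops at ``solve it numerically \dots\ and fix the first integer solution found'', and you leave open that the system might be infeasible (the fallback to $k=3$). As written, nothing has been verified. The paper simply exhibits the symmetric interleaving $\{\{1,1,3,3,3,3\},\{2,2,2,2,4,4\}\}$. By symmetry it checks only values $1$ and $3$: $\frac{8}{5}<\frac{10}{6}$ and $\frac{4}{5}<1$.

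\textbf{The analysis guiding your search is wrong at the step you call the main obstacle.} Take any coalition $A$ with minimum $0$ and maximum $10$. The cost of a $0$-agent plus the cost of a $10$-agent always equals $\frac{10|A|}{|A|-1}$, whatever the interior values are. So adding a middle value $c$ cannot lower both endpoint costs. It only moves cost from one endpoint to the other, and it adds new constraints for the $c$-agents.

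Nor is it needed. The endpoints do not have to beat $5$. They have to beat $5+\frac{p}{q+p}(M-5)$ and $5+\frac{p}{q+p}(M-15)$ respectively, and both exceed $5$ once $M>15$. Your two-value family already works. For $A=\{0,0,10,10\}$ and $B=\{5,5,5,5,24\}$, the current versus post-jump costs are:
\begin{itemize}
\item $\frac{20}{3}$ vs.\ $\frac{44}{5}$ for the $0$-agents;
\item $\frac{20}{3}$ vs.\ $\frac{34}{5}$ for the $10$-agents;
\item $\frac{19}{4}$ vs.\ $5$ for the $5$-agents;
\item $19$ vs.\ $19$ for the $24$-agent, a tie and hence not an improving jump.
\end{itemize}
Writing down such an instance with these four comparisons would have completed the proof.

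Strictness of the inequalities, which you claim, is not required and in fact fails here for the $24$-agent. Stability only needs current cost $\le$ alternative cost.
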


\begin{restatable}{lemma}{lemmaCutoffjumpunsorted}\label{lemma:cutoff-jump-unsorted}
    For \textsc{Cutoff$_\lambda$}-\textsc{Jump} games, there is an instance with only unsorted optima and at least one unsorted equilibrium.
\end{restatable}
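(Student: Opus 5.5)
We plan to reuse the instance from the proof of \Cref{lemma:swap-unsorted-opt}: $n=9$, $k=2$, values $(1,1,2,2,2,2,2,3,3)$ and $\lambda<1$. Now, however, coalition sizes are not fixed, so optimality has to be checked against \emph{all} structures with at most two coalitions. Since $\lambda<1$ and all values are integers, agent $i$'s friends are exactly the other agents with value $v_i$. Hence $\cost(i,C)=1-\frac{m_i(C)-1}{|C|-1}$, where $m_i(C)$ is the number of agents in $C$ with value $v_i$. By scaling the values, as in \Cref{lemma:swap-unsorted-opt}, the result carries over to an arbitrary $\lambda$.

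\textbf{Step 1: the candidate $\mathcal{C}^*=\{\{1,1,3,3\},\{2,2,2,2,2\}\}$.} Each of the four outer agents has cost $2/3$ and each $2$ has cost $0$, so $\SC(\mathcal{C}^*)=8/3$. For jump stability, a $1$ or a $3$ moving into the $2$-coalition would get cost $1>2/3$, and a $2$ moving to the other coalition would get cost $1>0$. No coalition is a singleton and $k=2$ leaves no empty coalition to open. Hence $\mathcal{C}^*$ is an unsorted jump equilibrium in both the \textsc{HIS} and \textsc{UIS} settings. A useful sanity check for the choice of instance: with $a$ copies each of $1$ and $3$ and $b$ copies of $2$, the unsorted structure costs $2a^2/(2a-1)$, while the best sorted split $\{1^a,2^b\},\{3^a\}$ costs $2ab/(a+b-1)$. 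The former is smaller exactly when $a<b$, so $a=2$, $b=5$ works: $8/3<10/3$.

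\textbf{Step 2: every optimum is unsorted.} A structure with at most two coalitions is described by how many of each value type go into the first coalition: $x_1\in\{0,1,2\}$ ones, $x_2\in\{0,\dots,5\}$ twos and $x_3\in\{0,1,2\}$ threes. This gives at most $3\cdot 6\cdot 3=54$ cases, halved by symmetry between the two coalitions. Using the closed form for $\cost$, we compute $\SC$ for every case and show all except $\mathcal{C}^*$ exceed $8/3$. The key comparisons are:
\begin{itemize}
\item the grand coalition, with cost $\frac{2\cdot 7+5\cdot 4+2\cdot 7}{8}=6$;
\item the sorted splits $\{1,1\},\{2^5,3,3\}$ and $\{1,1,2^5\},\{3,3\}$, each with cost $10/3$;
\item the other sorted splits, such as $\{1\},\{1,2^5,3,3\}$;
\item structures that split the $2$'s.
\end{itemize}
To shorten the enumeration, we would first argue that in an optimum no value class is split. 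Suppose a coalition contains $m\ge1$ agents of one value and the other contains $m'\ge1$ agents of the same value. Moving all of them to one side should not increase the social cost, and we would check this with an exchange/convexity argument on the terms $(m-1)/(|C|-1)$. With no class split, there are only $2^3/2$ partitions of the three classes $\{1\},\{2\},\{3\}$ into at most two groups, and $\{\{1,3\},\{2\}\}$ is the unique cheapest.

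\textbf{Main obstacle.} The delicate step is the no-splitting claim, since moving agents also changes the denominators $|C|-1$ for everyone in both coalitions. If a clean exchange argument fails, we would fall back on the finite case enumeration, which is routine for $9$ agents. This enumeration would be deferred to the appendix and shows that $\mathcal{C}^*$ is the unique optimum.
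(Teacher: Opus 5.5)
Your proposal is correct and follows essentially the paper's route: you take a $\lambda<1$ instance in which $\{\{1,1,3,3\},\{2,\dots,2\}\}$ is jump stable, since any deviator would face only enemies, and you show it is the unique optimum by enumerating all structures with at most two coalitions, including \textsc{HIS} singletons. The only difference is that you reuse the nine-agent instance with five $2$'s, where I checked that every other structure costs at least $10/3>8/3$, whereas the paper uses seven agents with three $2$'s, where the closest competitor costs $3$.
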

\begin{proof}
    We use a similar construction as in \Cref{lemma:swap-unsorted-opt}, but this time with only three agents with value $2$: Consider the game instance with $n \coloneqq7$ agents, $k\coloneqq2$, some $\lambda < 1$ and values $(v_i)_{i \in [7]}\coloneqq(1,1,2,2,2,3,3)$. The unsorted coalition structure
    \begin{align*}
        \mathcal{C}^* \coloneqq  \bigl \{\{1,1,3,3\},\{2,2,2\} \bigr \}
    \end{align*}
    is jump stable, as all but one coalition have only pairwise friends in it. In \Cref{appx:JumpStability}, we show that it is also the social optimum.
\end{proof}

For \textsc{Max}-\textsc{Jump} games, there is a simple construction to show the existence of unsorted equilibria.

\begin{restatable}{lemma}{lemmaMaxJumpUnsortedEq}\label{lemma:max-jump-unsorted-eq}
    For \textsc{Max}-\textsc{Jump-\{UIS,HIS\}} games and any $k\geq 2$ and $n\geq 4k$, there is an instance with at least one unsorted equilibrium.
\end{restatable}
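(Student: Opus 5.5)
We would prove \Cref{lemma:max-jump-unsorted-eq} with an explicit two-value construction. The \textsc{Max} cost of an agent depends only on the most distant member of its coalition, so if every coalition contains both extreme values, every agent has the same cost wherever it goes. No jump is then strictly improving, yet the structure is clearly unsorted.

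\textbf{Construction.} Fix $k \geq 2$ and $n \geq 4k$. Take $n_0 \coloneqq \nh$ agents of value $0$ and $n_1 \coloneqq \nhc$ agents of value $1$. Since $n \geq 4k$, both $n_0 \geq 2k$ and $n_1 \geq 2k$. Distribute the agents over exactly $k$ coalitions $C_1,\dots,C_k$ so that each coalition receives at least two agents of value $0$ and at least two agents of value $1$. For instance, give two of each to every coalition and put the remaining agents into $C_1$.

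\textbf{Unsortedness.} Take an agent $i \in C_1$ with $v_i = 0$ and an agent $j \in C_1$ with $v_j = 1$. Some agent of value $0$ lies in $C_2$, and its value lies in $[v_i,v_j]$, yet it is not in $C_1$. Hence $\mathcal{C}$ is not sorted; this uses $k \geq 2$.

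\textbf{Stability.} Every agent $x$ currently has cost exactly $1$, because its coalition contains another agent of the opposite value. Since all $k$ coalitions are non-empty and at most $k$ coalitions are allowed, the only possible jumps are into another existing coalition $C_j$. Such a $C_j$ already contains agents of both values, so $\cost_{\textrm{MAX}}(x, C_j \cup \{x\}) = 1$, which is not a strict improvement.

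In the \textsc{HIS} case no agent can reach a singleton or empty coalition, because no coalition is empty. Moreover, because every coalition keeps at least one member of each value after any single departure, no agent is ever alone. The \textsc{UIS} case is identical, since singletons never arise.

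The only care needed is to confirm that no jump can create a fresh coalition, which holds because all $k$ slots are occupied. The argument is otherwise immediate, and the margin of two agents per value per coalition (hence $n \geq 4k$) makes the construction robust.
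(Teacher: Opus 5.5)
Your proposal is correct and is essentially the paper's proof: the paper also uses $k$ coalitions that each contain at least two agents of value $L$ and two of value $H$ ($k-1$ copies of $\{L,L,H,H\}$ plus one coalition $\{L,L,H,\ldots,H\}$ that absorbs the surplus). There every agent has \textsc{Max}-cost $H-L$ in every coalition, and since all $k$ coalitions are occupied, no jump is strictly improving. Your balanced split into $\nh$ agents of value $0$ and $\nhc$ of value $1$ only changes where the surplus agents go.
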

\begin{proof}[Proof sketch]
    Here, we only give the worst-case example for $k=2$ and $n=4$ and refer to \Cref{appx:JumpStability} for the general construction. For this, consider the following instance 
        $$\big\{\{L,L,H,H\},\{L,L,H,H\}\big\},$$
    where no agent has an improving jump, as its valuation is $H-L$ in both coalitions.
\end{proof}

By adjusting the instance used in \Cref{lemma:swap-unsorted-opt}, we get that \textsc{Max} games also admit unsorted optima. See \Cref{appx:JumpStability} for the proof.
\begin{restatable}{lemma}{lemmaUnsortedUnstableMaxJumpOpt}
    \label{lemma:unsorted-unstable-max-jump-opt}
    For \textsc{Max}-\textsc{Jump-\{UIS,HIS\}} games, there is an instance with only unsorted optima.
\end{restatable}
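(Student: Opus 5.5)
We would reuse the instance from \Cref{lemma:swap-unsorted-opt} but enlarge the middle group, since under \textsc{Max} a sorted structure pays only $1$ per middle agent. Concretely, take $k\coloneqq 2$ and values $(1,1,\overbrace{2\ldots 2}^{m},3,3)$ for a number $m$ of middle agents to be fixed below. The candidate optimum is the unsorted structure $\mathcal{C}^*\coloneqq\bigl\{\{1,1,3,3\},\{2,\ldots,2\}\bigr\}$. Under \textsc{Max}, each of the four outer agents has cost $2$ and each middle agent has cost $0$, so $\SC(\mathcal{C}^*)=8$. This holds in both the \textsc{UIS} and \textsc{HIS} settings, because no coalition of $\mathcal{C}^*$ is a singleton.

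The first step is to show that every sorted coalition structure with at most two coalitions costs at least $2+m$, and we split by whether some coalition contains both a $1$ and a $3$.
\begin{itemize}
\item If it does, sortedness forces that coalition to contain all middle agents. Each of the $m$ middle agents then pays $1$ and each of the four outer agents pays $2$, for a total of at least $8+m$.
\item If it does not, the sorted two-coalition structures have the form $\{1,1,2^a\},\{2^b,3,3\}$ with $a+b=m$. A middle agent pays $1$ wherever it sits. The two $1$'s pay $1$ each exactly when $a\ge1$, and the two $3$'s pay $1$ each exactly when $b\ge1$. The total is therefore $2+m$ when $a=0$ or $b=0$, and $4+m$ otherwise.
\end{itemize}
Splitting off a singleton, such as $\{1\}$, either forces a $1$ and a $3$ together or, in the \textsc{UIS} setting, costs $\infty$. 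So the minimum sorted cost is $2+m$, and choosing $m\coloneqq7$ gives $9>8$.

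The second step is to confirm that some unsorted structure is optimal, so that every optimum is unsorted. We only need $\OPT\le 8<9$, which $\mathcal{C}^*$ already witnesses. We therefore do not have to identify the optimum exactly: any optimum has cost at most $8$, while every sorted structure costs at least $9$, so no optimum is sorted. To be safe, we would still check that nothing undercuts $8$. Any structure in which some coalition contains both a $1$ and a $3$ already costs at least $4$ from that pair. Any structure with no such coalition must separate the $1$'s from the $3$'s with only two coalitions, so every middle agent shares a coalition with a $1$ or a $3$. Hence all $m$ middle agents pay at least $1$ and the outer agents pay at least $1$ each, giving a cost of at least $m+4>8$.

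The main obstacle is the exhaustive case check over arbitrary, possibly unsorted, partitions into at most two coalitions, in particular handling singletons differently in \textsc{UIS} (cost $\infty$) and \textsc{HIS} (cost $0$). Fortunately, only the lower bound for \emph{sorted} structures is logically required, and that case analysis is short. Finally, as in \Cref{lemma:swap-unsorted-opt}, the values can be scaled, so the instance is not tied to any particular scale.
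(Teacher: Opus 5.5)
Your proof is correct and uses the same instance as the paper: $k=2$, values $1,1$, seven $2$'s, $3,3$, and $\mathcal{C}^*=\{\{1,1,3,3\},\{2,\dots,2\}\}$ of cost $8$. Your computation that the best sorted structure $\{1,1\},\{2^m,3,3\}$ costs $m+2$ is exactly why the paper needs seven $2$'s rather than the five of the swap instance; with five, that structure would cost $7<8$.

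The two routes differ in what they establish:
\begin{itemize}
\item The paper claims $\mathcal{C}^*$ is the unique optimum up to symmetry. It relies on a brute-force enumeration plus an informal sketch. It also notes that a $3$ has an improving jump out of $\mathcal{C}^*$, so this unique optimum is unstable.
\item You never pin down the optimum. You only prove $\OPT\le 8<9\le\SC(\mathcal{C})$ for every sorted $\mathcal{C}$. This is all the statement requires and can be checked by hand, but it gives neither uniqueness nor instability.
\end{itemize}

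Two counts in your write-up are off, though neither affects the conclusion:
\begin{itemize}
\item In your first case, the sorted \textsc{HIS} structures $\{1\},\{1,2^m,3,3\}$ and $\{1,1,2^m,3\},\{3\}$ cost $6+m$, not at least $8+m$, because the singleton pays $0$.
\item In your optional sanity check, $\{1,1\},\{2^m,3,3\}$ costs $m+2$, not at least $m+4$. Also, ``at least $4$ from that pair'' does not by itself exclude costs below $8$. You would need to add that either all four outer agents share that coalition (cost at least $8$), or every $2$ shares a coalition with a $1$ or a $3$ (cost at least $m+4$).
\end{itemize}
Since $6+m$ and $m+2$ are both at least $9$ for $m=7$, your argument goes through as claimed.
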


Computing the optimal coalition structure in \textsc{Average-Jump} games is challenging even for $k=2$, due to the interdependence between the agents' values and their positions within the coalition structure. 
To gain a better understanding of these optima, we study their structural properties and prove several lemmata and observations (see \Cref{appx:avg-jump-opt}), which lead us to the following conjecture.
\begin{restatable}{conjecture}{avJumpOptConj}
\label{conj}
    In \textsc{Average-Jump} games with $k=2$, the optimal coalition structure is sorted.
\end{restatable}
This conjecture points to an important structural property of \textsc{Average-Jump} games. If it holds, the optimal coalition structure is also sorted for any $k$, since otherwise two overlapping coalitions $C'$ and $C''$ in $\mathcal{C}$ can be rearranged to reduce the total cost. 
In particular, the existence of sorted optimal structures hints at the possibility of a dynamic programming approach, similar to those used for one-dimensional clustering problems such as $k$-means and $k$-medians~\cite{GronlundFastExactKMeans2017} and related clustering models~\cite{FeldmanHedonicClusteringGames2015}.
However, it remains unclear whether such an approach can be directly applied or adapted to our setting, even if the conjecture holds, due to the dependencies between agents’ costs and coalition boundaries.

\section{Quality of Equilibria}\label{sec:eq-quality}
In this section, we look at the impact of selfish behavior on the social cost, by looking at the price of anarchy (PoA) and price of stability (PoS) for both \textsc{Swap}- and \textsc{Jump} games.

\subsection{Price of Anarchy}
\Cref{theorem:PoA} summarizes our results on the PoA, which is unbounded in most settings. We show results by providing worst-case instances for different game variants, split into \Cref{lemma:PoA-cutoff,lemma:PoA-Max-Avg-Jump-Swap,lemma:avg-jump-his-k2-poa,lemma:avg-jump-uis-k2-poa}.

\begin{theorem}\label{theorem:PoA}
    The PoA is unbounded for all games with all $k \geq 2$, except for \textsc{Avg-Jump-HIS} games with exactly two coalitions, where it is either 1 or at least $\Omega(n)$ depending on the agent values.
\end{theorem}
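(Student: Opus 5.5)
The plan is to reduce \Cref{theorem:PoA} to explicit worst-case families, one per setting, following the split into \Cref{lemma:PoA-cutoff,lemma:PoA-Max-Avg-Jump-Swap,lemma:avg-jump-his-k2-poa,lemma:avg-jump-uis-k2-poa}. The main way to get an unbounded ratio is an instance whose optimum has social cost $0$ but which also has a stable structure of positive cost. An optimum of cost $0$ arises when every coalition is single-valued (or, for \textsc{Cutoff$_\lambda$}, consists of pairwise friends). If $\OPT$ cannot be made $0$, I would instead find a family where the ratio grows without bound in a parameter such as the spread $H-L$ or the number of repetitions.

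\textbf{Swap and \textsc{Max}/\textsc{Cutoff} games.} For \textsc{Swap} games with any $k\ge 2$, I would take $k$ blocks of equal-valued agents with coalition sizes matching the block sizes. The sorted structure then has cost $0$. The unsorted structure in which every coalition contains the same mixture of values should be swap stable, because any swap leaves both swapping agents in identical-looking coalitions, so neither strictly improves. For \textsc{Max-Jump} I would reuse the construction of \Cref{lemma:max-jump-unsorted-eq}: in $\{\{L,L,H,H\},\{L,L,H,H\}\}$ every agent has cost $H-L$, whereas the optimum $\{\{L,L,L,L\},\{H,H,H,H\}\}$ has cost $0$; this extends to general $k$ by padding. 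For \textsc{Cutoff$_\lambda$-Jump} I would use the analogous mixed structure with $\lambda<H-L$. Here I need to check that an agent's enemy fraction does not strictly drop when it jumps. This can be enforced by balancing, so that joining the other coalition adds the agent to the same proportion of enemies. For the UIS variants, the grand coalition is stable by \Cref{obs:Jump-UIS-grand-coalition-stable} and has positive cost whenever two values occur, while the separated structure has cost $0$. This immediately gives an unbounded PoA for all UIS jump games, including \textsc{Avg-Jump-UIS} with $k=2$.

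\textbf{\textsc{Avg-Jump-HIS} with $k\ge 3$.} Again I would use block instances with $\OPT=0$ and a mixed stable structure. With three or more coalitions, extra coalitions can be balanced so that no single jump strictly decreases an agent's average distance.

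\textbf{\textsc{Avg-Jump-HIS} with exactly two coalitions.} This is the delicate case and the main obstacle. If at most two distinct values occur, I would show that every equilibrium has cost $0$, so the PoA equals $1$. Suppose some coalition contains $a\ge 1$ copies of $L$ and $b\ge 1$ copies of $H$. An $L$-agent in it has cost $b(H-L)/(a+b-1)$. If the other coalition contains $a'$ copies of $L$ and $b'$ copies of $H$, moving there gives cost $b'(H-L)/(a'+b')$. A short case analysis on these ratios, also using the symmetric condition for an $H$-agent, should show that some agent always strictly improves. The HIS assumption guarantees that a move into an empty coalition costs $0$.

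If three or more distinct values occur, then $\OPT>0$, and I would construct an $\Omega(n)$ family. The optimum isolates an outlier with a single-value coalition so that its cost is $O(1)$. Alternatively, I would take a sorted but badly split equilibrium (obtainable via \Cref{theorem:his-sorted-pne}-type arguments) whose cost is $\Theta(n)$ times larger, for example many agents at $L$ and $M$ grouped together while a few agents at $H$ sit alone. The hard part is verifying stability of that bad structure while keeping $\OPT$ small. I would first try values $L<M<H$ with multiplicities tuned so that agents at $M$ are indifferent, and only then tune the gaps.
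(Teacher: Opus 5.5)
Your handling of the UIS variants (the grand coalition), of \textsc{Max} via $\bigl\{\{L,L,H,H\},\{L,L,H,H\}\bigr\}$, and of the two-valued case of \textsc{Avg-Jump-HIS} with $k=2$ is sound and matches the paper. The gap is in the ``same mixture'' constructions you use for \textsc{Avg} and \textsc{Cutoff$_\lambda$}. You rely on them for all \textsc{Avg-} and \textsc{Cutoff$_\lambda$-Swap} games, for \textsc{Cutoff$_\lambda$-Jump-HIS}, and for \textsc{Avg-Jump-HIS} with $k\ge 3$, and they are never stable.

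For swaps, the two swapping agents do not land in identical-looking coalitions. Take $\bigl\{\{L,L,H,H\},\{L,L,H,H\}\bigr\}$ and let an $L$ of the first coalition swap with an $H$ of the second. Each agent enters a coalition that has just lost an agent of its opposite type. Both costs therefore drop from $\frac23(H-L)$ to $\frac13(H-L)$ under \textsc{Avg}, and from $\frac23$ to $\frac13$ under \textsc{Cutoff$_\lambda$} with $\lambda<H-L$. The same happens for any two agents of different values (for \textsc{Cutoff$_\lambda$}: any two enemies) in identically composed coalitions.

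For jumps, costs are without self-effect. An agent with cost $c>0$ in $C$ that joins a coalition of the same composition pays $\frac{|C|-1}{|C|}\,c<c$, so ``balancing'' works against you. In fact your own ratio argument for \textsc{Avg-Jump-HIS} with $k=2$ extends to every $k$: after ruling out empty and pure coalitions, an $L$-agent in the coalition where $L$-agents pay most always improves by moving. With two values and $\lambda<H-L$, the \textsc{Cutoff$_\lambda$} cost is just the \textsc{Avg} cost divided by $H-L$. Hence no two-valued instance has a positive-cost HIS jump equilibrium under either cost function. Mixing only works for \textsc{Max}, where every mixed coalition costs exactly $H-L$.

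The missing idea is to make every coalition but one homogeneous, so that all its members pay $0$. You then choose coalition sizes that do not fit the value multiplicities, which forces the remaining coalition to be mixed. For \textsc{Swap}, every swap then involves a zero-cost agent; alternatively, these structures are sorted, so \Cref{theorem:swap-sorted-eq-exists} applies. The paper uses $\bigl\{\{L,L\},\{L,L\},\{M,M,H,H\}\bigr\}$ against the optimum $\bigl\{\{L,L,L,L\},\{M,M\},\{H,H\}\bigr\}$, which has the same size multiset. The pair $\bigl\{\{L,L\},\{L,L,H,H\}\bigr\}$ versus $\bigl\{\{H,H\},\{L,L,L,L\}\bigr\}$ does the same for $k=2$; this is a case the paper's \textsc{Avg} lemma only states for $k\ge 3$.

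For \textsc{Avg-Jump-HIS} with $k\ge 3$, the same three-valued structure is jump stable. With $M=\frac{H+L}{2}+1$, neither $M$- nor $H$-agents want to join an $\{L,L\}$ coalition. For \textsc{Cutoff$_\lambda$-Jump} with $k=2$ you need a genuinely non-two-valued instance. An example is $\lambda=1$ with $\bigl\{\{0,1,2\},\{1-\varepsilon,1+\varepsilon\}\bigr\}$ versus the zero-cost $\bigl\{\{0,1-\varepsilon,1\},\{1+\varepsilon,2\}\bigr\}$. There the only positive-cost agents ($0$ and $2$) have exactly one friend and one enemy in either coalition.

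Finally, your $\Omega(n)$ family for \textsc{Avg-Jump-HIS} with $k=2$ is only sketched, though it does exist in the form you suggest. The paper takes values $1$, $2$ ($n-2$ times) and $n$. The equilibrium $\bigl\{\{1\},\{2,\dots,2,n\}\bigr\}$ costs $2(n-2)$, while $\bigl\{\{1,2,\dots,2\},\{n\}\bigr\}$ costs $2$. Each $2$-agent pays exactly $1$ in both coalitions, and agent $n$ would go from $n-2$ to $n-1$. Your orientation is the mirror image, with the far outlier at $L$.
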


\begin{restatable}{lemma}{lemmaPoACutoff}\label{lemma:PoA-cutoff}
        The price of anarchy is unbounded for \textsc{Cutoff$_\lambda$-\{Swap, Jump\}} games with all $k \geq 2$.
\end{restatable}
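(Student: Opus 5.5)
The plan is to use the fact that a ratio $\SC(\mathcal{C})/\OPT(\mathcal{I})$ is infinite as soon as $\OPT(\mathcal{I})=0$ and $\SC(\mathcal{C})>0$. \textsc{Cutoff}$_\lambda$ makes this natural, because an agent's cost is $0$ exactly when every member of its coalition is within distance $\lambda$. So for each setting and each $k\ge 2$ I will look for a \emph{nice} instance whose values form at most $k$ clusters of width at most $\lambda$, placed far apart (gaps $D\gg\lambda$). The sorted structure grouping each cluster then has social cost $0$. The remaining work is to exhibit a stable structure in which some agent has an enemy. By scaling the values it suffices to fix one $\lambda$, as in \Cref{lemma:swap-unsorted-opt}. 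For general $k$, I add $k-2$ extra far-away clusters that sit in their own coalitions in both the optimum and the equilibrium. These contribute cost $0$ and never interact with the rest, so it suffices to treat $k=2$.

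\textbf{Jump-UIS.} Take values $0,0,D,D$ and $k=2$. The optimum $\{\{0,0\},\{D,D\}\}$ has cost $0$. By \Cref{obs:Jump-UIS-grand-coalition-stable} the grand coalition is stable, since the other coalition is empty and jumping into it creates a singleton of infinite cost. Each agent there has cost $2/3$, so the ratio is unbounded.

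\textbf{Swap.} Here I must choose sizes so that the cost-$0$ clustering is feasible, while a mixed structure admits no swap in which \emph{both} agents strictly improve. I would check candidates with several coalitions of equal size, each containing a mix of the clusters, against the swap condition. An agent $i$ moving from $C$ to $C'$ in exchange for $j$ ends up in $C'\setminus\{j\}\cup\{i\}$, and the key point is that exchanging agents of the same cluster changes nothing. I first try structures in which every potential swap partner brings the mover into a coalition whose enemy fraction is no smaller, for example by putting ``bridge'' agents at distance exactly $\lambda$ from two clusters. The concrete tuning of multiplicities, with ties in the enemy fractions, is a finite check that I would carry out for a few small instances.

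\textbf{Jump-HIS}, which I expect to be the main obstacle. The naive mixed structures, such as two copies of $\{0,0,D,D\}$, fail because jumping to the other coalition dilutes the enemy fraction ($2/3\to 2/4$). Moreover, empty coalitions are forbidden in equilibrium since a singleton has cost $0$. My first attempt is to use intermediate values that are friends with two clusters, for example values in $\{0,\lambda,2\lambda\}$ covered by $k$ $\lambda$-blocks. The goal is to build an unsorted, or sorted but badly cut, structure in which every potential jump strictly increases the enemy fraction, verified via the left/right-improving-move criterion of \Cref{lemma:left-right-improving-jump} when the structure is sorted. If no cost-$0$ optimum can be combined with a positive-cost equilibrium, the fallback is a family where $\OPT$ is a fixed positive constant while some equilibrium cost grows with $n$, using large replicated clusters so that the fractions stay tied.
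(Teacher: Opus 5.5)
\textbf{The gap: the two substantive cases are never constructed.} Two parts of your proposal are fine. Your padding reduction to $k=2$ with $k-2$ distant clusters is sound: joining such a cluster gives enemy fraction $1$, and its members have cost $0$ and never move. Your \textsc{Jump-UIS} case via the grand coalition is also correct. But the cases that carry the content of the lemma, \textsc{Swap} and \textsc{Jump-HIS}, are left as a search plan. No instance is exhibited and nothing is verified, and the fallback is likewise never constructed, so the lemma is not proved.

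Your concrete candidate for \textsc{Jump-HIS}, values in $\{0,\lambda,2\lambda\}$, provably fails for $k=2$, which is the case your padding relies on. There the only enemy pairs are agents at $0$ and $2\lambda$. So positive cost needs a coalition $A$ containing $a_0\ge 1$ agents at $0$ and $a_2\ge 1$ agents at $2\lambda$.
\begin{itemize}
\item Suppose the other coalition $B$ also contains a $0$-agent. Stability requires $\frac{a_2}{|A|-1}\le\frac{b_2}{|B|}$, which already fails if $|B|=1$.
\item Otherwise stability also requires $\frac{b_2}{|B|-1}\le\frac{a_2}{|A|}$. This gives $\frac{a_2}{|A|-1}\le\frac{b_2}{|B|}\le\frac{b_2}{|B|-1}\le\frac{a_2}{|A|}<\frac{a_2}{|A|-1}$, a contradiction.
\item Symmetrically, $B$ contains no agent at $2\lambda$. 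So $B$ is empty or consists only of $\lambda$-agents, and a $0$-agent of $A$ strictly improves by jumping there at cost $0$.
\end{itemize}
This is the same dilution effect you observed for $\{0,0,D,D\}$: with symmetric friend structure, the other coalition always looks strictly better to someone.

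\textbf{The missing idea: middle agents that are friends with exactly one extreme.} This turns dilution into a tie. Take $\lambda=1$, $0<\varepsilon<\frac12$, and values $0,1-\varepsilon,1,1+\varepsilon,2$.
\begin{itemize}
\item The structure $\{\{0,1,2\},\{1-\varepsilon,1+\varepsilon\}\}$ has social cost $1$. Agents $0$ and $2$ have enemy fraction $\frac12$ in either coalition, and all other agents have cost $0$. So it is jump stable, in HIS and UIS alike.
\item The structure $\{\{0,1-\varepsilon,1\},\{1+\varepsilon,2\}\}$ has cost $0$.
\item With coalition sizes $3$ and $2$, the first structure is also swap stable. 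For $k=2$, every swap involves a member of the all-friends coalition, who cannot strictly improve.
\end{itemize}
That all-friends observation alone would also have settled your \textsc{Swap} search without bridge agents, e.g.\ $\{\{0,0\},\{0,0,D,D\}\}$ against $\{\{0,0,0,0\},\{D,D\}\}$ with sizes $2$ and $4$.

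The tie instance above is the paper's $k=2$ construction. Combined with your distant-cluster padding, it would cover all $k\ge 2$ uniformly. The paper instead uses a separate sorted construction for $k>2$.
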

\begin{proof}[Proof sketch]
    We give two different game instances for $k=2$ and $k > 2$ respectively, and show that they have an optimum $\mathcal{C}^*$ with social cost $0$ and admit a coalition structure $\mathcal{C}$ with positive social cost that is both swap and jump stable.

    For $k=2$, we assume $\lambda=1$ and consider a game with small $0.5\geq\varepsilon>0$ and the following two coalition structures
    \begin{align*}
        \mathcal{C}^* \coloneqq\;& \big\{\{0,1-\varepsilon,1\},\{1+\varepsilon,2\}\big\},\\
        \mathcal{C} \coloneqq\;& \big\{\{0,1,2\},\{1-\varepsilon,1+\varepsilon\}\big\}.
    \end{align*}
    
    For all $k>2$, consider the game with $n\coloneqq 4(k-1)$ agents and coalitions of size $k_1 \coloneqq 2(k-1)$ and $k_i = 2$ for all $i \in [k] \setminus \{1\}$. Then
    \begin{align*}
        \mathcal{C}^* \coloneqq\;& \big\{\{\overbrace{0\ldots0}^{2(k-1)}\},\,\{2,2\},\,\dots,\{k,k\}\big\},\; \text{and}\\
        \mathcal{C} \coloneqq\;& \big\{\overbrace{\{0,0\},\dots,\{0,0\}}^{k-1},\,\{2,2,\,3,3,\,\,\dots,\,k,k\}\big\}
    \end{align*}
    are the social optimum with cost $0$ and an equilibrium with positive cost, respectively.
\end{proof}

\begin{restatable}{lemma}{lemmaPoAMaxAvgJumpSwap}\label{lemma:PoA-Max-Avg-Jump-Swap}
    The price of anarchy of \textsc{\{Max, Average\}-\{Swap, Jump\}} games is unbounded for all $k \geq 3$ and $n\geq 2k+2$, independently of whether \textsc{UIS} or \textsc{HIS} is considered. For \textsc{Max-\{Swap, Jump\}} games, the number of coalitions can be reduced to $k=2$.
\end{restatable}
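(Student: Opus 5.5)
The strategy is to build, for every $k\geq 3$ and $n\geq 2k+2$, an instance whose optimum has social cost $0$ and which also admits a coalition structure with positive social cost that is simultaneously swap stable and jump stable in both the UIS and HIS settings. This is the same approach as in \Cref{lemma:PoA-cutoff}. A ratio of positive over $0$ gives an unbounded PoA. If one prefers to avoid dividing by $0$, we can instead let the optimum cost be a small $\varepsilon>0$ and the equilibrium cost be a constant, then send $\varepsilon\to 0$.

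\textbf{The construction.} Use only $k-1$ distinct values, for example $0,M,2M,\dots,(k-2)M$ with $M$ large. Place at least two agents on each value, so no singleton coalitions are needed. The sorted structure that puts each value into its own coalition has cost $0$ and leaves one coalition empty, which is allowed because we need at most $k$ coalitions. For swap games, fix coalition sizes so that this zero-cost structure is feasible.

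\textbf{The bad equilibrium.} Mimic \Cref{lemma:PoA-cutoff}. Split the agents of value $0$ into several pairs $\{0,0\}$, one pair per coalition, and merge two far-apart values $aM$ and $bM$ into a single coalition. That merged coalition has positive cost. Choosing the multiplicities carefully (this is where $n\geq 2k+2$ enters), we must check three things.
\begin{itemize}
\item No agent in a $\{0,0\}$ pair wants to leave, since its cost is $0$.
\item No agent in the mixed coalition can jump to a $\{0,0\}$ coalition: its distance there is at least as large as its current maximum or average distance. To make this hold, place the mixed values so that $0$ is farther away than the opposite group, for example mixed values $M$ and $2M$ when the zero-valued group is at $0$.
\item No swap improves both agents: the zero-valued agent would strictly lose, because it currently has cost $0$.
\end{itemize}
For UIS versus HIS, note that no singletons appear, so the jump check is the same in both settings.

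\textbf{Reducing to $k=2$ for Max.} For \textsc{Max} we exploit the "ties" phenomenon from \Cref{lemma:max-jump-unsorted-eq}. Take the structure $\{\{L,L,H,H\},\{L,L,H,H\}\}$. In it every agent has cost $H-L$ in every coalition, so the structure is both jump and swap stable. The optimum $\{\{L,L,L,L\},\{H,H,H,H\}\}$ has cost $0$. For larger $n$, pad each coalition with extra copies of $L$ and $H$.

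\textbf{Main obstacle.} The hard step is the \textsc{Avg} jump check for $k\geq 3$. An averaging agent in the mixed coalition might want to jump into a zero pair whenever its average distance in the mixed coalition exceeds its distance to $0$. So the parameters must be tuned so that the mixed coalition's average stays below the gap to $0$; I would first try the values $M$ and $M+1$ against $0$. This also explains why $k=2$ fails for \textsc{Avg}: with only two coalitions, the freedom to have a $0$-cost "decoy" coalition is lost.
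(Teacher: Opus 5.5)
There is a genuine gap at the base case $k=3$; separately, your $k=2$ \textsc{Max} example is the same as the paper's. With only $k-1=2$ distinct values $0$ and $M$, there are no two nonzero values to merge, so your bad structure does not exist. This cannot be patched inside your ``$k-1$ values'' design. Under HIS, a jump equilibrium with positive cost has no empty coalition, since a positive-cost agent would jump into it and get cost $0$. With two values, no mixed coalition is \textsc{Avg}-jump stable. Take a mixed coalition with $a\geq 1$ zeros and $b\geq 1$ copies of $M$ ($s=a+b$), and another coalition with $a'$ zeros and $b'$ copies of $M$ ($s'=a'+b'$). A $0$-agent does not improve only if $b/(s-1)\leq b'/s'$, and an $M$-agent does not improve only if $a/(s-1)\leq a'/s'$. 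Adding these gives $s/(s-1)\leq 1$, which is impossible. So \textsc{Avg-Jump-HIS} with $k=3$ needs three distinct values whose zero-cost structure occupies all three coalitions. This is what the paper does. It takes $L<M<H$ with $M$ nearer to $H$ ($M=\frac{H+L}{2}+1$), the equilibrium $\{\{L,L\},\{L,L\},\{M,M,H,H\}\}$ and the optimum $\{\{L,L,L,L\},\{M,M\},\{H,H\}\}$, and handles larger $k$ by adding $\{L,L\}$ pairs. Your remark about trying ``$M$ and $M+1$ against $0$'' is essentially this instance, but it contradicts the ``$k-1$ values, one coalition left empty'' setup you committed to.

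Second, in \textsc{Swap} games the coalition sizes are part of the instance. The zero-cost comparison structure must therefore have the same size profile as the bad equilibrium. Your optimum (each value alone, one coalition empty) has a different profile from your bad structure, which HIS forces to occupy all $k$ coalitions. So fixing the sizes to make ``this zero-cost structure'' feasible makes your equilibrium infeasible. In the paper's instance both structures have sizes $\{4,2,2\}$. For $k\geq 4$ your design can be rescued along these lines. For example, the bad structure $\{0,0\}^{3},\{M,M,2M,2M\},\{3M,3M\},\dots$ and the zero-cost structure $\{0,0,0,0\},\{0,0\},\{M,M\},\{2M,2M\},\{3M,3M\},\dots$ have the same sizes, and all jump checks are immediate because distances are multiples of $M$. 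As written, though, the proposal does not cover $k=3$ and leaves the multiplicities unspecified. Likewise, pad the \textsc{Max} $k=2$ example so the sizes still match, e.g.\ $\{\{L,L,H,H\},\{L,L,H,\dots,H\}\}$ against $\{\{L,L,L,L\},\{H,\dots,H\}\}$.
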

\begin{proof}[Proof sketch]
    We state the worst-case examples for both cost functions and refer to \Cref{appx:sub:PoA} for the general constructions. 
    
    For \textsc{Average}, let $L<M<H$ and $M \coloneqq \frac{H+L}{2}+1$, and consider a game with $k=3$ and $n=8$ admitting a coalition structure
        $$\big\{\{L,L\},\{L,L\},\{M,M,H,H\}\big\}.$$
    This is swap and jump stable for both the \textsc{Max} and the \textsc{Average} cost function. However, there is an optimum
        $$\big\{\{L,L,L,L\},\{M,M\},\{H,H\}\big\}.$$
    with a social cost of $0$. A similar construction works for the \textsc{Maximum} cost function even with $k=2$ and $n=8$. Here, 
        $$\big\{\{L,L,L,L\},\{H,H,H,H\}\big\}$$
    is the optimum with a social cost of $0$ while 
        $$\big\{\{L,L,H,H\},\{L,L,H,H\}\big\}$$
    is a swap and jump stable coalition structure with asymptotically maximal cost of $\Theta(n(H-L))$.
\end{proof}

\Cref{lemma:PoA-Max-Avg-Jump-Swap} only shows the PoA for \textsc{Avg} games with $k\geq 3$ coalitions. With \Cref{lemma:avg-jump-his-k2-poa} and \Cref{lemma:avg-jump-uis-k2-poa}, we take a closer look at $k=2$. The proofs can be found in \Cref{appx:sub:PoA}.

\begin{restatable}{lemma}{lemmaAvgJumpHISkTwoPoA}
\label{lemma:avg-jump-his-k2-poa}
    For \textsc{Avg-Jump-HIS} games, if $\SC(\OPT)=0$ for  $k=2$ coalitions, then the social cost of any equilibrium is also 0.
    If $\SC(\OPT)>0$, then the PoA is in $\Omega(n)$ and cannot be unbounded, even for instances with two coalitions and only 3 different agent values. 
\end{restatable}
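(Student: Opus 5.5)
The plan has two parts: a counting argument for the case $\SC(\OPT)=0$, and a lower-bound construction plus an upper-bound argument for the case $\SC(\OPT)>0$.

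\textbf{Case $\SC(\OPT)=0$.} With $k=2$, zero optimal cost forces every coalition of size at least two in $\OPT$ to be single-valued. Hence the instance has at most two distinct values, say $L<H$; with only one value the claim is trivial. Take any jump equilibrium $\{C_1,C_2\}$. Suppose $C_1$ contains $a_1$ agents of value $L$ and $b_1$ agents of value $H$, and $C_2$ contains $a_2$ and $b_2$. Assume for contradiction that $C_1$ is mixed, i.e.\ $a_1,b_1\ge 1$.
\begin{itemize}
\item If $C_2$ is empty, any agent of $C_1$ has positive cost and jumps to the empty coalition to get cost $0$ (HIS). This is an improving move, so $C_2\neq\emptyset$.
\item If $C_2\neq\emptyset$, no $L$-agent of $C_1$ improves by jumping, so $\frac{b_1}{a_1+b_1-1}\le \frac{b_2}{a_2+b_2}$. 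Likewise no $H$-agent improves, so $\frac{a_1}{a_1+b_1-1}\le \frac{a_2}{a_2+b_2}$. The common factor $H-L$ cancels in both. Adding the two inequalities gives $\frac{a_1+b_1}{a_1+b_1-1}\le 1$, which is impossible.
\end{itemize}
So no coalition is mixed. Every equilibrium therefore consists of single-valued coalitions and has social cost $0$, matching $\OPT$ (read as ratio $1$). The same summing trick is the template I would reuse below.

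\textbf{Case $\SC(\OPT)>0$, lower bound.} Now there are at least three distinct values. For the $\Omega(n)$ lower bound I would use values $L<M<H$ with very few $M$-agents, and compare two structures.
\begin{itemize}
\item The optimum puts the single $M$ (or a pair of $M$-agents) with one side, e.g.\ $\{L\ldots L,M\}$ and $\{H\ldots H\}$. Its cost is only $O(M-L)$, coming from the few pairs involving $M$.
\item An equilibrium instead puts a constant fraction of the $L$- or $H$-agents into a mixed coalition. It is kept stable because the other coalition is even worse for the would-be jumpers, e.g.\ it contains the $M$-agent together with many agents of the opposite extreme. This yields equilibrium cost $\Theta(n)\cdot(H-L)$-type terms.
\end{itemize}
I would tune the multiplicities and the position of $M$ so that the per-agent stability inequalities hold with equality up to rounding, and so that the ratio grows linearly in $n$.

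\textbf{Case $\SC(\OPT)>0$, boundedness.} For ``cannot be unbounded'' I would show the ratio is bounded by a function of $n$ alone, independent of the values.
\begin{itemize}
\item Lower-bound $\OPT$ in terms of the gaps: since three distinct values exist, some coalition contains two distinct values, so $\OPT\ge$ roughly $\frac{2}{n}$ times the smallest relevant gap.
\item Upper-bound any equilibrium by applying the summing argument above to a mixed coalition. The stability inequalities of its extreme agents bound its internal distances by those of the other coalition. Combined with the sorted-equilibrium structure of \Cref{theorem:his-sorted-pne}, the equilibrium cost is at most $\mathrm{poly}(n)$ times the gap that $\OPT$ must pay.
\end{itemize}

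The main obstacle is this upper bound. One must prevent equilibria that pay the large gap while $\OPT$ pays only a tiny gap, e.g.\ values $0,\varepsilon,1$. The key step would be showing that in such an equilibrium the $\varepsilon$-agent (or an extreme agent) always has an improving jump unless the costs are comparable. I would try a case analysis on which coalition contains the middle value.
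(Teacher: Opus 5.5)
Your treatment of the case $\SC(\OPT)=0$ is correct. Summing the no-jump inequalities of an $L$-agent and an $H$-agent is cleaner than the paper's case split (pure versus mixed second coalition, then comparing two $L$-agents).

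The genuine gap is the $\Omega(n)$ lower bound. The statement is witnessed by an explicit family of instances, and you only describe a plan to ``tune multiplicities'', in a regime that cannot work. Apply your own summing trick to an agent of the global minimum value $L$ and one of the global maximum value $H$ in a common coalition $A$. Since $\dist(L,z)+\dist(H,z)=H-L$ for every agent $z$, you get $\frac{|A|}{|A|-1}(H-L)\le H-L$, a contradiction. With at least two values both coalitions of a \textsc{HIS} equilibrium are nonempty, so $L$- and $H$-agents never share a coalition in equilibrium, and the mixed coalition you describe is impossible. Equilibria therefore look like $\{L\ldots L,M\ldots M\},\{M\ldots M,H\ldots H\}$. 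With only one or two $M$-agents, only $L$--$M$ and $M$--$H$ pairs contribute, so the equilibrium costs $O(H-L)$, not order $n(H-L)$. Moreover, the no-jump condition of an $M$-agent essentially compares $M-L$ with $H-M$ and keeps it on the nearer side; one checks that the ratio stays $O(1)$.

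The missing idea is the opposite regime: many middle agents and a single agent at each extreme, so that the distance to the far agent is diluted by averaging over the middle agents. Take values $1$, $2$ ($n-2$ times) and $n$. Then $\{\{1\},\{2,\ldots,2,n\}\}$ is an equilibrium: each $2$-agent has cost $1$ in either coalition, the $n$-agent would go from $n-2$ to $n-1$, and the $1$-agent pays $0$. Its social cost is $2(n-2)$, whereas $\{\{1,2,\ldots,2\},\{n\}\}$ costs $2$, so the ratio is at least $n-2$.

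For ``cannot be unbounded'' the paper means only that no instance has $\SC(\OPT)=0$ together with a positive-cost equilibrium. Three distinct values force $\SC(\OPT)>0$ by pigeonhole, and $\SC(\OPT)=0$ is your first case, so you already hold both ingredients.

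Your stronger reading (a bound on the ratio in terms of $n$ alone) is not what the paper proves, and your sketch for it contains a false step. \Cref{theorem:his-sorted-pne} only guarantees that \emph{some} equilibrium is sorted, whereas the PoA ranges over all equilibria, and unsorted \textsc{Avg-Jump} equilibria exist (\Cref{lemma:avg-unsorted-jump-eq}). If you want that stronger bound, drop sortedness. Apply the summing trick to arbitrary agents $p,q$ sharing an equilibrium coalition $A$, with other coalition $B$. It yields $\dist(p,q)\le 2(|A|-1)\max_{z\in B}\rho_z$, where $\rho_z$ is the distance of $v_z$ to the interval $[v_p,v_q]$. Comparing this with the two coalitions of an optimum gives a bound polynomial in $n$.
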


Compared to \textsc{Average-Jump-HIS} games, where an unbounded PoA is impossible for $k=2$, we show that \textsc{Average-Jump-UIS} games have an unbounded PoA, since the grand coaliton is stable.

\begin{restatable}{lemma}{lemmaAvgJumpUISkTwoPoA}
\label{lemma:avg-jump-uis-k2-poa}
    For all three \textsc{Jump-UIS} games, even games with $k=2$ coalitions and only two different agent values have unbounded PoA.
\end{restatable}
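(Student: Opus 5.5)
The plan is to give one explicit family of instances, parameterized by a gap $H-L$, that works for all three cost functions at once. It has $k=2$ and two values $L<H$. In the \textsc{UIS} setting the grand coalition is always a jump equilibrium by \Cref{obs:Jump-UIS-grand-coalition-stable}. It therefore suffices to compare the social cost of the grand coalition with the optimum and show the ratio can be made arbitrarily large (or infinite). We take $n=4$ with values $(L,L,H,H)$; more generally we may take $a\geq 2$ agents of value $L$ and $b\geq 2$ agents of value $H$.

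\textbf{Step 1: bound the optimum.} The sorted structure $\{\{L,\dots,L\},\{H,\dots,H\}\}$ uses exactly $k=2$ coalitions and has no singletons, since $a,b\geq 2$. So no agent pays the \textsc{UIS} penalty. Every agent has distance $0$ to all other members of its coalition. Its cost is therefore $0$ under \textsc{Avg} and \textsc{Max}, and also under \textsc{Cutoff$_\lambda$}, since all coalition partners are friends for any $\lambda>0$. Hence $\SC(\OPT)=0$.

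\textbf{Step 2: the grand coalition is a costly equilibrium.} In the grand coalition, every agent has at least one partner at distance $H-L>0$.
\begin{itemize}
\item Under \textsc{Max}, each agent's cost is $H-L$.
\item Under \textsc{Avg}, an $L$-agent's cost is $b(H-L)/(a+b-1)>0$, and symmetrically for $H$-agents.
\item Under \textsc{Cutoff$_\lambda$}, we choose $H-L>\lambda$. Then each $L$-agent has $b$ enemies among $a+b-1$ others, giving positive cost.
\end{itemize}
So $\SC$ of the grand coalition is strictly positive. Stability follows from \Cref{obs:Jump-UIS-grand-coalition-stable}: the only alternative coalition is empty, and jumping into it yields a singleton with infinite cost.

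\textbf{Step 3: conclude.} The ratio $\SC(\text{grand})/\SC(\OPT)$ has the form positive$/0$. Under the usual convention this is unbounded, which gives an unbounded PoA.

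The only real obstacle is interpretive rather than technical: the optimum has cost exactly $0$, so we need the convention that positive$/0$ counts as unbounded. The paper already uses this convention in \Cref{lemma:PoA-cutoff,lemma:PoA-Max-Avg-Jump-Swap}. Alternatively, if one prefers a finite but growing ratio, we can perturb one $H$-agent to $H+\varepsilon$. Then $\SC(\OPT)=\Theta(\varepsilon)$ under \textsc{Avg} and \textsc{Max} while the grand coalition stays $\Theta(H-L)$, and we let $\varepsilon\to 0$. We also need to check that no other coalition structure with two coalitions beats $0$, which is trivial because costs are nonnegative.
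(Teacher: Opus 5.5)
Your proposal is correct and follows essentially the same route as the paper. Both compare the zero-cost optimum $\{\{L,L\},\{H,H\}\}$ with the grand coalition, which is stable by \Cref{obs:Jump-UIS-grand-coalition-stable} and has positive cost (taking $\lambda < H-L$ for \textsc{Cutoff$_\lambda$}), so the ratio is positive over zero. One small caveat concerns your optional aside: perturbing an $H$-agent to $H+\varepsilon$ introduces a third value and so leaves the ``only two different agent values'' class. Your main argument does not rely on it.
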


\subsection{Price of Stability}
While the results on the PoA are similar for \textsc{Swap} and \textsc{Jump} games, the PoS differs between \textsc{Swap} and \textsc{Jump}. Although the optima of all three \textsc{Swap} games are always stable, this is not necessarily true for the respective \textsc{Jump} games.

An overview of the results is given by \Cref{theorem:PoS}. For that, we remind the reader of the definition of nice \textsc{Cutoff} games given at the end of \Cref{sec:model}.
\begin{theorem}\label{theorem:PoS}
    The PoS for all \textsc{Swap} games and nice \textsc{Cutoff$_\lambda$} games is 1.
    For all general \textsc{Jump} games, it is strictly larger than 1.
\end{theorem}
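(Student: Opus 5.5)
The statement has three parts, and I will prove each separately. For \textsc{Swap} games the natural route is to show that some social optimum is itself swap stable. Coalition sizes are fixed, so the set of feasible structures is the same before and after any swap, and an optimum exists among these finitely many structures.

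For \textsc{Avg} and \textsc{Cutoff$_\lambda$} I would invoke \Cref{lemma:swap-mfhg} and \Cref{cor:swap-potential-games}. The argument behind them (following \citet{BiloHedonicGamesFixedsize2022}) is that the social cost is an exact or ordinal potential for improving swaps in modified FHGs with symmetric non-negative valuations. Take an optimum $\mathcal{C}^*$ among structures with the given sizes. If it admitted an improving swap, the social cost would strictly decrease, contradicting optimality. Hence $\mathcal{C}^*$ is a swap equilibrium and the PoS is $1$.

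For \textsc{Max-Swap} the social cost is not a potential, so a different argument is needed. I would use the lexicographic potential of \citet{BiloSchellingGamesContinuous2023}, namely the non-increasingly sorted cost vector. Among all optima, choose one that is lexicographically minimal with respect to this vector. An improving swap lowers the costs of the two swapping agents, and third parties' costs do not increase beyond the swappers' old costs, which is what makes the vector a potential. I still need to argue that the swap does not increase the social cost. This is the main obstacle, since third parties' maximum distances could rise. If that fails, I would instead start from an optimum and follow improving swaps, showing that the social cost is non-increasing along them.

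For nice \textsc{Cutoff$_\lambda$-Jump} games, fix a cover by at most $k$ $\lambda$-blocks and assign each agent to one block containing its value, taking a sorted choice so that the blocks are disjoint intervals. Every agent then has only friends in its coalition, so its cost is $0$ and the social cost is $0$, which is optimal. No agent can strictly decrease a cost of $0$, so in the HIS setting this structure is stable. In the UIS setting, if a block yields a singleton coalition I would merge it with a neighbouring block when possible. If not, note that the singleton agent has infinite cost but every agent has cost $0$ elsewhere. This needs care: either the singleton can join an adjacent coalition where all members are within distance $\lambda$, or I accept that the PoS claim for nice games is meant for HIS only.

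For the strictness claim on general \textsc{Jump} games, I would construct explicit small instances, one per cost function (with the UIS and HIS settings handled separately), where the unique optima are unstable. The lower bound then follows by a finite check that every jump equilibrium has strictly larger cost.

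For \textsc{Cutoff}, a non-nice instance works: take values $0,\lambda,2\lambda+\varepsilon$ with appropriate multiplicities and $k=2$, so that the optimum is unsorted or leaves an agent who prefers to jump.

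For \textsc{Max} and \textsc{Avg} with $k=2$, I would reuse the instance from \Cref{lemma:unsorted-unstable-max-jump-opt} with values $L,M,H$, shifting $M$ slightly so that the optimum has an agent in the $M$-group with an improving jump.

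The remaining work is enumerating the equilibria to confirm that their cost exceeds $\OPT$.
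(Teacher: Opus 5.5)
Your argument for \textsc{Avg}- and \textsc{Cutoff$_\lambda$-Swap} and for nice \textsc{Cutoff$_\lambda$-Jump} games under \textsc{HIS} is the paper's: the social cost is a potential, so an optimum is stable, and the block partition has cost $0$. The \textsc{Max-Swap} part, however, cannot be completed by either route you sketch, because the obstacle you name is real and the claim is false. Take values $-4$, seven agents at $0$, $2$, $3$, with coalition sizes $2$ and $8$. The seven possible choices of the size-$2$ coalition give social costs $30,31,34,36,40,42,42$. The unique minimum is $\{\{-4,3\},\{0,\dots,0,2\}\}$, with cost $14+16=30$. In it, $-4$ (cost $7$) and $2$ (cost $2$) have an improving swap, to costs $4$ and $1$. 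After the swap the seven $0$-agents pay $4$ instead of $2$, so the social cost rises from $30$ to $34$. Hence the unique optimum is unstable, every equilibrium costs at least $31$, and the PoS exceeds $1$. So neither selecting a lexicographically minimal optimum nor showing that improving swaps never raise the social cost can work. The paper's proof of \Cref{lemma:max-swap-pos} has the same hole. For $v_x<Y_{\min}$ it uses $X_{\min}\le v_x$ and $|v_y-X_{\min}|>|v_x-X_{\min}|$, and both fail here ($v_x=-4$, $X=\{3\}$, $v_y=2$).

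Your doubt about \textsc{UIS} in the nice case is also justified, and no merging repair exists. Take \textsc{Cutoff$_\lambda$-Jump-UIS} with $k=3$ and values $0,\lambda,2\lambda,10\lambda$, which is nice via the blocks $[0,\lambda]$, $[2\lambda,3\lambda]$, $[10\lambda,11\lambda]$. Finite-cost structures then contain no singleton. The optima are $\{\{0,\lambda\},\{2\lambda,10\lambda\}\}$ and $\{\{0,10\lambda\},\{\lambda,2\lambda\}\}$, each with cost $2$. Both are unstable: the agent at $2\lambda$ (resp.\ $0$) pays $1$ but would pay $\frac12$ in the other pair. The only finite-cost equilibrium is the grand coalition, with cost $\frac83$, so this instance has PoS $\frac43$. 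The paper's proof misses this by asserting that the block partition has cost $0$; the nice claim holds only under \textsc{HIS}.

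For the strictness part, your \textsc{Cutoff} instance $0,\lambda,2\lambda+\varepsilon$ is nice for $k=2$ (blocks $[0,\lambda]$ and $[2\lambda+\varepsilon,3\lambda+\varepsilon]$), so it has a stable zero-cost optimum. What is needed is an instance requiring three blocks, as in \Cref{lemma:pos-cutoff-jump}. The instance of \Cref{lemma:unsorted-unstable-max-jump-opt} already works for \textsc{Max} without any shifting, since a $3$ can jump out of the unique optimum. Under \textsc{Avg}, however, that structure is not even optimal: $\{\{1,1,2,\dots,2\},\{3,3\}\}$ costs $\frac72<\frac{16}{3}$. The paper instead uses $1,1,1,4,6,8,8$ for both cost functions (\Cref{lemma:avg-max-jump-pos}).
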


For \textsc{Cutoff$_\lambda$-Swap} and \textsc{Avg-Swap} games, this follows directly from the social cost being potential functions, while the proof for \textsc{Max-Swap} games is more technical. See \Cref{appx:sub:PoS} for both proofs.

\begin{restatable}{corollary}{corCutoffAvgSwapPos}\label{cor:cutoff-avg-swap-pos}
    The price of stability is exactly $1$ for all \textsc{Cutoff$_\lambda$-Swap} and \textsc{Avg-Swap} games.
\end{restatable}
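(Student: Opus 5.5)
The plan is to argue that, for \textsc{Avg-Swap} and \textsc{Cutoff$_\lambda$-Swap} games, every social optimum is itself a swap equilibrium. Since a swap equilibrium always exists (\Cref{cor:swap-potential-games}), the infimum in the definition of the PoS is then attained by an optimum, which gives $\PoS=1$.

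First I fix an instance $(n,(v_i)_{i\in[n]},k,(k_i)_{i\in[k]})$. The feasible coalition structures are exactly the partitions with the prescribed sizes, so this set is finite and an optimum $\mathcal{C}^*$ exists. Second, I recall from the proof of \Cref{lemma:swap-mfhg} that the social cost is an exact or ordinal potential for swaps, i.e., every improving swap strictly decreases $\SC$. For completeness I would re-derive it in cost form, writing $\SC(\mathcal{C})=\sum_{C\in\mathcal{C}}\frac{1}{|C|-1}\sum_{i\in C}\sum_{j\in C\setminus\{i\}} w(i,j)$. Here $w(i,j)=\dist(i,j)$ for \textsc{Avg}, and $w(i,j)=\mathbb{1}[\dist(i,j)>\lambda]$ for \textsc{Cutoff$_\lambda$}; in both cases $w$ is symmetric and non-negative.

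The key computation concerns a swap of agent $a\in C$ with agent $b\in D$. It changes only the coalitions $C$ and $D$, and their sizes stay the same. The change of the inner sum of $C$ is $2\bigl(\sum_{j\in C\setminus\{a\}} w(b,j)-\sum_{j\in C\setminus\{a\}}w(a,j)\bigr)$, using symmetry. After dividing by $|C|-1$, this equals $2\bigl(\cost(b,C\setminus\{a\}\cup\{b\})-\cost(a,C)\bigr)$, and analogously for $D$. The total change in $\SC$ is therefore twice the sum of the two agents' cost changes. In an improving swap both changes are negative, so $\SC$ strictly decreases.

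Consequently, $\mathcal{C}^*$ admits no improving swap, since otherwise it would not be optimal. Hence $\mathcal{C}^*$ is a swap equilibrium and $\PoS=1$. There is one edge case: if $\OPT=0$, the ratio $\SC/\OPT$ should be read as $1$, because the equilibrium $\mathcal{C}^*$ also has cost $0$. Singleton coalitions are excluded for \textsc{Swap} games, so the division by $|C|-1$ is always well defined.

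The main obstacle is step two, namely checking that the factor $|C|-1$ is preserved under the swap. It is preserved because coalition sizes are fixed, and this is exactly what fails for jumps. I also have to make sure that the double-counting via symmetry gives a factor exactly $2$, with no self-effect terms involving $a$ or $b$ left over. Everything else is immediate.
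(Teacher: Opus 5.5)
Your proposal is correct and follows the same route as the paper: the social cost strictly decreases under every improving swap, so an optimum admits no improving swap and is itself a swap equilibrium. The only difference is that the paper cites \Cref{lemma:swap-mfhg} and \Cref{cor:swap-potential-games} for this potential property, whereas you derive it directly in cost form, showing that because coalition sizes are fixed the change in $\SC$ is exactly twice the sum of the two swapping agents' cost changes.
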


\begin{restatable}{lemma}{lemmaMaxSwapPoS}\label{lemma:max-swap-pos}
    The price of stability is $1$ for \textsc{Max-Swap} games.
\end{restatable}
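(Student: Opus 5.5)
Since the \textsc{Max-Swap} game has the FIP (\Cref{cor:swap-potential-games}), the natural plan is to start improving swap dynamics from a social optimum $\mathcal{C}^*$. I will show that there is a potential function whose decrease along improving swaps forces the social cost not to increase. Then the equilibrium reached has social cost at most $\SC(\mathcal{C}^*)$, and hence equal to it. The potential I would use is the one from \citet{BiloSchellingGamesContinuous2023}: the cost vector of all agents sorted non-increasingly, which decreases lexicographically with every improving swap. The social cost is not directly monotone under this potential, so the argument needs an extra structural step.

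\textbf{Step 1: reduce to a canonical optimum.} For the \textsc{Max} cost function, the cost of agent $i$ in a coalition $C$ with $|C|\ge 2$ is $\max(v_i - v_{L(C)}, v_{R(C)} - v_i)$. So the social cost of a coalition depends only on its multiset of values and its extremes. I will argue that among all optima with the prescribed sizes $(k_i)$, we may choose one minimizing the sorted cost vector lexicographically. I then claim this optimum is already swap stable.

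\textbf{Step 2: show the chosen optimum is swap stable.} Suppose agents $a\in C$ and $b\in D$ have an improving swap. Write $\delta_a$ and $\delta_b$ for their cost decreases, both strictly positive. The other agents of $C$ and $D$ may see their costs change, because the extremes of $C$ and $D$ can move. The key computation is a case analysis on whether $a$ or $b$ is an extreme of its old or new coalition.

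If neither $a$ nor $b$ is an extreme, the extremes of $C$ and $D$ are unchanged or shrink. Then nobody else's cost increases, the social cost strictly decreases, and we contradict optimality.

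If some agent's cost does increase, it is because an extreme moved outward. For instance, $b$ enters $C$ beyond $R(C)$. But then $a$'s cost in $D$ being smaller than in $C$, together with $b$'s cost in $C$ being smaller than in $D$, constrains the positions. I expect to show that any agent whose cost rises ends with a cost at most the old cost of $a$ or $b$. This is the lexicographic-decrease argument of \citet{BiloSchellingGamesContinuous2023}, and it is where the tie-breaking choice from Step 1 is used.

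\textbf{Main obstacle.} The hard part is the accounting in Step 2. I need both of the following: the social cost does not increase, which keeps us among optima, and the sorted cost vector strictly decreases, which contradicts the lexicographic minimality from Step 1.

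I would try to prove directly that the sum of cost increases of the third-party agents is at most $\delta_a+\delta_b$. The key observation is that a new extreme of $C$ after the swap must be $b$ itself. The increase for each other member $c$ of $C$ is then at most $|v_b - v_{R(C)}|$, and this quantity must be compared with $\delta_b$.

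If this sum inequality fails in general, the fallback is to work purely lexicographically, using \Cref{theorem:swap-sorted-eq-exists}. I would show that some optimum is sorted after all, which fails by \Cref{lemma:swap-unsorted-opt}. So instead I would take a lexicographically minimal sorted-cost vector among optima and argue the swap yields another optimum with a smaller vector, a contradiction.
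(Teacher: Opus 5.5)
The step you flag as the main obstacle is exactly where the argument fails, and it cannot be repaired. Starting from a social optimum, an improving swap can strictly increase the social cost. So the inequality ``third-party increases $\le\delta_a+\delta_b$'' is false even under optimality, and the lexicographic selection of Step~1 has nothing to act on.

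The per-agent fact behind the Bil\`o et al.\ potential (every agent whose cost rises ends at most at the old cost of $a$ or $b$) controls the largest new cost, not the sum. The number of third parties in the receiving coalition is unbounded.

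Concretely, take $k=2$, coalition sizes $2$ and $12$, and values: one $0$, seven $2$'s, five $4$'s and one $5$.
\begin{itemize}
\item Enumerating the eight possible value pairs for the size-$2$ coalition shows that $\bigl\{\{0,5\},\{\overbrace{2\ldots2}^{7},\overbrace{4\ldots4}^{5}\}\bigr\}$ is the unique optimum up to permuting equal-valued agents. Its social cost is $5+5+12\cdot 2=34$.
\item The runner-up is $\{0,2\}$ with $35$, then $\{4,5\}$ with $36$; all other choices give at least $40$.
\item Yet the agent with value $0$ and any agent with value $4$ have an improving swap: the first goes from cost $5$ to $4$, the second from $2$ to $1$.
\item Here $\delta_a+\delta_b=2$. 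But the four remaining $4$'s each rise from $2$ to $4$, and the $5$ falls from $5$ to $1$. That is a net third-party increase of $4$, so the social cost goes from $34$ to $36$.
\end{itemize}

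Hence every swap equilibrium of this instance has social cost at least $35$, and its PoS is at least $35/34>1$. The statement is false as written, so no choice of potential or optimum selection can prove it; the claim itself would have to be weakened or restricted.

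The paper's own proof breaks at the same configuration. In the case $v_x<Y_{\min}$ with $\cost(y,\mathcal{C}(y))=|v_y-Y_{\min}|$, it uses $X_{\min}\le v_x$. That fails when $x$ is the unique minimum of its coalition, as here ($v_x=0$, $X=\{5\}$, $Y_{\min}=2$, $v_y=4$), and $y$'s cost in fact drops.

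Your first case is correct and matches the paper's first case: when neither incoming agent extends the range of its new coalition, no third party's cost rises and the social cost strictly drops. The remaining cases, however, genuinely admit optima with improving swaps.
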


In contrast to \textsc{Swap} games, the PoS in \textsc{Jump} games is greater than~1 in general, as optima can be unstable. However, there is an interesting subclass of \textsc{Cutoff$_\lambda$-Jump} games with a PoS of $1$.

\begin{restatable}{lemma}{lemmaAvgMaxJumpPoS}\label{lemma:avg-max-jump-pos}
    For \textsc{Avg-} and \textsc{Max-Jump} games, it holds that PoS $>1$.
\end{restatable}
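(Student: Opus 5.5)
The plan is to exhibit one small instance for which \emph{every} optimum admits an improving jump. Since an instance has only finitely many coalition structures, every equilibrium then has social cost strictly larger than $\OPT$, so the ratio exceeds $1$. Equilibria exist by \Cref{lemma:max-jump-fip}, \Cref{theorem:his-sorted-pne} and \Cref{obs:Jump-UIS-grand-coalition-stable}, so the infimum is a minimum over a non-empty finite set. I will choose the instance so that one construction works simultaneously for \textsc{Avg} and \textsc{Max}, and for both \textsc{HIS} and \textsc{UIS}.

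\textbf{Instance.} Take $k=2$ and $n=m+2$ agents with $m\ge 3$. There are $m$ agents with value $0$, one agent $x$ with value $1$, and one agent $y$ with value $2.1$. The intuition is that the socially best choice is to keep the large block of zeros clean and pair $x$ with $y$. Then $x$ pays $1.1$ (both for \textsc{Avg} and \textsc{Max}, since its coalition has only two members), whereas joining the zeros would cost $x$ only $1$. So $x$ has an improving jump out of the optimum.

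\textbf{Step 1: identify the optimum.} The structure $\mathcal{C}^*=\{\{0,\dots,0\},\{1,2.1\}\}$ has social cost $2\cdot 1.1=2.2$ under both cost functions. It contains no singleton, so this value holds in \textsc{HIS} and \textsc{UIS} alike. I will then argue that every other structure with at most two coalitions costs more than $2.2$; this case analysis is the main (though routine) obstacle.

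If some coalition contains a $0$-agent together with $x$ or $y$, then every such zero pays at least $1/(|C|-1)$ under \textsc{Avg} and at least $1$ under \textsc{Max}. The nonzero agents pay about $1$ or more as well. I will check the remaining cases explicitly:
\begin{itemize}
\item the grand coalition;
\item $\{0,\dots,0,1\},\{2.1\}$, which is infinite in \textsc{UIS} and costs roughly $m/m+1=2$ plus the zeros' costs in \textsc{HIS};
\item $\{0,\dots,0,2.1\},\{1\}$;
\item the splits in which some zeros accompany $\{1,2.1\}$ or are separated.
\end{itemize}
For \textsc{Avg} the structure $\{0^m,1\},\{2.1\}$ has \textsc{HIS} cost $m\cdot\frac{1}{m}+1=2<2.2$. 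This breaks the construction for \textsc{Avg}-\textsc{HIS}, so I must adjust it. I will replace $y$ by \emph{two} agents at value $2.1$ (or more). Then separating $x$ from $y$'s group gives no zero-cost singleton trick. The candidate optimum becomes $\{0^m\},\{1,2.1,2.1\}$ with cost $1.1+2\cdot\frac{1.1}{2}\cdot$(appropriate averages). I will tune the value of $y$ (some $1+\delta$ with small $\delta$) and $m$ so that this structure is uniquely optimal, while $x$'s average distance in it still exceeds $1$. This keeps $x$'s jump to the zeros improving.

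\textbf{Step 2: conclude strict PoS.} With the tuned parameters, I will verify that $x$'s jump from the unique optimum strictly lowers $x$'s cost in all four settings. Hence no optimum is a jump equilibrium, every equilibrium has cost strictly above $\OPT$, and $\PoS>1$. Should a single instance resist handling both cost functions, I will fall back on two separate instances of the same shape, one for \textsc{Avg} and one for \textsc{Max}, with parameters chosen individually.
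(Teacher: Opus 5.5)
Your first instance is fine for \textsc{Max} (the optimum $\{0^m\},\{1,2.1\}$ costs $2.2$ and $x$ improves from $1.1$ to $1$) and for \textsc{Avg-UIS}, where the only cheaper-looking alternative leaves $y$ as an infinite-cost singleton. The gap is \textsc{Avg-HIS}. You correctly find it broken, but it is never resolved, and the proposed repair fails. With two agents at $2.1$, the structure $\{0^m,1\},\{2.1,2.1\}$ costs $m\cdot\frac1m+1+0=2<2.2$. A coalition of equal values is just as cost-free as a singleton. Because it is not a singleton, this alternative is also available in \textsc{UIS}, so the repair even destroys the \textsc{Avg-UIS} case you already had.

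No tuning of $m$, the number of copies, or the position of $y$ can rescue this family. Suppose $x$ shares its coalition with a set $Y$ whose members all have one value at distance $d_A$ from $v_x$. Suppose it jumps to a coalition $Z$ whose members all have one value at distance $d_B$. Under \textsc{Avg}, $\{x\}\cup Y$ has social cost $d_A+|Y|\cdot\frac{d_A}{|Y|}=2d_A$, $Z\cup\{x\}$ has social cost $2d_B$, and $Y$, $Z$ alone cost $0$ (always in \textsc{HIS}, and in \textsc{UIS} whenever $|Y|,|Z|\ge 2$). So the jump changes the social cost by exactly $2(d_B-d_A)$: it is improving for $x$ if and only if it strictly lowers the social cost. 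Hence $\{0^m\},\{x,y,\dots,y\}$ can never be an optimum that $x$ wants to leave. If instead $y$ sits at $1+\delta$ with small $\delta$, then $x$ pays $\delta<1$ and has no incentive to jump at all.

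The missing idea is heterogeneity. The optimum must contain a coalition with several distinct values, so that the deviator's departure relieves its old partners less than it burdens its new ones. The paper takes $k=2$ and values $(1,1,1,4,6,8,8)$, with optimum $\{\{1,1,1\},\{4,6,8,8\}\}$ of \textsc{Avg} social cost $\frac{28}{3}$. Agent $4$ improves from $\frac{10}{3}$ to $3$ by joining the $1$'s, yet $\{\{1,1,1,4\},\{6,8,8\}\}$ costs $10$. The $1$'s now pay $3$ in total, while $6,8,8$ save only $2$, because their remaining distances are averaged over fewer partners. The same instance works for \textsc{Max}: the optimum costs $14$ and agent $4$ goes from $4$ to $3$. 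No relevant structure uses singletons, so it covers \textsc{HIS} and \textsc{UIS} simultaneously. Verifying optimality is a short enumeration: once the $1$'s are kept apart from the $8$'s, only four placements of the agents $4$ and $6$ remain.
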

\begin{proof}[Proof sketch]
    Consider the following game instance with ${n=7}$ agents, $k=2$ coalitions and values allowing the coalition structure
        $$\mathcal{C}^* \coloneqq\big\{\{1,1,1\},\{4,6,8,8\}\big\}.$$
    In \Cref{appx:sub:PoS}, we prove for both cost functions that this coalition structure $\mathcal{C}^*$ indeed (1) is the social optimum and (2) has an improving jump for the agent with value $4$.
\end{proof}

\begin{restatable}{lemma}{lemmaPoSCutoffJump}\label{lemma:pos-cutoff-jump}
    The PoS of nice \textsc{Cutoff$_\lambda$-Jump} games is exactly $1$, while the PoS over all other \textsc{Cutoff$_\lambda$-Jump} games is greater than $1$.
\end{restatable}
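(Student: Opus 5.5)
The proof has two parts: nice instances have an optimum of social cost $0$ that is stable, and non-nice instances need an explicit bad example.

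\emph{Nice instances.} Suppose $\{v_i\}_{i\in[n]}$ can be covered by $k' \le k$ $\lambda$-blocks $B_1,\dots,B_{k'}$, chosen w.l.o.g.\ with left endpoints sorted increasingly. Assign each agent to the first block containing its value. Drop empty coalitions and, if necessary, leave the remaining coalitions empty. The resulting coalition structure $\mathcal{C}$ is sorted. Any two agents in the same coalition lie in a common $\lambda$-block, so their distance is at most $\lambda$ and they are friends. Hence every agent has \textsc{Cutoff}$_\lambda$ cost $0$, and $\SC(\mathcal{C})=0=\SC(\OPT)$.

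$\mathcal{C}$ is also jump stable. No agent can strictly decrease a cost that is already the minimum $0$. This holds in both HIS and UIS: in UIS no agent sits in a singleton with infinite cost unless a block contains a single agent, and such an agent could instead be merged into a neighboring coalition. To avoid this, I would treat UIS separately and require covering blocks that contain at least two agents, or note that this is only an issue in degenerate cases. Hence the PoS is exactly $1$ for nice instances. More simply, any structure with social cost $0$ is automatically a jump equilibrium, because costs are non-negative. Since nice means exactly that a cost-$0$ structure with at most $k$ coalitions exists, the upper bound follows immediately.

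\emph{Non-nice instances.} Here the claim is that the PoS exceeds $1$, i.e.\ that there is some non-nice instance whose optimum is unstable. I would construct a small instance, say $k=2$ with $\lambda=1$ and values clustered in three groups such as $0,0,\dots$, $1.5$, $3,3,\dots$. Three blocks are needed, so the instance is not nice. By a careful choice of multiplicities, the optimum attaches the lone middle agent to one side, for instance to the larger cluster so as to minimise the total fraction of enemies. In that position the middle agent is an enemy to all others, with cost $1$, and it prefers to jump to the other coalition, where its cost is also $1$, or to join in a way that lowers its cost. The intended design is this: the optimum places the middle agent where it personally has high cost but hurts few others, while a jump would strictly reduce its own cost. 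This yields the instability. I would then check that every equilibrium has strictly larger social cost by enumerating the possible placements of the middle agent together with any sorted splits, using \Cref{lemma:left-right-improving-jump} to restrict attention to sorted structures where possible.

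The main obstacle is this second step: tuning the multiplicities so that the unique optimum is unstable, and then bounding all equilibria away from it. I would try values such as $\{0,0.9\}$ style overlaps, where the middle agent is a friend to part of each cluster, so that the fraction computations create the gap.
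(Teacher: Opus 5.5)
Your treatment of nice instances is the paper's argument. You partition the agents along a cover by at most $k$ $\lambda$-blocks, observe that all co-members are friends so the social cost is $0$, and note that a cost-$0$ structure admits no strictly improving jump. Your \textsc{UIS} caveat about singleton blocks is fair, and the paper does not treat it separately either. But ``require blocks with at least two agents'' changes the hypothesis rather than settling that case.

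The genuine gap is the non-nice part: no instance is actually produced, and the one you sketch provably fails. The lone agent $m$ at $1.5$ is at distance more than $\lambda=1$ from everyone. So its cost is $1$ in every nonempty coalition, and it never has a strictly improving jump; you notice this yourself when its cost in the other coalition ``is also $1$''. No tuning of multiplicities helps. If $m$ shares a coalition, it pays $1$ and its co-members pay at least $1$ in total. If $m$ is alone, the other coalition holds both clusters and costs $\frac{2ab}{a+b-1}\ge 2$. Hence the optimum is $2$, attained by letting $m$ join one cluster (cost $1+b\cdot\frac1b$). That structure is jump stable (say with clusters of size at least two), because cluster agents would face only enemies in the other coalition. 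So this family has PoS exactly $1$.

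What is missing is a mechanism by which a single jump strictly lowers the jumper's enemy \emph{fraction} while raising the total. The paper gets it from differing denominators. Take $k=2$ and values $0,\frac14\varepsilon,\frac24\varepsilon,\frac24\varepsilon,\lambda+\frac14\varepsilon,\lambda+\frac34\varepsilon,2\lambda+\frac14\varepsilon,2\lambda+\varepsilon$ (not nice). The split into the lower four and upper four agents has social cost $\frac43$. The agent at $\lambda+\frac14\varepsilon$ has exactly one enemy in either coalition, so jumping lowers its cost from $\frac13$ to $\frac14$, yet the resulting structure costs $\frac32$. Your closing idea of partial friendships points this way but is not carried out.

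Most of the paper's work is then proving that this split is the unique optimum over \emph{all} partitions, which forces every equilibrium to be strictly worse. The steps are: no agent may pay $1$; an enemy-count bound excludes coalitions of size $1$ and $2$; then a case analysis on the size of the coalition containing the two rightmost agents. Your plan to restrict the check to sorted structures via \Cref{lemma:left-right-improving-jump} cannot replace this. That lemma only detects instability of a given sorted structure. Moreover, \Cref{lemma:cutoff-jump-unsorted} shows that \textsc{Cutoff$_\lambda$-Jump} games can have only unsorted optima as well as unsorted equilibria. So neither optimality nor the bound on all equilibria can be verified on sorted structures alone.
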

\begin{proof}[Proof sketch]
    For nice \textsc{Cutoff$_\lambda$-Jump} games, the optimum has a social cost of $0$, since the agents can be partitioned into $k$ coalitions based on the unique $\lambda$-blocks by which they are covered. As coalition structures with social cost of $0$ are stable, the PoS is~$1$.

    For the second part, consider a \textsc{Cutoff$_\lambda$-Jump} game instance with $n=8$ agents, $k=2$ coalitions and coalition structure
        $$\mathcal{C}^* \coloneqq \big\{\{0, \frac14\varepsilon, \frac24\varepsilon, \frac24\varepsilon\},\; \{ \underline{\lambda + \frac14\varepsilon}, \lambda + \frac34\varepsilon, 2\lambda + \frac14\varepsilon, 2\lambda+\varepsilon\}\big\}$$
    for some $\varepsilon > 0$. In \Cref{appx:sub:PoS}, we show that this coalition structure $\mathcal{C}^*$ is an optimum and not jump stable, as the underlined agent has an improving jump.
\end{proof}

\section*{Conclusion}
We consider a hedonic games model, where each agent has a fixed value, and evaluates the cost of each coalition by accumulating the values of all other agents in its coalition, using three natural cost functions.
It is clear that in this setting, many more variants could be implemented, e.g. it would also be interesting to explore alternative cost functions, such as vectors of distances, squared distances, or $x$-nearest-neighbor formulations.
Within this setting, we study two stability concepts and show the existence of equilibria and their quality compared to the utilitarian social optimum. 
While unsorted optima exist for most variants, it is yet unclear whether they also exist for \textsc{Avg-Jump} games.

The structural property of being sorted, proposed in our conjecture, is key to understanding how optimal coalition structures behave in \textsc{Avg-Jump} games.
Confirming or disproving this conjecture would therefore be an important step toward obtaining a clearer picture of both equilibrium quality and social optimality, and could guide future algorithmic work.
Still, proving this property is far from easy, even for the simple case of $k=2$. 

Since there is a substantial gap between unbounded PoA and the constant lower bound on PoS, and because the worst-case instances appear somewhat artificial, future work could focus on random or real world instances, or on stability notions allowing more cooperation.
Regarding strong stability, a result by~\citet{MilchtaichStabilitySegregationGroup2002} suggests instances with no equilibria, so models permitting limited but not arbitrary cooperation could be of interest, especially as some of the reported high PoA instances (e.g. \Cref{lemma:PoA-Max-Avg-Jump-Swap}) are clearly not stable if multiple agents could coordinate.

Another promising direction would be to limit the size of coalitions, e.g. no coalition can be larger than $x\in\N$, to prevent the formation of the grand coalition in the \textsc{UIS} setting. Moreover, an unbounded PoA provides limited insight. In particular, for the utility variant of nice \textsc{Cutoff$_\lambda$} games, the price of anarchy becomes bounded. Therefore, analyzing the utility variant may offer additional understanding of the game. Other avenues for research include analyzing the core and the computational complexity of finding an equilibrium.

\begin{acks}

\texttt{This research was (partially) funded by the HPI Research School on Foundations of AI (FAI).}

\end{acks}
\pagebreak

\bibliographystyle{ACM-Reference-Format} 
\bibliography{refs}


\begin{thebibliography}{39}


\ifx \showCODEN    \undefined \def \showCODEN     #1{\unskip}     \fi
\ifx \showDOI      \undefined \def \showDOI       #1{#1}\fi
\ifx \showISBNx    \undefined \def \showISBNx     #1{\unskip}     \fi
\ifx \showISBNxiii \undefined \def \showISBNxiii  #1{\unskip}     \fi
\ifx \showISSN     \undefined \def \showISSN      #1{\unskip}     \fi
\ifx \showLCCN     \undefined \def \showLCCN      #1{\unskip}     \fi
\ifx \shownote     \undefined \def \shownote      #1{#1}          \fi
\ifx \showarticletitle \undefined \def \showarticletitle #1{#1}   \fi
\ifx \showURL      \undefined \def \showURL       {\relax}        \fi
\providecommand\bibfield[2]{#2}
\providecommand\bibinfo[2]{#2}
\providecommand\natexlab[1]{#1}
\providecommand\showeprint[2][]{arXiv:#2}

\bibitem[\protect\citeauthoryear{Aamand, Chen, Li, Silwal, Sukprasert,
  Vakilian, and Zhang}{Aamand et~al\mbox{.}}{2023}]%
        {Aamand2023}
\bibfield{author}{\bibinfo{person}{Anders Aamand}, \bibinfo{person}{Justin~Y.
  Chen}, \bibinfo{person}{Allen Li}, \bibinfo{person}{Sandeep Silwal},
  \bibinfo{person}{Pattara Sukprasert}, \bibinfo{person}{Ali Vakilian}, {and}
  \bibinfo{person}{Fred Zhang}.} \bibinfo{year}{2023}\natexlab{}.
\newblock \showarticletitle{Constant approximation for individual preference
  stable clustering}. In \bibinfo{booktitle}{\emph{NeuIPS 2023}}. Article
  \bibinfo{articleno}{1892}, \bibinfo{numpages}{16}~pages.
\newblock
\urldef\tempurl%
\url{http://papers.nips.cc/paper\_files/paper/2023/hash/881259965dacb9f42967aae84a157283-Abstract-Conference.html}
\showURL{%
\tempurl}


\bibitem[\protect\citeauthoryear{Agarwal, Elkind, Gan, Igarashi, Suksompong,
  and Voudouris}{Agarwal et~al\mbox{.}}{2021}]%
        {AgarwalSchellingGamesGraphs2021}
\bibfield{author}{\bibinfo{person}{Aishwarya Agarwal}, \bibinfo{person}{Edith
  Elkind}, \bibinfo{person}{Jiarui Gan}, \bibinfo{person}{Ayumi Igarashi},
  \bibinfo{person}{Warut Suksompong}, {and} \bibinfo{person}{Alexandros~A.
  Voudouris}.} \bibinfo{year}{2021}\natexlab{}.
\newblock \showarticletitle{Schelling Games on Graphs}.
\newblock \bibinfo{journal}{\emph{Artif. Intell.}}  \bibinfo{volume}{301}
  (\bibinfo{year}{2021}), \bibinfo{pages}{103576}.
\newblock
\urldef\tempurl%
\url{https://doi.org/10.1016/J.ARTINT.2021.103576}
\showDOI{\tempurl}


\bibitem[\protect\citeauthoryear{Ahmadi, Awasthi, Khuller, Kleindessner,
  Morgenstern, Sukprasert, and Vakilian}{Ahmadi et~al\mbox{.}}{2022}]%
        {Ahmadi2022}
\bibfield{author}{\bibinfo{person}{Saba Ahmadi}, \bibinfo{person}{Pranjal
  Awasthi}, \bibinfo{person}{Samir Khuller}, \bibinfo{person}{Matth{\"{a}}us
  Kleindessner}, \bibinfo{person}{Jamie Morgenstern}, \bibinfo{person}{Pattara
  Sukprasert}, {and} \bibinfo{person}{Ali Vakilian}.}
  \bibinfo{year}{2022}\natexlab{}.
\newblock \showarticletitle{Individual Preference Stability for Clustering}. In
  \bibinfo{booktitle}{\emph{{ICML} 2022}}. \bibinfo{pages}{197--246}.
\newblock
\urldef\tempurl%
\url{https://proceedings.mlr.press/v162/ahmadi22a.html}
\showURL{%
\tempurl}


\bibitem[\protect\citeauthoryear{Alcalde}{Alcalde}{1994}]%
        {AlcaldeExchangeproofnessDivorceproofnessStability1994}
\bibfield{author}{\bibinfo{person}{Jos{\'e} Alcalde}.}
  \bibinfo{year}{1994}\natexlab{}.
\newblock \showarticletitle{Exchange-Proofness or Divorce-Proofness?
  {{Stability}} in One-Sided Matching Markets}.
\newblock \bibinfo{journal}{\emph{Economic Design}} \bibinfo{volume}{1},
  \bibinfo{number}{1} (\bibinfo{year}{1994}).
\newblock
\urldef\tempurl%
\url{https://doi.org/10.1007/BF02716626}
\showDOI{\tempurl}


\bibitem[\protect\citeauthoryear{Anshelevich, Dasgupta, Kleinberg, Tardos,
  Wexler, and Roughgarden}{Anshelevich et~al\mbox{.}}{2008}]%
        {AnshelevichPriceStabilityNetwork2008}
\bibfield{author}{\bibinfo{person}{Elliot Anshelevich},
  \bibinfo{person}{Anirban Dasgupta}, \bibinfo{person}{Jon~M. Kleinberg},
  \bibinfo{person}{{\'E}va Tardos}, \bibinfo{person}{Tom Wexler}, {and}
  \bibinfo{person}{Tim Roughgarden}.} \bibinfo{year}{2008}\natexlab{}.
\newblock \showarticletitle{The Price of Stability for Network Design with Fair
  Cost Allocation}.
\newblock \bibinfo{journal}{\emph{SIAM J. Comput.}} \bibinfo{volume}{38},
  \bibinfo{number}{4} (\bibinfo{year}{2008}), \bibinfo{pages}{1602--1623}.
\newblock
\urldef\tempurl%
\url{https://doi.org/10.1137/070680096}
\showDOI{\tempurl}


\bibitem[\protect\citeauthoryear{Aziz, Brandl, Brandt, Harrenstein, Olsen, and
  Peters}{Aziz et~al\mbox{.}}{2019}]%
        {AzizFractionalHedonicGames2019}
\bibfield{author}{\bibinfo{person}{Haris Aziz}, \bibinfo{person}{Florian
  Brandl}, \bibinfo{person}{Felix Brandt}, \bibinfo{person}{Paul Harrenstein},
  \bibinfo{person}{Martin Olsen}, {and} \bibinfo{person}{Dominik Peters}.}
  \bibinfo{year}{2019}\natexlab{}.
\newblock \showarticletitle{Fractional {{Hedonic Games}}}.
\newblock \bibinfo{journal}{\emph{ACM Trans. Econ. Comput.}}
  \bibinfo{volume}{7}, \bibinfo{number}{2} (\bibinfo{date}{May}
  \bibinfo{year}{2019}), \bibinfo{pages}{1--29}.
\newblock
\showISSN{2167-8375, 2167-8383}
\urldef\tempurl%
\url{https://doi.org/10.1145/3327970}
\showDOI{\tempurl}


\bibitem[\protect\citeauthoryear{Aziz and Goldwaser}{Aziz and
  Goldwaser}{2017}]%
        {AzizCoalitionalExchangeStable2017}
\bibfield{author}{\bibinfo{person}{Haris Aziz} {and} \bibinfo{person}{Adrian
  Goldwaser}.} \bibinfo{year}{2017}\natexlab{}.
\newblock \showarticletitle{Coalitional {{Exchange Stable Matchings}} in
  {{Marriage}} and {{Roommate Markets}}}. In
  \bibinfo{booktitle}{\emph{{{AAMAS}} 2017}}. \bibinfo{pages}{1475--1477}.
\newblock
\urldef\tempurl%
\url{https://dl.acm.org/doi/10.5555/3091125.3091334}
\showURL{%
\tempurl}


\bibitem[\protect\citeauthoryear{Aziz, Harrenstein, and Pyrga}{Aziz
  et~al\mbox{.}}{2012}]%
        {AzizIndividualbasedStabilityHedonic2012}
\bibfield{author}{\bibinfo{person}{Haris Aziz}, \bibinfo{person}{Paul
  Harrenstein}, {and} \bibinfo{person}{Evangelia Pyrga}.}
  \bibinfo{year}{2012}\natexlab{}.
\newblock \showarticletitle{Individual-Based Stability in Hedonic Games
  Depending on the Best or Worst Players}. In
  \bibinfo{booktitle}{\emph{{{AAMAS}} 2012}}. \bibinfo{pages}{1311--1312}.
\newblock
\urldef\tempurl%
\url{https://dl.acm.org/doi/10.5555/2343896.2343979}
\showURL{%
\tempurl}


\bibitem[\protect\citeauthoryear{Aziz and Savani}{Aziz and Savani}{2016}]%
        {AzizHedonicGames2016}
\bibfield{author}{\bibinfo{person}{Haris Aziz} {and} \bibinfo{person}{Rahul
  Savani}.} \bibinfo{year}{2016}\natexlab{}.
\newblock \showarticletitle{Hedonic Games}.
\newblock In \bibinfo{booktitle}{\emph{Handbook of {{Computational Social
  Choice}}}}. \bibinfo{pages}{356--376}.
\newblock
\showISBNx{1-107-06043-5}
\urldef\tempurl%
\url{https://doi.org/10.1017/cbo9781107446984.016}
\showDOI{\tempurl}


\bibitem[\protect\citeauthoryear{Banerjee, Konishi, and S{\"o}nmez}{Banerjee
  et~al\mbox{.}}{2001}]%
        {BanerjeeCoreSimpleCoalition2001}
\bibfield{author}{\bibinfo{person}{Suryapratim Banerjee},
  \bibinfo{person}{Hideo Konishi}, {and} \bibinfo{person}{Tayfun S{\"o}nmez}.}
  \bibinfo{year}{2001}\natexlab{}.
\newblock \showarticletitle{Core in a Simple Coalition Formation Game}.
\newblock \bibinfo{journal}{\emph{Soc. Choice Welf.}} \bibinfo{volume}{18},
  \bibinfo{number}{1} (\bibinfo{year}{2001}), \bibinfo{pages}{135--153}.
\newblock
\urldef\tempurl%
\url{https://doi.org/10.1007/S003550000067}
\showDOI{\tempurl}


\bibitem[\protect\citeauthoryear{Bil{\`o}, Bil{\`o}, D{\"o}ring, Lenzner,
  Molitor, and Schmidt}{Bil{\`o} et~al\mbox{.}}{2023}]%
        {BiloSchellingGamesContinuous2023}
\bibfield{author}{\bibinfo{person}{Davide Bil{\`o}}, \bibinfo{person}{Vittorio
  Bil{\`o}}, \bibinfo{person}{Michelle D{\"o}ring}, \bibinfo{person}{Pascal
  Lenzner}, \bibinfo{person}{Louise Molitor}, {and} \bibinfo{person}{Jonas
  Schmidt}.} \bibinfo{year}{2023}\natexlab{}.
\newblock \showarticletitle{Schelling {{Games}} with {{Continuous Types}}}. In
  \bibinfo{booktitle}{\emph{{{IJCAI}} 23}}. \bibinfo{pages}{2520--2527}.
\newblock
\showISBNx{978-1-956792-03-4}
\urldef\tempurl%
\url{https://doi.org/10.24963/ijcai.2023/280}
\showDOI{\tempurl}


\bibitem[\protect\citeauthoryear{Bil{\`o}, Bil{\`o}, Lenzner, and
  Molitor}{Bil{\`o} et~al\mbox{.}}{2022a}]%
        {BiloTopologicalInfluenceLocality2022}
\bibfield{author}{\bibinfo{person}{Davide Bil{\`o}}, \bibinfo{person}{Vittorio
  Bil{\`o}}, \bibinfo{person}{Pascal Lenzner}, {and} \bibinfo{person}{Louise
  Molitor}.} \bibinfo{year}{2022}\natexlab{a}.
\newblock \showarticletitle{Topological Influence and Locality in Swap
  Schelling Games}.
\newblock \bibinfo{journal}{\emph{Auton. Agents Multi Agent Syst.}}
  \bibinfo{volume}{36}, \bibinfo{number}{2} (\bibinfo{year}{2022}),
  \bibinfo{pages}{47}.
\newblock
\urldef\tempurl%
\url{https://doi.org/10.1007/S10458-022-09573-7}
\showDOI{\tempurl}


\bibitem[\protect\citeauthoryear{Bil{\`o}, Monaco, and Moscardelli}{Bil{\`o}
  et~al\mbox{.}}{2022b}]%
        {BiloHedonicGamesFixedsize2022}
\bibfield{author}{\bibinfo{person}{Vittorio Bil{\`o}},
  \bibinfo{person}{Gianpiero Monaco}, {and} \bibinfo{person}{Luca
  Moscardelli}.} \bibinfo{year}{2022}\natexlab{b}.
\newblock \showarticletitle{Hedonic Games with Fixed-Size Coalitions}. In
  \bibinfo{booktitle}{\emph{{{AAAI}} 2022}}. \bibinfo{pages}{9287--9295}.
\newblock
\urldef\tempurl%
\url{https://doi.org/10.1609/AAAI.V36I9.21156}
\showDOI{\tempurl}


\bibitem[\protect\citeauthoryear{Bogomolnaia and Jackson}{Bogomolnaia and
  Jackson}{2002}]%
        {BogomolnaiaStabilityHedonicCoalition2002}
\bibfield{author}{\bibinfo{person}{Anna Bogomolnaia} {and}
  \bibinfo{person}{Matthew~O. Jackson}.} \bibinfo{year}{2002}\natexlab{}.
\newblock \showarticletitle{The Stability of Hedonic Coalition Structures}.
\newblock \bibinfo{journal}{\emph{Games Econ. Behav.}} \bibinfo{volume}{38},
  \bibinfo{number}{2} (\bibinfo{year}{2002}), \bibinfo{pages}{201--230}.
\newblock
\urldef\tempurl%
\url{https://doi.org/10.1006/game.2001.0877}
\showDOI{\tempurl}


\bibitem[\protect\citeauthoryear{Br{\^a}nzei and Larson}{Br{\^a}nzei and
  Larson}{2011}]%
        {BranzeiSocialDistanceGames2011}
\bibfield{author}{\bibinfo{person}{Simina Br{\^a}nzei} {and}
  \bibinfo{person}{Kate Larson}.} \bibinfo{year}{2011}\natexlab{}.
\newblock \showarticletitle{Social Distance Games}. In
  \bibinfo{booktitle}{\emph{{{IJCAI}} 2011}}. \bibinfo{pages}{91--96}.
\newblock
\urldef\tempurl%
\url{https://doi.org/10.5591/978-1-57735-516-8/IJCAI11-027}
\showDOI{\tempurl}


\bibitem[\protect\citeauthoryear{Cechl{\'a}rov{\'a} and
  Hajdukov{\'a}}{Cechl{\'a}rov{\'a} and Hajdukov{\'a}}{2004}]%
        {CechlarovaStablePartitionsWpreferences2004}
\bibfield{author}{\bibinfo{person}{Katar{\'\i}na Cechl{\'a}rov{\'a}} {and}
  \bibinfo{person}{Jana Hajdukov{\'a}}.} \bibinfo{year}{2004}\natexlab{}.
\newblock \showarticletitle{Stable Partitions with {{W-preferences}}}.
\newblock \bibinfo{journal}{\emph{Discret. Appl. Math.}} \bibinfo{volume}{138},
  \bibinfo{number}{3} (\bibinfo{year}{2004}), \bibinfo{pages}{333--347}.
\newblock
\urldef\tempurl%
\url{https://doi.org/10.1016/S0166-218X(03)00464-5}
\showDOI{\tempurl}


\bibitem[\protect\citeauthoryear{Cechl{\'a}rov{\'a} and
  {Romero-Medina}}{Cechl{\'a}rov{\'a} and {Romero-Medina}}{2001}]%
        {CechlarovaStabilityCoalitionFormation2001}
\bibfield{author}{\bibinfo{person}{Katar{\'\i}na Cechl{\'a}rov{\'a}} {and}
  \bibinfo{person}{Antonio {Romero-Medina}}.} \bibinfo{year}{2001}\natexlab{}.
\newblock \showarticletitle{Stability in Coalition Formation Games}.
\newblock \bibinfo{journal}{\emph{Int. J. Game Theory}} \bibinfo{volume}{29},
  \bibinfo{number}{4} (\bibinfo{date}{May} \bibinfo{year}{2001}),
  \bibinfo{pages}{487--494}.
\newblock
\showISSN{0020-7276, 1432-1270}
\urldef\tempurl%
\url{https://doi.org/10.1007/s001820000053}
\showDOI{\tempurl}


\bibitem[\protect\citeauthoryear{Chauhan, Lenzner, and Molitor}{Chauhan
  et~al\mbox{.}}{2018}]%
        {ChauhanSchellingSegregationStrategic2018}
\bibfield{author}{\bibinfo{person}{Ankit Chauhan}, \bibinfo{person}{Pascal
  Lenzner}, {and} \bibinfo{person}{Louise Molitor}.}
  \bibinfo{year}{2018}\natexlab{}.
\newblock \showarticletitle{Schelling Segregation with Strategic Agents}. In
  \bibinfo{booktitle}{\emph{{{SAGT}} 2018}}, Vol.~\bibinfo{volume}{11059}.
  \bibinfo{pages}{137--149}.
\newblock
\urldef\tempurl%
\url{https://doi.org/10.1007/978-3-319-99660-8\_13}
\showDOI{\tempurl}


\bibitem[\protect\citeauthoryear{Cseh, Fleiner, and Harj{\'a}n}{Cseh
  et~al\mbox{.}}{2019}]%
        {CsehParetoOptimalCoalitions2019}
\bibfield{author}{\bibinfo{person}{{\'A}gnes Cseh}, \bibinfo{person}{Tam{\'a}s
  Fleiner}, {and} \bibinfo{person}{Petra Harj{\'a}n}.}
  \bibinfo{year}{2019}\natexlab{}.
\newblock \showarticletitle{Pareto {{Optimal Coalitions}} of {{Fixed Size}}}.
\newblock \bibinfo{journal}{\emph{jMID}} \bibinfo{volume}{4},
  \bibinfo{number}{1} (\bibinfo{year}{2019}), \bibinfo{pages}{87--108}.
\newblock
\showISSN{2399844X, 23998458}
\urldef\tempurl%
\url{https://doi.org/10.22574/jmid.2019.11.003}
\showDOI{\tempurl}


\bibitem[\protect\citeauthoryear{Damamme, Beynier, Chevaleyre, and
  Maudet}{Damamme et~al\mbox{.}}{2015}]%
        {DamammePowerSwapDeals2015}
\bibfield{author}{\bibinfo{person}{Anastasia Damamme},
  \bibinfo{person}{Aur{\'e}lie Beynier}, \bibinfo{person}{Yann Chevaleyre},
  {and} \bibinfo{person}{Nicolas Maudet}.} \bibinfo{year}{2015}\natexlab{}.
\newblock \showarticletitle{The Power of Swap Deals in Distributed Resource
  Allocation}. In \bibinfo{booktitle}{\emph{{{AAMAS}} 2015}}.
  \bibinfo{pages}{625--633}.
\newblock
\showISBNx{978-1-4503-3413-6}
\urldef\tempurl%
\url{https://dl.acm.org/doi/10.5555/2772879.2773235}
\showURL{%
\tempurl}


\bibitem[\protect\citeauthoryear{Dimitrov, Borm, Hendrickx, and Sung}{Dimitrov
  et~al\mbox{.}}{2006}]%
        {DimitrovSimplePrioritiesCore2006}
\bibfield{author}{\bibinfo{person}{Dinko Dimitrov}, \bibinfo{person}{Peter
  Borm}, \bibinfo{person}{Ruud Hendrickx}, {and} \bibinfo{person}{Shao~Chin
  Sung}.} \bibinfo{year}{2006}\natexlab{}.
\newblock \showarticletitle{Simple Priorities and Core Stability in Hedonic
  Games}.
\newblock \bibinfo{journal}{\emph{Soc. Choice Welf.}} \bibinfo{volume}{26},
  \bibinfo{number}{2} (\bibinfo{year}{2006}), \bibinfo{pages}{421--433}.
\newblock
\urldef\tempurl%
\url{https://doi.org/10.1007/S00355-006-0104-4}
\showDOI{\tempurl}


\bibitem[\protect\citeauthoryear{Dr{\`e}ze and Greenberg}{Dr{\`e}ze and
  Greenberg}{1980}]%
        {DrezeHedonicCoalitionsOptimality1980}
\bibfield{author}{\bibinfo{person}{Jacques~H Dr{\`e}ze} {and}
  \bibinfo{person}{Joseph Greenberg}.} \bibinfo{year}{1980}\natexlab{}.
\newblock \showarticletitle{Hedonic Coalitions: {{Optimality}} and
  {{Stability}}}.
\newblock \bibinfo{journal}{\emph{Econometrica}} \bibinfo{volume}{48},
  \bibinfo{number}{4} (\bibinfo{year}{1980}), \bibinfo{pages}{987--1003}.
\newblock
\showISSN{00129682, 14680262}
\urldef\tempurl%
\url{https://doi.org/10.2307/1912943}
\showDOI{\tempurl}
\showeprint[jstor]{1912943}


\bibitem[\protect\citeauthoryear{Echzell, Friedrich, Lenzner, Molitor, Pappik,
  Sch{\"o}ne, Sommer, and Stangl}{Echzell et~al\mbox{.}}{2019}]%
        {EchzellConvergenceHardnessStrategic2019}
\bibfield{author}{\bibinfo{person}{Hagen Echzell}, \bibinfo{person}{Tobias
  Friedrich}, \bibinfo{person}{Pascal Lenzner}, \bibinfo{person}{Louise
  Molitor}, \bibinfo{person}{Marcus Pappik}, \bibinfo{person}{Friedrich
  Sch{\"o}ne}, \bibinfo{person}{Fabian Sommer}, {and} \bibinfo{person}{David
  Stangl}.} \bibinfo{year}{2019}\natexlab{}.
\newblock \showarticletitle{Convergence and Hardness of Strategic Schelling
  Segregation}. In \bibinfo{booktitle}{\emph{{{WINE}} 2019}},
  Vol.~\bibinfo{volume}{11920}. \bibinfo{pages}{156--170}.
\newblock
\urldef\tempurl%
\url{https://doi.org/10.1007/978-3-030-35389-6\_12}
\showDOI{\tempurl}


\bibitem[\protect\citeauthoryear{Feldman, {Lewin-Eytan}, and Naor}{Feldman
  et~al\mbox{.}}{2015}]%
        {FeldmanHedonicClusteringGames2015}
\bibfield{author}{\bibinfo{person}{Moran Feldman}, \bibinfo{person}{Liane
  {Lewin-Eytan}}, {and} \bibinfo{person}{Joseph~(Seffi) Naor}.}
  \bibinfo{year}{2015}\natexlab{}.
\newblock \showarticletitle{Hedonic {{Clustering Games}}}.
\newblock \bibinfo{journal}{\emph{ACM Trans. Parallel Comput.}}
  \bibinfo{volume}{2}, \bibinfo{number}{1} (\bibinfo{date}{May}
  \bibinfo{year}{2015}), \bibinfo{pages}{1--48}.
\newblock
\showISSN{2329-4949, 2329-4957}
\urldef\tempurl%
\url{https://doi.org/10.1145/2742345}
\showDOI{\tempurl}


\bibitem[\protect\citeauthoryear{Flammini, Kodric, Olsen, and
  Varricchio}{Flammini et~al\mbox{.}}{2021}]%
        {FlamminiDistanceHedonicGames2021}
\bibfield{author}{\bibinfo{person}{Michele Flammini}, \bibinfo{person}{Bojana
  Kodric}, \bibinfo{person}{Martin Olsen}, {and} \bibinfo{person}{Giovanna
  Varricchio}.} \bibinfo{year}{2021}\natexlab{}.
\newblock \showarticletitle{Distance {{Hedonic Games}}}.
\newblock In \bibinfo{booktitle}{\emph{{{SOFSEM}} 2021: {{Theory}} and
  {{Practice}} of {{Computer Science}}}}. Vol.~\bibinfo{volume}{12607}.
  \bibinfo{pages}{159--174}.
\newblock
\showISBNx{978-3-030-67730-5 978-3-030-67731-2}
\urldef\tempurl%
\url{https://doi.org/10.1007/978-3-030-67731-2_12}
\showDOI{\tempurl}


\bibitem[\protect\citeauthoryear{Gr{\o}nlund, Larsen, Mathiasen, and
  Nielsen}{Gr{\o}nlund et~al\mbox{.}}{2017}]%
        {GronlundFastExactKMeans2017}
\bibfield{author}{\bibinfo{person}{Allan Gr{\o}nlund},
  \bibinfo{person}{Kasper~Green Larsen}, \bibinfo{person}{Alexander Mathiasen},
  {and} \bibinfo{person}{Jesper~Sindahl Nielsen}.}
  \bibinfo{year}{2017}\natexlab{}.
\newblock \showarticletitle{Fast Exact K-{{Means}}, k-{{Medians}} and Bregman
  Divergence Clustering in {{1D}}}.
\newblock \bibinfo{journal}{\emph{CoRR}} (\bibinfo{year}{2017}).
\newblock
\urldef\tempurl%
\url{https://doi.org/10.48550/arXiv.1701.07204}
\showDOI{\tempurl}


\bibitem[\protect\citeauthoryear{Hajdukov{\'a}}{Hajdukov{\'a}}{2006}]%
        {HajdukovaCoalitionFormationGames2006}
\bibfield{author}{\bibinfo{person}{Jana Hajdukov{\'a}}.}
  \bibinfo{year}{2006}\natexlab{}.
\newblock \showarticletitle{Coalition Formation Games: A Survey}.
\newblock \bibinfo{journal}{\emph{IGTR}} \bibinfo{volume}{8},
  \bibinfo{number}{4} (\bibinfo{year}{2006}), \bibinfo{pages}{613--641}.
\newblock
\urldef\tempurl%
\url{https://doi.org/10.1142/S0219198906001144}
\showDOI{\tempurl}


\bibitem[\protect\citeauthoryear{Kanellopoulos, Kyropoulou, and
  Voudouris}{Kanellopoulos et~al\mbox{.}}{2021}]%
        {KanellopoulosModifiedSchellingGames2021}
\bibfield{author}{\bibinfo{person}{Panagiotis Kanellopoulos},
  \bibinfo{person}{Maria Kyropoulou}, {and} \bibinfo{person}{Alexandros~A.
  Voudouris}.} \bibinfo{year}{2021}\natexlab{}.
\newblock \showarticletitle{Modified Schelling Games}.
\newblock \bibinfo{journal}{\emph{Theor. Comput. Sci.}}  \bibinfo{volume}{880}
  (\bibinfo{year}{2021}), \bibinfo{pages}{1--19}.
\newblock
\urldef\tempurl%
\url{https://doi.org/10.1016/J.TCS.2021.05.032}
\showDOI{\tempurl}


\bibitem[\protect\citeauthoryear{Kerkmann, Lang, Rey, Rothe, Schadrack, and
  Schend}{Kerkmann et~al\mbox{.}}{2020}]%
        {KerkmannHedonicGamesOrdinal2020}
\bibfield{author}{\bibinfo{person}{Anna~Maria Kerkmann},
  \bibinfo{person}{J{\'e}r{\^o}me Lang}, \bibinfo{person}{Anja Rey},
  \bibinfo{person}{J{\"o}rg Rothe}, \bibinfo{person}{Hilmar Schadrack}, {and}
  \bibinfo{person}{Lena Schend}.} \bibinfo{year}{2020}\natexlab{}.
\newblock \showarticletitle{Hedonic {{Games}} with {{Ordinal Preferences}} and
  {{Thresholds}}}.
\newblock \bibinfo{journal}{\emph{J. Artif. Intell. Res.}}
  \bibinfo{volume}{67} (\bibinfo{date}{April} \bibinfo{year}{2020}),
  \bibinfo{pages}{705--756}.
\newblock
\showISSN{1076-9757}
\urldef\tempurl%
\url{https://doi.org/10.1613/jair.1.11531}
\showDOI{\tempurl}


\bibitem[\protect\citeauthoryear{Koutsoupias and Papadimitriou}{Koutsoupias and
  Papadimitriou}{2009}]%
        {KoutsoupiasWorstcaseEquilibria2009}
\bibfield{author}{\bibinfo{person}{Elias Koutsoupias} {and}
  \bibinfo{person}{Christos~H. Papadimitriou}.}
  \bibinfo{year}{2009}\natexlab{}.
\newblock \showarticletitle{Worst-Case Equilibria}.
\newblock \bibinfo{journal}{\emph{Comput. Sci. Rev.}} \bibinfo{volume}{3},
  \bibinfo{number}{2} (\bibinfo{year}{2009}), \bibinfo{pages}{65--69}.
\newblock
\urldef\tempurl%
\url{https://doi.org/10.1016/J.COSREV.2009.04.003}
\showDOI{\tempurl}


\bibitem[\protect\citeauthoryear{Lang, Rey, Rothe, Schadrack, and Schend}{Lang
  et~al\mbox{.}}{2015}]%
        {LangRepresentingSolvingHedonic2015}
\bibfield{author}{\bibinfo{person}{J{\'e}r{\^o}me Lang}, \bibinfo{person}{Anja
  Rey}, \bibinfo{person}{J{\"o}rg Rothe}, \bibinfo{person}{Hilmar Schadrack},
  {and} \bibinfo{person}{Lena Schend}.} \bibinfo{year}{2015}\natexlab{}.
\newblock \showarticletitle{Representing and Solving Hedonic Games with Ordinal
  Preferences and Thresholds}. In \bibinfo{booktitle}{\emph{{{AAMAS}} 2015}}.
  \bibinfo{pages}{1229--1237}.
\newblock
\showISBNx{978-1-4503-3413-6}
\urldef\tempurl%
\url{https://dl.acm.org/doi/10.5555/2772879.2773308}
\showURL{%
\tempurl}


\bibitem[\protect\citeauthoryear{Li}{Li}{2021}]%
        {LiFractionalHedonicGames2021}
\bibfield{author}{\bibinfo{person}{Fu Li}.} \bibinfo{year}{2021}\natexlab{}.
\newblock \showarticletitle{Fractional {{Hedonic Games With}} a {{Limited
  Number}} of {{Coalitions}}}. In \bibinfo{booktitle}{\emph{{{ICTCS}} 2021}},
  Vol.~\bibinfo{volume}{3072}. \bibinfo{pages}{205--218}.
\newblock
\urldef\tempurl%
\url{https://ceur-ws.org/Vol-3072/paper18.pdf}
\showURL{%
\tempurl}


\bibitem[\protect\citeauthoryear{Milchtaich and Winter}{Milchtaich and
  Winter}{2002}]%
        {MilchtaichStabilitySegregationGroup2002}
\bibfield{author}{\bibinfo{person}{Igal Milchtaich} {and} \bibinfo{person}{Eyal
  Winter}.} \bibinfo{year}{2002}\natexlab{}.
\newblock \showarticletitle{Stability and {{Segregation}} in {{Group
  Formation}}}.
\newblock \bibinfo{journal}{\emph{Games Econ. Behav.}} \bibinfo{volume}{38},
  \bibinfo{number}{2} (\bibinfo{date}{Feb.} \bibinfo{year}{2002}),
  \bibinfo{pages}{318--346}.
\newblock
\showISSN{08998256}
\urldef\tempurl%
\url{https://doi.org/10.1006/game.2001.0878}
\showDOI{\tempurl}


\bibitem[\protect\citeauthoryear{Miller}{Miller}{1980}]%
        {MillerCoalitionFormationCharacteristic1980}
\bibfield{author}{\bibinfo{person}{Charles~E Miller}.}
  \bibinfo{year}{1980}\natexlab{}.
\newblock \showarticletitle{Coalition Formation in Characteristic Function
  Games: {{Competitive}} Tests of Three Theories}.
\newblock \bibinfo{journal}{\emph{J. Exp. Soc. Psychol.}} \bibinfo{volume}{16},
  \bibinfo{number}{1} (\bibinfo{year}{1980}), \bibinfo{pages}{61--76}.
\newblock
\showISSN{0022-1031}
\urldef\tempurl%
\url{https://doi.org/10.1016/0022-1031(80)90036-0}
\showDOI{\tempurl}


\bibitem[\protect\citeauthoryear{Monaco, Moscardelli, and Velaj}{Monaco
  et~al\mbox{.}}{2020}]%
        {MonacoStableOutcomesModified2020}
\bibfield{author}{\bibinfo{person}{Gianpiero Monaco}, \bibinfo{person}{Luca
  Moscardelli}, {and} \bibinfo{person}{Yllka Velaj}.}
  \bibinfo{year}{2020}\natexlab{}.
\newblock \showarticletitle{Stable Outcomes in Modified Fractional Hedonic
  Games}.
\newblock \bibinfo{journal}{\emph{Auton. Agents Multi Agent Syst.}}
  \bibinfo{volume}{34}, \bibinfo{number}{1} (\bibinfo{date}{April}
  \bibinfo{year}{2020}), \bibinfo{pages}{4}.
\newblock
\showISSN{1387-2532, 1573-7454}
\urldef\tempurl%
\url{https://doi.org/10.1007/s10458-019-09431-z}
\showDOI{\tempurl}


\bibitem[\protect\citeauthoryear{Monderer and Shapley}{Monderer and
  Shapley}{1996}]%
        {MondererPotentialGames1996}
\bibfield{author}{\bibinfo{person}{Dov Monderer} {and}
  \bibinfo{person}{Lloyd~S. Shapley}.} \bibinfo{year}{1996}\natexlab{}.
\newblock \showarticletitle{Potential {{Games}}}.
\newblock \bibinfo{journal}{\emph{Games Econ. Behav.}} \bibinfo{volume}{14},
  \bibinfo{number}{1} (\bibinfo{year}{1996}), \bibinfo{pages}{124--143}.
\newblock
\showISSN{0899-8256}
\urldef\tempurl%
\url{https://doi.org/10.1006/game.1996.0044}
\showDOI{\tempurl}


\bibitem[\protect\citeauthoryear{Rey and Rey}{Rey and Rey}{2022}]%
        {ReyFENHedonicGamesDistanceBased2022}
\bibfield{author}{\bibinfo{person}{Anja Rey} {and} \bibinfo{person}{Lisa Rey}.}
  \bibinfo{year}{2022}\natexlab{}.
\newblock \showarticletitle{{{FEN-Hedonic Games}} with {{Distance-Based
  Preferences}}}.
\newblock \bibinfo{journal}{\emph{CoRR}} (\bibinfo{date}{Jan.}
  \bibinfo{year}{2022}).
\newblock
\urldef\tempurl%
\url{https://doi.org/10.48550/arXiv.2201.13158}
\showDOI{\tempurl}


\bibitem[\protect\citeauthoryear{Schlueter and Yokoo}{Schlueter and
  Yokoo}{2024}]%
        {SchlueterFriendEnemyorientedHedonic2024}
\bibfield{author}{\bibinfo{person}{{\relax TJ} Schlueter} {and}
  \bibinfo{person}{Makoto Yokoo}.} \bibinfo{year}{2024}\natexlab{}.
\newblock \showarticletitle{Friend- and Enemy-Oriented Hedonic Games with
  Strangers}. In \bibinfo{booktitle}{\emph{{{PRIMA}} 2024}},
  Vol.~\bibinfo{volume}{15395}. \bibinfo{pages}{421--427}.
\newblock
\urldef\tempurl%
\url{https://doi.org/10.1007/978-3-031-77367-9\_31}
\showDOI{\tempurl}


\bibitem[\protect\citeauthoryear{Sless, Hazon, Kraus, and Wooldridge}{Sless
  et~al\mbox{.}}{2018}]%
        {SlessFormingCoalitionsFacilitating2018}
\bibfield{author}{\bibinfo{person}{Liat Sless}, \bibinfo{person}{Noam Hazon},
  \bibinfo{person}{Sarit Kraus}, {and} \bibinfo{person}{Michael Wooldridge}.}
  \bibinfo{year}{2018}\natexlab{}.
\newblock \showarticletitle{Forming k Coalitions and Facilitating Relationships
  in Social Networks}.
\newblock \bibinfo{journal}{\emph{Artif. Intell.}}  \bibinfo{volume}{259}
  (\bibinfo{date}{June} \bibinfo{year}{2018}), \bibinfo{pages}{217--245}.
\newblock
\showISSN{00043702}
\urldef\tempurl%
\url{https://doi.org/10.1016/j.artint.2018.03.004}
\showDOI{\tempurl}


\end{thebibliography}

\clearpage
\appendix 

\section{Omitted Details from Section 1}\label{appx:intro}
\begin{example}[\Cref{exmp:intro} continued]
    We look at the different cost functions separately. For that, let $x$ be the agent with value $7$ in coalition $C_2$. 
    
    With the \textsc{Average} cost function, agent $x$ has a cost of $\frac{|11-7|}{|C_2|-1} = 4$ in its current coalition $C_2$, but a cost of $\frac{3+3+0+0+1}{|C_1|}=1.4$ in coalition $C_1$ and a cost of $\frac{2+2+2}{|C_3|}=2$ in coalition $C_3$. Agent $x$ would thus prefer to jump to coalition $C_1$. Note that the \enquote{$-1$} is only used for the coalition the agent is part of.

    With the \textsc{Maximum} cost function, agent $x$ has a cost of $\max\{|11-7|\} = 4$ in its current coalition $C_2$, but a cost of $\max\{3,3,0,0,1\}=3$ in coalition $C_1$ and a cost of $\max\{2,2,2\}=2$ in coalition $C_3$. Agent $x$ would thus prefer to jump to coalition $C_3$. Note that we do not consider the distance from agent $x$ to itself in its current coalition.

    With the \textsc{Cutoff} cost function, we consider the thresholds $\lambda = 1$ and $\lambda = 2$. The other agent in the current coalition of agent $x$ has a value of $11$ and is thus outside the threshold radius for both $\lambda = 1$ and $\lambda = 2$. Therefore, agent $x$ has a maximum cost of $\frac{1}{|C_2|-1}=1$ in its current coalition $C_2$ in both cases. When assuming $\lambda = 1$, agent $x$ has two (three) agents outside the threshold in coalition $C_1$ ($C_2$). Therefore, the cost of agent $x$ is $\frac{2}{|C_1|}=0.4$ in coalition $C_1$ and $\frac{3}{|C_3|}=1$ in coalition $C_3$, implying that agent $x$ prefers to jump to coalition $C_1$ with $\lambda=1$. For $\lambda=2$, there are suddenly all agents in coalition $C_3$ inside the threshold such that agent $x$ has a minimum cost of $0$ in $C_2$. However, the cost of agent $x$ in coalition $C_1$ is strictly positive as two of the five agents in $C_1$ have a higher distance than $\lambda =2$ to agent $x$. As a result, agent $x$ prefers to jump to coalition $C_3$ with $\lambda=2$.

    Further, with the \textsc{Average} cost function, the agents with values $8$ and $9$ want to jump to each other's coalition, but the agent with value $8$ would not agree to a swap with the agent with value $9$. In their current coalitions, agent $a_8$ has a cost of $\frac{4+4+1+1}{|C_1|-1}=2.5$ and agent $a_9$ has a cost of $\frac{4+4}{|C_3|-1}=4$. However, both agents have lower cost when jumping to each other's coalitions, since agent $a_8$ then has a cost of $\frac{3+3+1}{|C_3|}\approx 2.67$ and agent $a_9$ a cost of $\frac{5+5+2+2+1}{|C_1|}=3$. However, agent $a_8$ would not agree to swap with agent $a_9$ since its cost in coalition $C_3$ would be $\frac{3+3}{|C_3\setminus \{a_9\}|}=3$ which is higher than the cost of $\approx 2.67$ agent $a_8$ has in coalition $C_1$. That shows that stability concepts of \textsc{Jump} and \textsc{Swap} are not necessarily related with \textsc{Average} cost function. Moreover, similar examples can be found for the other cost functions.
\end{example}

\section{Omitted Details from Section 2}\label{appx:swapStability}
\lemmaSwapMFHG*
\begin{proof}[Proof of \Cref{lemma:swap-mfhg}]
Consider any MFHG with agents $[n]$ and valuation functions $v_i\colon [n] \to \mathbb{R}_{\geq 0}$ for all $i \in [n]$ with $v_i(j) = v_j(i)$ for all $i,j \in [n]$. We show that the social costs are a potential function for this game. For that, let $\mathcal{C}$ be a coalition structure in which two agents $a,b \in [n]$ have an improving swap. Let $\mathcal{C}^\prime$ be a coalition structure after agents $a$ and $b$ swapped. We show that $\SW(\mathcal{C}) < \SW(\mathcal{C}^\prime)$.

For the proof, let $S_a \coloneqq |\mathcal{C}(a)| = |\mathcal{C}^\prime(b)|$ and $S_b \coloneqq |\mathcal{C}(b)| = |\mathcal{C}^\prime(a)|$ be the sizes of the two coalitions of $a$ and $b$ before the swap (including $a$ and $b$). Note that both coalitions are not singletons, since no agent would swap into a singleton coalition (to get zero utility) in a setting, where all valuations are non-negative. With that, we can state the difference in the social welfare of the coalition structures $\mathcal{C}$ and $\mathcal{C}^\prime$, which is
\begin{align}\label{eq1:swap-mfhg}
    \SW(\mathcal{C}^\prime) = \SW(\mathcal{C}) &- \left(2\frac{v_a(\mathcal{C}(a)\setminus\{a\})}{S_a-1} + 2\frac{v_b(\mathcal{C}(b)\setminus\{b\})}{S_b-1}\right)\\\nonumber
    &+ \left(2\frac{v_a(\mathcal{C}^\prime(a)\setminus\{a\})}{S_b-1} + 2\frac{v_b(\mathcal{C}^\prime(b)\setminus\{b\})}{S_a-1}\right),
\end{align}
where we use $v_x(N) \coloneqq \sum_{i \in N}v_x(i)$ for all agents $x$ and set of agents $N$. Now, we use the fact that both agents $a$ and $b$ strictly improve their utility by the swap, i.e., 
\begin{align*}
    \frac{v_a(\mathcal{C}(a)\setminus\{a\})}{S_a-1} <& \frac{v_a(\mathcal{C}^\prime(a)\setminus\{a\})}{S_b-1}\\
    \frac{v_b(\mathcal{C}(b)\setminus\{b\})}{S_b-1} <& \frac{v_b(\mathcal{C}^\prime(b)\setminus\{b\})}{S_a-1}.
\end{align*}
Together with \Cref{eq1:swap-mfhg}, we get that $\SW(\mathcal{C}^\prime) -\SW(\mathcal{C}) > 0$.
\end{proof}

\corSwapPotentialGames*
\begin{proof}[Proof of \Cref{cor:swap-potential-games}]
The equilibrium existence for \textsc{Max-Swap} games follows from reducing to the model by~\citet{BiloSchellingGamesContinuous2023}. \\
For \textsc{Avg-} and \textsc{Cutoff$_\lambda$-Swap} games we use \Cref{lemma:swap-mfhg} together with a transformation from the cost variants that we use to the respective utility variants of MFHG. In both cases, this transformation simply subtracts the cost of an agent from the overall maximum cost to get the utility. Note that \Cref{lemma:swap-mfhg} actually only covers the \textsc{UIS} case. However, the proof is the same for the \textsc{HIS} setting, since in both cases, no agent in a singleton coalition will ever be part of a swap.
All three cases employ potential functions to show equilibrium existence, thereby showing the FIP.
\end{proof}

\theoremSwapSortedEqExists*
\begin{proof}[Proof of \Cref{theorem:swap-sorted-eq-exists}]\let\qed\relax
In general, let $(n, (v_i)_{i \in [n]}, k, (k_i)_{i \in [k]})$ be a \textsc{Swap} game instance and $\mathcal{C}\coloneqq \{C_i\}_{i \in [k]}$ some sorted coalition structure. We assume for contradiction that there are two agents $x,y \in [n]$ with (wlog) $v_x < v_y$ that want to swap. In the following, we show independently for all three cost functions that this leads to a contradiction. For that, let $\mathcal{C}^\prime$ be the coalition structure $\mathcal{C}$ after the swap of agents $x$ and $y$, and $X \coloneqq \mathcal{C}(x) \setminus \{x\}$ and $Y \coloneqq \mathcal{C}(y) \setminus \{y\}$ the sets of other agents in the coalitions of agent $x$ and agent $y$ before the swap, respectively.
\end{proof}

\begin{proof}[Proof for \textsc{Max}:]
Let $a$ and $b$ be the minimum and maximum value in the coalition $\mathcal{C}(x)$, and $c$ and $d$ be the minimum and maximum value in the coalition $\mathcal{C}(y)$. Since $\mathcal{C}$ is sorted, we know that
\begin{align}\label{eq1-max:theorem:swap-sorted-eq-exists}
    a \leq v_x \leq b \leq c \leq v_y \leq d.
\end{align}
Further, we can use the fact that agent $x$ strictly decreases its cost by this swap, i.e., 
\begin{align}\label{eq2-max:theorem:swap-sorted-eq-exists}
    \max\{\dist(x,a),\dist(x,b)\} > \max\{\dist(x,c),\dist(x,d)\}.
\end{align}
Using \Cref{eq1-max:theorem:swap-sorted-eq-exists}, we know that $\max\{\dist(x,c),\dist(x,d)\} = \dist(x,d)$. Further, since $v_x \leq b \leq d$, \Cref{eq2-max:theorem:swap-sorted-eq-exists} implies
    $$\dist(x,a) > \dist(x,d).$$
Following the same arguments, $\dist(y,d) > \dist(y,a)$ holds too. With that and, again, \Cref{eq1-max:theorem:swap-sorted-eq-exists}, we can defer the following contradiction
    $$\dist(x,a) > \dist(x,d) \geq \dist(y,d) > \dist(y,a) \geq \dist(x,a).$$
\end{proof}

\begin{proof}[Proof for \textsc{Average}:]
We show that if one agent, wlog $x$, has an incentive to swap with agent $y$, then agents $y$'s cost does not decrease with this swap. For this technical proof, we need a value $z$ that lies between the coalitions of agents $x$ and $y$, i.e., it holds that $\max_{i \in \mathcal{C}(x)}v_i \leq z \leq \min_{i \in \mathcal{C}(y)}v_i$. Further, we partition the coalition of agents $x$ into a sets $X_L$ and $X_R$ of agents with a smaller / higher value than $v_x$. We do the same for the coalition of $y$ to get the sets $Y_L$ and $Y_R$ respectively. By the assumption that agent $x$ wants to swap with $y$, we know that 
\begin{align}\label{eq1-avg:theorem:swap-sorted-eq-exists}
    \frac{\dist(x,X_L)}{|X|} + \frac{\dist(x,X_R)}{|X|} > \dist(x,z) + \frac{\dist(z,Y)}{|Y|},
\end{align}
where $\dist(a,S) \coloneqq \sum_{i \in S} \dist(a,S)$ is the sum of distances from an agent $a$ to all agents of a set $S$. Since we know that $\dist(x,X_R) \leq |X_R|\dist(x,z)$, we know that
\begin{align}\label{eq2-avg:theorem:swap-sorted-eq-exists}
    \dist(x,X_L) > |X_L|\dist(x,z) + |X|\frac{\dist(z,Y)}{|Y|},
\end{align}
since otherwise it would contradict \Cref{eq1-max:theorem:swap-sorted-eq-exists}. The intuition of \Cref{eq2-avg:theorem:swap-sorted-eq-exists} is that the main reason for agent $x$ having the incentive to swap with some agent with a higher value, is that the agents in agent $x$'s coalition with smaller values than agent $x$ increase agent $x$'s cost a lot. We use this statement and show these agents also increase the cost of agent $y$ after the swap that much that agent $y$ does not want to swap with agent $x$. Formally, using \Cref{eq2-avg:theorem:swap-sorted-eq-exists}, yields the following lower bound on the cost of agent $y$ after the swap:
\begin{align*}
    \cost(y,X) =\;& \dist(y,z) + \frac{\dist(z,X_R)}{|X|} + \frac{|X_L|\dist(z,x)}{|X|} + \frac{\dist(x,X_L)}{|X|}\\
    \geq\;& \dist(y,z) + \frac{\dist(x,X_L)}{|X|}\\
    >\;& \dist(y,z) + \frac{|X_L|\dist(x,z)}{|X|} + \frac{\dist(z,Y)}{|Y|}\\
    \geq\;& \dist(y,z) + \frac{\dist(z,Y)}{|Y|},
\end{align*}
which is an upper bound on the cost of agent $y$ before the swap
\begin{align*}
    \cost(y,Y) = \frac{1}{|Y|}\left(|Y_L|\dist(y,z) - \dist(z,Y_L) + \dist(z,Y_R) - |Y_R|\dist(y,z)\right).
\end{align*}
This shows that, if some agent $x$ wants to swap with some agent $y$ in a sorted coalition structure, then the cost of agent $y$ increases by this swap. Therefore, there is no improving swap in a sorted coalition structure in an \textsc{Avg-Swap} game.
\end{proof}

\begin{proof}[Proof for \textsc{Cutoff$_\lambda$}:]
First note that both $x$ and $y$ need to have at least one friend in their new coalition after the swap, as they would not swap to coalition with only enemies. This implies that $N_x^-(X)$ only includes agents in $X$ with a smaller value as $x$, and that $N_y^-(Y)$ only includes agents in $Y$ with a higher value as $y$. Therefore, it holds that $N_x^-(X) \subseteq N_y^-(X)$ and $N_y^-(Y) \subseteq N_x^-(Y)$. We use these inclusions to show the following statement: if one agent, wlog $x$, has an incentive to swap with agent $y$, then agent $y$'s cost does not decrease with this swap. For that, consider the following chain of inequalities
\begin{align*}
    \cost(y,\mathcal{C}^\prime(y)) =\;& \frac{|N_y^-(X)|}{|X|} \overset{(1)}{\geq} \frac{|N_x^-(X)|}{|X|}\\ \overset{(2)}{>}\;& \frac{|N_x^-(Y)|}{|Y|} \overset{(3)}{\geq} \frac{|N_y^-(Y)|}{|Y|} = \cost(y,\mathcal{C}(y))
\end{align*}
where Inequality (1) and (3) hold because $N_x^-(X) \subseteq N_y^-(X)$ and $N_y^-(Y) \subseteq N_x^-(Y)$ respectively, and Inequality (2) holds due to the incentive of agent $x$ to swap. This shows that the cost of agent $y$ after the swap is higher than the cost before the swap. Therefore, there is no improving swap in a sorted coalition structure in a \textsc{Cutoff$_{\lambda}$-Swap} game.
\end{proof}

\lemmaSwapUnsortedOpt*
\begin{proof}[Proof of \Cref{lemma:swap-unsorted-opt} continued]
    We show that the unsorted coalition structure
    \begin{align*}
        \mathcal{C}^* \coloneqq  \bigl \{\{1,1,3,3\},\{2,2,2,2,2\} \bigr \}
    \end{align*}
    is the social optimum and a swap equilibrium.

    First note that the coalition structure $\mathcal{C}^*$ (defined above) is a swap equilibrium as all but one coalition include only agents with minimum cost of $0$. To show that this is indeed also the optimum, we consider the three different cost functions separately and calculate the cost for all possible coalition structures (up to the symmetry of $1$ and $3$).

    The \textsc{Average} costs are:
    \begin{itemize}
        \item $\mathcal{C}^* \coloneqq \big\{\{1,1,3,3\}\{2,2,2,2,2\}\big\}$ has social cost $5+\frac{1}{3}$
        \item $\big\{\{1,1,2,3\},\{2,2,2,2,3\}\big\}$ has social cost $6+\frac{2}{3}$
        \item $\big\{\{1,1,2,2\},\{2,2,2,3,3\}\big\}$ has social cost $5+\frac{2}{3}$
        \item $\big\{\{1,2,2,3\},\{1,2,2,2,3\}\big\}$ has social cost $8.0$
        \item $\big\{\{1,2,2,2\},\{1,2,2,3,3\}\big\}$ has social cost $7.0$
        \item $\big\{\{2,2,2,2\},\{1,1,2,3,3\}\big\}$ has social cost $6.0$
    \end{itemize}

    The \textsc{Max} costs are:
    \begin{itemize}
        \item $\mathcal{C}^* \coloneqq \big\{\{1,1,3,3\},\{2,2,2,2,2\}\big\}$ has social cost $8$
        \item $\big\{\{1,1,2,3\},\{2,2,2,2,3\}\big\}$ has social cost $12$
        \item $\big\{\{1,1,2,2\},\{2,2,2,3,3\}\big\}$ has social cost $9$
        \item $\big\{\{1,2,2,3\},\{1,2,2,2,3\}\big\}$ has social cost $13$
        \item $\big\{\{1,2,2,2\},\{1,2,2,3,3\}\big\}$ has social cost $12$
        \item $\big\{\{2,2,2,2\},\{1,1,2,3,3\}\big\}$ has social cost $9$
    \end{itemize}

    The \textsc{Cutoff$_\lambda$} costs are:
    \begin{itemize}
        \item $\mathcal{C}^* \coloneqq \big\{\{1,1,3,3\},\{2,2,2,2,2\}\big\}$ has social cost $2+\frac{2}{3}$
        \item $\big\{\{1,1,2,3\},\{2,2,2,2,3\}\big\}$ has social cost $5+\frac{1}{3}$
        \item $\big\{\{1,1,2,2\},\{2,2,2,3,3\}\big\}$ has social cost $5+\frac{2}{3}$
        \item $\big\{\{1,2,2,3\},\{1,2,2,2,3\}\big\}$ has social cost $6+\frac{5}{6}$
        \item $\big\{\{1,2,2,2\},\{1,2,2,3,3\}\big\}$ has social cost $6.0$
        \item $\big\{\{2,2,2,2\},\{1,1,2,3,3\}\big\}$ has social cost $4.0$
    \end{itemize}

For all cost functions, the coalitions structure $\mathcal{C}^*$ has the smallest social cost and is the unique optimum.
\end{proof}

\section{Omitted Details from Section 3}
\label{appx:JumpStability}

\lemmaMaxJumpFIP*

\begin{proof}[Proof of \Cref{lemma:max-jump-fip}]
    We show that both kind of games are potential games. Both potential functions are based on the idea that the non-increasingly sorted cost vector of all agents decreases lexicographically with every improving jump. This always holds for the \textsc{HIS} setting. For \textsc{UIS} setting, however, we have to take the number of singleton coalitions into account.

    For both proofs, let $\mathcal{C}$ be a coalition structure, and $C,D \in \mathcal{C}$ two coalitions such that there is an agent $x$ with an improving jump from coalition $C$ to coalition $D$. Let $\mathcal{C}^\prime$ be the coalition structure $\mathcal{C}$ after agent $x$ jumped.

    For the \textsc{HIS} setting, let $\Phi_{\cost}(\mathcal{C})$ be the vector of the costs of all agents that is sorted non-increasingly. We prove that $\Phi_{\cost}(\mathcal{C}) > \Phi_{\cost}(\mathcal{C}^\prime)$, where the vectors are compared lexicographically. To do so, we show that all agents whose cost change have a cost of less than $\cost(x,C)$ after agent $x$'s jump. When agent $x$ jumps out of coalition $C$, the cost of all remaining agents in coalition $C$ cannot increase. Thus, these agents will not increase $\Phi_{\cost}$. However, the costs of some agents in coalition $D$ increase with the jump if $v_x \notin [\min_{i \in D}v_i;\max_{i \in D}v_i]$ (at least the cost of the agents with value $\max_{i \in D}v_i$ if $v_x < \min_{i \in D}v_i$ or the cost of the agents with values $\min_{i \in D}v_i$ if $v_x > \max_{i \in D}v_i$). But then, their costs are at most the cost of agent $x$ in coalition $D$, since the value of agent $x$ is one of the new extreme values in coalition $D$. As a result, $\Phi_{\cost}$ decreases with every improving jump and is thus a potential function.

    Note that, in the previous proof, we heavily used that we are in the \textsc{HIS} setting, as agent $x$ sets the cost of another agent to $0$ if $|C|=2$. In the \textsc{UIS} setting, this does not work anymore, because the cost of the other agent in $C$ would increase over $\cost(x,C)$, as it then is isolated in $C$. To fix this, we define the vector potential function $\Phi(\mathcal{C}) \coloneqq (|\mathcal{C}|_{\neq \emptyset},\Phi_{\cost}(\mathcal{C}))$ where $|\mathcal{C}|_{\neq \emptyset}$ is the number of non-empty coalitions in the given coalition structure, and $\Phi_{\cost}(\mathcal{C})$ is the non-increasingly sorted cost vector used in the \textsc{HIS} setting, meaning that agents in isolation have entry of $0$ in the cost vector (although their cost are not zero in the \textsc{UIS} setting). We show that every improving jump strictly decreases $\Phi$. If the jumping agent $x$ is not in a singleton coalition, then the number of coalitions stays the same and the $\Phi_{\cost}(\mathcal{C})$ decreases as it does in the \textsc{HIS} setting above. If, however, agent $x$ jumps out of a singleton coalition, then the number of non-empty coalitions decreases. Since no agent will ever jump into an empty coalition in the \textsc{UIS} setting, $\Phi_{\cost}(\mathcal{C})$ will never increase with an improving jump. As a result, $\Phi$ is a potential function.
\end{proof}

In order to prove \Cref{theorem:his-sorted-pne}, we first give some further properties of monotone cost functions in \Cref{obs:cutoff-sorted}~and~\Cref{lemma:left-right-improving-jump} that are directly implied by \Cref{def:monotone-cost-f}.

\begin{observation}
    \label{obs:cutoff-sorted}
    Let $\mathcal{I} \coloneqq (n,(v_i)_{i\in [n]},k)$ for some $k \geq 2$ be a game instance with monotone cost function $\cost_m$. Further, let $\mathcal{C}\coloneqq \{C_i\}_{i \in [k]}$ be a sorted coalition structure of $\mathcal{I}$ and $x$ be an agent in some coalition $C_i$. Then
    \begin{enumerate}
        \item[(i)] $\cost_m(x,C_j) \geq \cost_m(R(C_i),C_j)$ for all $j \in [k]$ with $j > i$,\\
        $\cost_m(x,C_j) \geq \cost_m(L(C_i),C_j)$ for all $j \in [k]$ with $j < i$,
        \item[(ii)] $\cost_m(x,C_{i+1}) \leq \cost_m(x,C_{i+2}) \leq \dots \leq \cost_m(x,C_k)$, and\\
        $\cost_m(x,C_{i-1}) \leq \cost_m(x,C_{i-2}) \leq \dots \leq \cost_m(x,C_1)$.
    \end{enumerate}
\end{observation}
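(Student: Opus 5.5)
Both parts follow directly from properties (i) and (ii) of \Cref{def:monotone-cost-f}, using only that $\mathcal{C}$ is sorted. For part (i), fix $j>i$. Since $\mathcal{C}$ is sorted, every agent of $C_i$, in particular $x$ and $R(C_i)$, has value at most $v_{L(C_j)}$, and $v_x \leq v_{R(C_i)}$. So the hypothesis $v_x \leq v_y \leq v_{L(C)}$ of \Cref{def:monotone-cost-f}(i) holds with $C \coloneqq C_j$ and $y \coloneqq R(C_i)$. Neither agent lies in $C_j$, so property (i) gives $\cost_m(R(C_i),C_j) \leq \cost_m(x,C_j)$. The case $j<i$ is symmetric: use the second alternative $v_{R(C)} \leq v_y \leq v_x$ with $y \coloneqq L(C_i)$ and $C \coloneqq C_j$.

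Two degenerate situations need care. First, if $x=R(C_i)$ the claim is trivial. Second, $C_j$ may be empty, since sorted structures may contain empty coalitions. In that case $L(C_j)$ is undefined, and I would either exclude empty coalitions from the statement or note that both costs coincide (both agents would be isolated). With repeated values, ties $v_x=v_{L(C_j)}$ are allowed by the non-strict inequalities, so they cause no trouble.

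For part (ii) I would show $\cost_m(x,C_{j}) \leq \cost_m(x,C_{j+1})$ for every $j$ with $i<j<k$ and chain these inequalities. Apply \Cref{def:monotone-cost-f}(ii) with $C \coloneqq C_j$ and $D \coloneqq C_{j+1}$. Sortedness gives $v_x \leq v_{L(C_j)} \leq v_{R(C_j)} \leq v_{L(C_{j+1})}$, and $x \notin C_j \cup C_{j+1}$. Property (ii) then yields $\cost_m(x,C_j) \leq \cost_m(x,C_j\cup C_{j+1}) \leq \cost_m(x,C_{j+1})$, and we drop the middle term. The left-hand chain is symmetric, using the second alternative $v_{R(D)} \leq v_{L(C)} \leq v_{R(C)} \leq v_x$ with $C \coloneqq C_j$ and $D \coloneqq C_{j-1}$ for $j<i$.

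I expect no real obstacle, since each step is a single instantiation of the definition. The only delicate point is checking that the index ordering matches the hypotheses of \Cref{def:monotone-cost-f}: the coalition $C$ nearer to $x$ must be the one bounding $\cost_m(x,C\cup D)$ from below. The empty-coalition edge case must also be handled or excluded.
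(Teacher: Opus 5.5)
Your proposal is correct and is the same argument the paper intends: the paper gives no separate proof and only says the observation is directly implied by properties (i) and (ii) of the monotonicity definition. Your instantiations are the right ones, namely $y \coloneqq R(C_i)$ (resp.\ $L(C_i)$) with $C \coloneqq C_j$ for part (i), and chaining consecutive pairs $C \coloneqq C_j$, $D \coloneqq C_{j\pm 1}$ for part (ii). Your caveat about empty coalitions is a fair point that the paper leaves implicit, since $L(\cdot)$ and $R(\cdot)$ are only defined for nonempty coalitions.
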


\lemmaImprovingMoveStable*
\begin{proof}
    If $\mathcal{C}$ is not a jump equilibrium, then there is an agent~$x$ who has the incentive to jump from its current coalition $C_i$ to some other coalition $C_j$, i.e., $\cost_m(p,C_i) > \cost_m(p,C_j)$. Since the cost function is monotone, property (iii) of \Cref{def:monotone-cost-f} directly implies $\cost_m(R(C_i),C_i) > \cost_m(R(C_i),C_j)$ (if $i < j$) or $\cost_m(L(C_i),C_i) > \cost_m(L(C_i),C_j)$ (if $j < i$).
\end{proof}

Now, we are ready to prove that the properties of monotone cost functions are sufficient for the algorithm of Milchtaich and Winter to work properly.

\theoremHisSortedPne*
\begin{proof}
    In order to construct such a PNE for $\mathcal{I}$, we start with a \enquote{right-heavy} coalition structure $\mathcal{C}_{\text{start}}$ (defined below) and iteratively execute left-improving moves until no left-improving move is possible. The start coalition structure $\mathcal{C}_{\text{start}} \coloneqq \{C_i\}_{i \in [k]}$ is the coalition structure where the coalition $C_i$ (for $i \in [k-1]$) only contains the agents with the $i$-th lowest value and all remaining agents are assigned to coalition $C_k$, i.e.,
        $$\mathcal{C}_{\text{start}} \coloneqq \big\{\{v_1\},\dots,\{v_{k-1}\},\{v_k,\dots,v_n\}\big\}.$$
    Clearly, $\mathcal{C}_{\text{start}}$ is sorted and it stays sorted under left-improving moves. It is therefore valid to also number coalitions during the algorithm from $C_1$ to $C_k$ according to the ordering of their values. 
    
    To prove that this algorithm works under the monotone cost function $\cost_m$, we show by induction that there is no coalition structure during the algorithm admitting a right-improving move. By \Cref{lemma:left-right-improving-jump}, this proves that the final coalition structure of this algorithm, for which by definition also no left-improving move is possible, is indeed a jump equilibrium. Note that this algorithm terminates at latest when no left move is possible without hurting the \textsc{HIS} setting, that is the coalition in which the $k-1$ highest values form $k-1$ singleton coalitions and the remaining ones moved all the way to the left coalition $C_1$.

    In the base case, i.e., in $C_{\text{start}}$, no right move is possible because all agents in the coalitions $C_1,\dots,C_{k-1}$ have minimum cost (due to the \textsc{HIS} setting) and all other agents are in the right-most coalition. Now consider a sorted coalition structure $\mathcal{C}\coloneqq \{C_i\}_{i \in [k]}$ which does not admit any right-improving move, but at least one left-improving move. Let $L(C_i)$ be the agent with a left-improving move in $\mathcal{C}$, and $\mathcal{C}^\prime \coloneqq \{C^\prime_i\}_{i \in [k]}$ the coalition structure $\mathcal{C}$ after the left-improving move of agent $L(C_i)$. See \Cref{fig:cutoff-pne-proof} for an illustration.

    The only coalitions that changed are $C_i$ and $C_{i-1}$. Therefore and with \Cref{obs:cutoff-sorted}, we only need to show that none of the agents $R(C^\prime_{i-2})$, $R(C^\prime_{i-1})$ and $R(C^\prime_i)$ have a right-improving move. 
    If there was some other right-improving move, it would have already been in $\mathcal{C}$ which is a contradiction to the induction hypothesis. We can directly rule out that $R(C^\prime_{i-1})$ has a right-improving move, since this would undo the left-improving move from before (from $\mathcal{C}$ to $\mathcal{C}^\prime$). We will prove that there are also no right-improving moves for the remaining two candidates in the following two paragraphs.

    If $i = k$, $R(C^\prime_i)$ cannot move to the right and thus can not have a right-improving move. Otherwise, we use that $[r \coloneqq ]R(C^\prime_i)=R(C_i)$ and show that the cost of agent $r$ did not increase in that step, i.e., $\cost_m(R(C^\prime_i),C^\prime_i) \leq \cost_m(R(C_i),C_i)$. With that, the following chain of inequalities would hold
    \begin{align*}
        \cost_m(R(C^\prime_i),C^\prime_i) \leq\;& \cost_m(R(C_i),C_i) \\
        \overset{(1)}{\leq}\;& \cost_m(R(C_i),C_{i+1}) \overset{(2)}{=} \cost_m(R(C^\prime_i),C^\prime_{i+1}),
    \end{align*}
    where (1) holds because there was no right-improving for $r$ in $\mathcal{C}$ by the induction hypothesis, and (2) holds because the coalition to $r$'s right (which exists since $i < k$) did not change either, i.e., $C_{i+1} = C^\prime_{i+1}$. This implies that there is no right-improving move for $R(C^\prime_i)$ in $\mathcal{C}^\prime$. 
    To show the yet missing inequality $\cost_m(R(C^\prime_i),C^\prime_i) \leq \cost_m(R(C_i),C_i)$, we use the fact that $\cost_m$ is monotone and instantiate part (ii) of \Cref{def:monotone-cost-f} with $x \coloneqq r$, $C \coloneqq C^\prime_i \setminus \{r\}$ and $D \coloneqq \{R(C^\prime_{i-1})\} = \{L(C_i)\}$. With that, it holds that
        $$\cost_m(r,C^\prime_i \setminus \{r\}) \leq \cost_m(r,C^\prime_i \setminus \{r\}\cup \{L(C_i)\}).$$
    We observe that $\cost_m(r,C^\prime_i \setminus \{r\}) = \cost_m(r,C^\prime_i)$ and $\cost_m(r,C^\prime_i \setminus \{r\}\cup \{L(C_i)\}) = \cost_m(r,C_i \setminus \{r\}) = \cost_m(r,C_i)$. Therefore, we get $\cost_m(R(C^\prime_i),C^\prime_i) \leq \cost_m(R(C_i),C_i)$. As a result, there is no right-improving move for $R(C^\prime_i)$ in $\mathcal{C}^\prime$.

    In case $R(C^\prime_{i-2})$ exists, i.e., $i \geq 3$, we show that the cost for agent $r \coloneqq R(C_{i-2}) = R(C^\prime_{i-2})$ in the coalition to its right do not decrease with the left-improving move of $L(C_i)$, i.e., we show $\cost_m(r,C^\prime_{i-1}) \geq \cost_m(r,C_{i-1})$. With that, the following chain of inequalities holds
    \begin{align}\label{eq:theorem:his-sorted-pne}
        \cost_m(r,C^\prime_{i-2}) \overset{(1)}{=}\;& \cost_m(r,C_{i-2})\\ \nonumber\overset{(2)}{\leq}\;& \cost_m(r,C_{i-1}) \leq \cost_m(r,C^\prime_{i-1})
    \end{align}
    where (1) holds because the coalition $C_{i-2}$ did not change in this step, i.e., $C_{i-2} = C^\prime_{i-2}$, and (2) holds because there was no right-improving move for $r$ in $\mathcal{C}$ by the induction hypothesis. This implies that there is no right-improving move for $r$ in $\mathcal{C}^\prime$. To show the remaining inequality $\cost_m(r,C^\prime_{i-1}) \geq \cost_m(r,C_{i-1})$, we use the fact that $\cost_m$ is monotone and instantiate part (ii) of \Cref{def:monotone-cost-f} with $x \coloneqq r$, $C \coloneqq C_{i-1} = C^\prime_{i-1} \setminus \{R(C^\prime_{i-1})\}$ and $D \coloneqq \{R(C^\prime_{i-1})\} = \{L(C_i)\}$. With that, it holds that
        $$\cost_m(r,C_{i-1}) \leq \cost_m(r,(C^\prime_{i-1} \setminus \{R(C^\prime_{i-1})\})\cup \{R(C^\prime_{i-1})\}).$$
    Since $\cost_m(r,(C^\prime_{i-1} \setminus \{R(C^\prime_{i-1})\})\cup \{R(C^\prime_{i-1})\}) = \cost_m(r,C^\prime_{i-1})$, this directly implies $\cost_m(r,C^\prime_{i-1}) \geq \cost_m(r,C_{i-1})$. With \Cref{eq:theorem:his-sorted-pne}, there is no right-improving move for $r=R(C^\prime_{i-1})$ in $\mathcal{C}^\prime$.

    To conclude, the algorithm starts with a coalition structure that is sorted and has no right-improving moves, and keeps these invariants under left-improving moves given a monotone cost function. When no left-improving moves are available, the invariant indicates that also no right-improving moves are possible. By \Cref{lemma:left-right-improving-jump}, this proves that the final coalition structure of our algorithm is indeed a jump equilibrium of the given game instance.
\end{proof}

\lemmaMonotoneCostF*
\begin{proof}[Proof for \Cref{lemma:monotone-cost-f}]\let\qed\relax
    We split the proof of this lemma into three different parts; each for one of the three cost functions. For that, let $x$ and $y$ be two agents and $C$ and $D$ be two coalitions. Every part (i, ii, iii) of \Cref{def:monotone-cost-f} considers two different cases. In the following proofs, we only consider the case where
        $$v_x \leq v_y \leq v_{L(C)} \leq v_{R(C)} \leq v_{L(D)}$$
    since the other one can be proven with the analogous arguments.
\end{proof}
\begin{proof}[Proof for Cutoff:]
    For part (i), we show that
    \begin{align}\label{eq1:cutoff-monotone}
        \cost(x,C) = \frac{|N^-_x(C)|}{|C|} \geq \frac{|N^-_y(C)|}{|C|} = \cost(y,C)
    \end{align}
    
    by proving $N^-_y(C) \subseteq N_x^-(X)$, i.e., every enemy of $y$ in $C$ is also an enemy of $x$ in $C$. Let $c \in N^-_y(C)$ be an enemy of $y$ in $C$. Then, it holds that
        $$v_c - v_x \geq v_c - v_y > \lambda,$$
    because we assumed $v_x \leq v_y$ and we know that $c \in N^-_y(C)$. Therefore, agent $c$ is an enemy of $x$, which implies $N^-_y(C) \subseteq N^-_x(C)$ and thus \Cref{eq1:cutoff-monotone}.

    For part (ii), we have to prove that $\cost(x,C) \leq \cost(x,C\cup D)\leq \cost(x,D)$. For that, we make a case distinction on where the value $v_x + \lambda$ lies. If $v_x + \lambda < v_{L(C)}$, then all agents in $C \cup D$ are enemies of agent $x$. Thus, all costs are $1$ and thus satisfy the chain of inequalities. If $v_x + \lambda \in [v_{L(C)},v_{R(C)}]$, i.e., lies within the range of coalition $C$, then all agents in coalition $D$ are enemies of agent $x$. Therefore, $\cost(x,D)=1$ holds, which is a trivial upper bound on the other two costs. To see that
        $$\cost(x,C) = \frac{|N_x^-(C)|}{|C|} \leq \frac{|N_x^-(C)|+|D|}{|C|+|D|} = \cost(x,C\cup D),$$ 
    is correct, note that $\frac{a}{b} \leq \frac{a+x}{b+x}$ holds for all $0 \leq a \leq b$ with $b \neq 0$ and all $x \geq 1$. In the last case, let $v_x + \lambda > v_{R(C)}$. Then, all agents in coalition $C$ are friends of agent $x$. Therefore, $\cost(x,C)=0$ holds, which is a trivial lower bound on the other two costs. The remaining inequality to prove part (ii) of \Cref{def:monotone-cost-f} holds because $|C| \geq 1$ and
        $$\cost(x,C\cup D) = \frac{|N_x^-(D)|}{|D|+|C|} \leq \frac{|N_x^-(D)|}{|D|} = \cost(x,D).$$ 

    For part (iii), let $c \in C$ be an agent with $\cost(c,C) > \cost(c,D)$. From part (i), we know that $\cost(c,D) \geq \cost(R(C), D)$. In order to prove that $\cost(R(C), C) > \cost(R(C), D)$, we only need to show that $\cost(R(C), C) \geq \cost(c, C)$ since
    \begin{align}\label{eq3:cutoff-monotone}
        \cost(R(C),C) \geq \cost(c,C) > \cost(c,D) \geq \cost(R(C),D)
    \end{align}
    holds afterward.
    For that, we first show that $c$ and $R(C)$ have to be friends. If $c$ and $R(C)$ were enemies, we knew that also all agents $x \in D$ were enemies of agent $c$ since we assumed the coalition $D$ left of coalition $C$. But then, agent $c$ had a cost of $\cost(c,D)=1$ in $D$ and thus no incentive to move to $D$.

    As a result, agents $c$ and $R(C)$ have to be friends. Therefore, there is no enemy of agent $c$ with a value in $[v_c;v_{R(C)}]$. With that, we show that all enemies of agent $c$ in coalition $C$ are also enemies of agent $R(C)$. For that, let $x \in N^-_c(C)$ be an enemy of agent $c$ in coalition $C$. As $v_x \notin [v_c;v_{R(C)}]$, it holds that $v_x \leq v_c \leq v_{R(C)}$. With that, we get that agent $x$ is also an enemy of agent $R(C)$ and thus $N_c^-(C) \subseteq N_{R(C)}^-(C)$. As a result, it holds that
    \begin{align*}
        \cost(R(C),C)
        = \frac{|N^-_{R(C)}(C)|}{|C|}
        \geq \frac{|N^-_{c}(C)|}{|C|}
        = \cost(c,C)
    \end{align*}
    which completes the proof of \Cref{eq3:cutoff-monotone}. This shows that $R(C)$ also wants to move to $D$.
\end{proof}
\begin{proof}[Proof for Average:]
    As a first technical observation, we can rewrite the cost of an agent $a \in \{x,y\}$ in a coalition $S \in \{C,D\}$ with $a \notin S$ as follows:
    \begin{align}\label{eq:avg-lemma:monotone-cost-f}
        \cost(p,S) =\;& \frac{1}{|S|}\sum_{e \in S}|v_e-v_a| \\\ \nonumber
        =\;& \frac{1}{|S|}\sum_{e \in S}(v_e-v_a) \\ \nonumber
        =\;& \avg(S) - v_a,
    \end{align}
    where $\avg(\cdot)$ is the average value over a given set of agents.

    To show part (i), remember that we assumed $v_x \leq v_y$, which directly implies:
        $$\cost(y,C) = \avg(C) - v_y \overset{v_y \geq v_x}{\leq} \avg(C) - v_x = \cost(x,C).$$

    To prove part (ii), i.e.,
        $\cost(x,C) \leq \cost(x,C\cup D) \leq \cost(x,D)$
    for the average cost function, we show that (1) $\cost(x,C\cup D)$ is a convex combination of $\cost(x,C)$ and $\cost(x,D)$, and (2) that $\cost(x,C) \leq \cost(x,D)$. For (1), consider
    \begin{align*}
        \cost(x,C\cup D) &\;= \frac{1}{|C|+|D|}\sum_{e \in C \cup D}|v_e-v_x|\\
        &\;= \frac{1}{|C|+|D|}\sum_{e \in E}|v_e-v_x| + \frac{1}{|C|+|D|}\sum_{e \in E}|v_e-v_x| \\
        &\;= \frac{|C|}{|C|+|D|}|\avg(C)-v_x| + \frac{|D|}{|C|+|D|}|\avg(D)-v_x| \\
        &\;= \frac{|C|}{|C|+|D|}\cost(x,C) + \frac{|D|}{|C|+|D|}\cost(x,D).
    \end{align*}
    For (2), note that the average of a given (multi)set always lies between the minimum and the maximum value in that (multi)set. Since we assume $v_{R(C)} \leq v_{L(D)}$, we know $\avg(C) \leq \avg(D)$ and thus
        $$\cost(x,C) = \avg(C) - v_x \leq \avg(D) - v_x = \cost(x,D),$$
    which concludes the proof for part (ii). 

    The proof for part (iii) is already done by \citet[Lemma 2.5]{MilchtaichStabilitySegregationGroup2002}.
\end{proof}
\begin{proof}[Proof for Maximum:]
    Since all agents in the coalitions $C$ and $D$ have a greater (or equal) value than $x$ and $y$, we know for all $a \in \{x,y\}$ and $S \in \{C,D\}$ that
    \begin{align}\label{eq:max-lemma:monotone-cost-f}
        \cost(a,S) = \max_{s \in S} |v_s-v_a| = \max_{s \in S} (v_s-v_a) = \max S - v_a.
    \end{align}
    Then, the respective case of part (i) follows by $v_x \leq v_y$ as
        $$\cost(x,C) = v_{R(C)} - v_x \geq v_{R(C)} - v_y = \cost(y,C).$$
    For part (ii), we use the assumption that $v_{R(C)} \leq v_{R(D)}$ and get
        $$\cost(x,C) = \max C - v_x \leq \max D - v_x = \cost(x,D)$$
    as well as 
        $$\cost (x,C\cup D) = v_{R(C\cup D)} - v_x = v_{R(D)} - v_x = \cost(x,D).$$
    Therefore, it also holds that $\cost(x,C) \leq \cost(x,C\cup D)\leq \cost(x,D)$.

    For part (iii), let $c \in C$ be an agent with $\cost(c,C) > \cost(c,D)$. From part (i), we know that $\cost(c,D) \geq \cost(R(C), D)$. In order to prove that $\cost(R(C), C) > \cost(R(C), D)$, we only need to show that $\cost(R(C), C) \geq \cost(c, C)$. Using similar arguments as for \Cref{eq:max-lemma:monotone-cost-f}, we get that $\cost(R(C), C) = v_{R(C)} - v_{L(C)}$. This value has a lower bound of $v_c - v_{L(C)}$, since $v_{L(C)} \leq v_c$. In the remainder of this proof, we show that $v_c - v_{L(C)} = \cost(c,C)$. For any element in $C$ the cost is either the distance to agent $L(C)$ or the distance to agent $R(C)$. Assume for contradiction that $\cost(c,C) = v_{R(C)} - v_c$. Since we assume $v_{R(C)} \leq v_{R(D)}$, we know that $\cost(c,C) \leq v_{R(D)} - v_c = \cost(c,D)$, which contradicts the premise that $\cost(c,C) > \cost(c,D)$. As a result, the cost of $c$ has to be the distance to agent $L(C)$, i.e., $\cost(c,C) = v_c - v_{L(C)}$. To conclude, we get the following chain of inequalities:
    \begin{align*}
        \cost(R(C), D) \overset{(1)}{\leq}\;& \cost(c,D) \overset{(2)}{<} \cost(c,C) \\
        \overset{(3)}{=}\;& v_c - v_{L(C)} \overset{v_c \leq v_{R(C)}}{\leq} v_{R(C)} - v_{L(C)} \\
        =\;& \cost(R(C), C),
    \end{align*}
    where (1) holds because of part (i) and $v_c \leq v_{R(C)}$, (2) by definition of agent $c$ and (3) by the previous observation. This shows that agent $R(C)$ has the incentive to jump to coalition $D$ if this holds for some agent $c \in C$ too. 
\end{proof}

\lemmaAvgUnsortedJumpEq*
\begin{proof}[Proof of \Cref{lemma:avg-unsorted-jump-eq}]
    Consider the following symmetric instance,
    $$ \big\{\{1, 1, 3, 3, 3, 3\}, \{2, 2, 2, 2, 4, 4\}\big\},$$
    having two coalitions that overlap each other. 
    It is a jump equilibrium, as no agent has an improving move.
    Since it is a symmetric instance, we only need to check agents 1 and 3: agent 1 has cost $\frac{4\cdot2}{5}=1.6$, and would have cost $\frac{4\cdot1+2\cdot3}{6}\approx 1.67$ in the other coalition; and agent 3 has cost $\frac{2\cdot2}{5}=0.8$, and would have cost $\frac{6\cdot1}{6}=1$ in the other coalition.
\end{proof}
This overlapping structure seems to be generalizable to create more overlaps, e.g., to
$$\big\{\{\overbrace{1\ldots1}^3, \overbrace{3\ldots3}^4,\overbrace{5\ldots5}^5\}, \{\overbrace{2\ldots2}^5,\overbrace{4\ldots4}^4,\overbrace{6\ldots6}^3\}\big\}$$
with three parts per coalition, or 
$$\big\{\{\overbrace{1\ldots1}^4, \overbrace{3\ldots3}^6,\overbrace{5\ldots5}^6,\overbrace{7\ldots7}^7\}, \{\overbrace{2\ldots2}^7,\overbrace{4\ldots4}^6,\overbrace{6\ldots6}^6,\overbrace{8\ldots8}^4\}\big\}$$
with four parts per coalition. So far this structure seems to be the only jump stable with unsorted coalitions, but we were not able to show which properties an instance needs to have unsorted equilibria.

\lemmaCutoffjumpunsorted*
\begin{proof}[Proof of \Cref{lemma:cutoff-jump-unsorted} continued]
    In the following, we show that the coalition structure $\mathcal{C}^*$ is also optimum by calculating the social cost of all coalition structures (up to symmetry of the agents with values $1$ and $3$):
    \begin{itemize}
        \item $\{\{1\},\{1,2,2,2,3,3\}\}$ has social cost $\geq4.4$ (even for HIS)
        \item $\{\{2\},\{1,1,2,2,3,3\}\}$ has social cost $\geq 4.8$ (even for HIS)
        \item $\{\{1,1\},\{2,2,2,3,3\}\}$ has social cost $3.0$
        \item $\{\{1,2\},\{1,2,2,3,3\}\}$ has social cost $6.0$
        \item $\{\{1,3\},\{1,2,2,2,3\}\}$ has social cost $5.5$
        \item $\{\{2,2\},\{1,1,2,3,3\}\}$ has social cost $4.0$
        \item $\{\{1,1,2\},\{2,2,3,3\}\}$ has social cost $4+\frac{2}{3}$
        \item $\{\{1,1,3\},\{2,2,2,3\}\}$ has social cost $4.0$
        \item $\{\{1,2,2\},\{1,2,3,3\}\}$ has social cost $5+\frac{1}{3}$
        \item $\{\{1,2,3\},\{1,2,2,3\}\}$ has social cost $6+\frac{1}{3}$
        \item $\mathcal{C}^* = \{\{2,2,2\},\{1,1,3,3\}\}$ has social cost $2+\frac{2}{3}$
    \end{itemize}
    As $\mathcal{C}^*$ has the lowest cost of all coalition structures it is the optimum of this instance. That proves that there is an instance with an unsorted optimum.
\end{proof}

\lemmaMaxJumpUnsortedEq*
\begin{proof}[Full Proof of \Cref{lemma:max-jump-unsorted-eq}]
    We first analyze an instance for $k=2$ and $n=4k$ and then explain how this can be generalized for $k>2$ and $n>4k$. Consider the following instance:  
    $$\big\{\{L,L,H,H\},\{L,L,H,H\}\big\}.$$
    No agent has an improving jump, as its valuation in both coalitions is exactly the same (namely $H-L$).
    For larger $k>2$ and $n>4k$, we can generalize the instance in the following way:
    $$\big\{\overbrace{\{L,L,H,H\}}^{k-1},\{L,L,\overbrace{H\ldots H}^{n - 4k + 2}\}\big\}.$$
    To increase $k$, we can copy one of the coalitions, to increase $n$ we can duplicate one of the agents as many times as needed.
\end{proof}

\lemmaUnsortedUnstableMaxJumpOpt*
\begin{proof}[Proof of \Cref{lemma:unsorted-unstable-max-jump-opt}]
    This result is a modification of the instance in \Cref{lemma:swap-unsorted-opt}. By adding additional $2$'s to the right coalition, the improving jump of an agent of the left coalition results in an increase in social cost. For that, see that
    $$\big\{\{1,1,3,\mathbf{\vec{3}}\},\{\overbrace{2\ldots2}^7\}\big\}$$
    has social cost of $4\cdot2+7\cdot0=8$ while the coalition structure after the jump
    $$\big\{\{1,1,3\},\{\mathbf{3},\overbrace{2\ldots2}^7\}\big\}$$
    has a social cost of $3\cdot2+8\cdot1 = 14$. Even after a jump of another $3$ to the right
    $$\big\{\{1,1\},\{\mathbf{3,3},\overbrace{2\ldots2}^7\}\big\}$$
    the social cost stays higher at $2\cdot0+9\cdot1=9$. This is an equilibrium structure.
    The social optimum is unique up to symmetry (with cost 8), as we checked by brute force enumeration of all agent assignments. The intuition behind the optimality is as follows: it is the only structure where all seven $2$'s pay nothing, as they are on their own in one coalition. 
    If there are 2's in both coalitions, at least some of them pay extra, increasing social cost. 
    If then all 1's and 3's are in one coalition, this results in more cost than 8; 
    if 1's and 3's are split, all 2's and at least two other agents have to pay at least 1, resulting in a social cost of at least 9.
\end{proof}
\label{appx:avg-jump-opt}

\noindent\textbf{\textsc{Average-Jump} Optima Properties.\,}
Here, we analyze some properties of the optimal coalition structure in \textsc{Average-Jump} games for the case $k=2$.
Recall that agents are indexed in increasing order of their values.\\
For two consecutive agents $i$ and $i+1$, let $d_i=d(i,i+1)=v_{i+1}-v_i$ denote the difference between their values.
Then, for any pair of agents $i$ and $j$ with $j > i$, the distance $d(i,j)$ can be expressed as the sum of consecutive differences: $\sum_{p=i}^{j-1}d_p$.
Consequently, the total cost of a coalition structure $\mathcal{C} = \{C_1, C_2\}$, consisting of two coalitions $C_1$ and $C_2$ of sizes $m$ and $n - m$, respectively, can be written as a linear combination of consecutive value differences:
\begin{equation}
    \soc{\mathcal{C}} = \sum_{i=1}^{n-1}\alpha_i \cdot d_i,
\end{equation}
where $\alpha_i$ is the coefficient representing the contribution of $d_i$ to the total cost of $\mathcal{C}$.
Note that $d_i$ is defined only for $1\leq i \leq n-1$.

To characterize $\alpha_i$ more precisely, observe that $d_i$ contributes to the pairwise distance between any two agents $p$ and $q$ such that $p \leq i$ and $q > i$, provided both belong to the same coalition.
Let $\delta_i$ denote the number of agents $p \leq i$ (including agent $i$ itself) that belong to coalition $C_1$.
Then, there are $m - \delta_i$ agents in $C_1$ that come after $i$.
Similarly, there are $i - \delta_i$ agents (including $i$) before $i$ in $C_2$, and $n - m - i + \delta_i$ agents after $i$ in $C_2$.
Hence, $d_i$ contributes to $2\delta_i \cdot (m - \delta_i)$ pairwise distances within $C_1$ and to $2(i - \delta_i)\cdot (n - m - i + \delta_i)$ pairwise distances within $C_2$.
As a result, $\alpha_i$ depends on both $i$ and $\delta_i$ and can be more accurately expressed as $\alpha_i(\delta_i)$, given by the following formulation:
\begin{align}
\label{eq:alpha-def}
    \alpha_i(\delta_i) &= \frac{2\delta_i\cdot \bigl(m-\delta_i\bigr)}{m-1} + \frac{2\big(i-\delta_i\big)\cdot \big(n-m-i+\delta_i\big)}{n-m-1} \\ \notag
    & = 2\biggl(i +  \frac{\delta_i\cdot (1-\delta_i)}{m-1} + \frac{(i-\delta_i)\cdot (1-i+\delta_i)}{n-m-1} \biggr).
\end{align}
For simplicity, we write $\alpha_i$ instead of $\alpha_i(\delta_i)$ when $\delta_i$ is fixed. 
Observe that $\delta_i \le \min(i, m)$ for every $i$.
Moreover, wlog, we assume $m \le \nh$, since the case $m > n/2$ is symmetric.
Consequently, we always treat $C_1$ as the smaller coalition.\\

\begin{example}
First, we illustrate an example to clarify the notations introduced above.  
Consider the instance from~\Cref{lemma:avg-max-jump-pos}, consisting of 7 agents with values $(v_i)_{i\in [7]}\coloneqq (1, 1, 1, 4, 6, 8, 8)$, and the coalition structure
\[
\mathcal{C} = \bigl\{\{1,1,1,6\}, \{4,8,8\}\bigr\}, \quad C_1 = \{4,8,8\}, \; C_2 = \{1,1,1,6\}.
\]
Since $d_1 = d_2 = d_6 = 0$, these differences do not contribute to $\soc{\mathcal{C}}$, so we only consider $i=3,4,5$ in the calculations.  
By definition, we have $\delta_3=0$, $\delta_4=1$, and $\delta_5=1$.
Using~\Cref{eq:alpha-def}, we then compute:
\begin{align*}
    &\alpha_3(0)=2\left(3-\frac{3\cdot2}{3}\right)=2, \; \alpha_4(1)=2\left(4-\frac{3\cdot2}{3}\right)=4 \\
    &\alpha_5(1)=2\left(5-\frac{4\cdot3}{3}\right)=2.
\end{align*}
On the other hand,
\begin{align*}
    &\soc{C_1} = \frac{2\cdot2\cdot (8-4)}{2}= 2\bigl((8-6)+(6-4)\bigr) = 2(d_5+d_4),\\
    &\soc{C_2} = \frac{2\cdot3\cdot (6-1)}{3}= 2\bigl((6-4)+(4-1)\bigr) = 2(d_4+d_3),
\end{align*}
which confirms that $\alpha_i$ represents the contribution of $d_i$ to $\soc{\mathcal{C}}$.  
As observed, $d_4 = 6 - 4$ contributes to the cost of both coalitions $C_1$ and $C_2$, since $\mathcal{C}$ is not a sorted coalition structure. \hfill$\vartriangleleft$
\end{example}
Given a set of agents $[n]$, we define $\mathcal{C}^L=\{C^L_1, C^L_2\}$ to be a \emph{sorted} coalition structure where $C^L_1$ consists of the first $m$ agents and $C^L_2$ contains the remaining ones.
Similarly, we denote $\mathcal{C}^R=\{C^R_1, C^R_2\}$ to represent the \emph{sorted} coalition structure in which $C^R_1$ consists of the last $m$ agents and $C^R_2$ the rest.

\begin{restatable}{lemma}{lemmaMinAlphaAllBefore}
\label{lemma:opt-avg-jump-min-alpha-before-med}
    For a set of agents $[n]$, the coalition structure $\mathcal{C}^L=\{C^L_1, C^L_2\}$ minimizes $\alpha_i$ for all $i \in \{1, \ldots, \nh -1\}$, among all structures with coalitions of sizes $m$ and $n-m$.
\end{restatable}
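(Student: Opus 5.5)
The plan is to fix $i \in \{1,\ldots,\nh-1\}$ and study $\alpha_i$ from \Cref{eq:alpha-def} as a function of the single parameter $\delta_i$. Three facts drive the argument.
\begin{itemize}
\item Every structure with coalition sizes $m$ and $n-m$ satisfies $\max(0,\, i-(n-m)) \le \delta_i \le \min(i,m)$.
\item Since $i < n/2 \le n-m$ (recall $m \le \nh$), the lower bound is simply $0$.
\item In $\mathcal{C}^L$ the first $m$ agents form $C^L_1$, so there $\delta_i = \min(i,m)$, which is the upper endpoint.
\end{itemize}
So it suffices to show that $\alpha_i(\delta)$ is minimized over the integers $\delta \in [0,\min(i,m)]$ at $\delta = \min(i,m)$.

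Write $f(\delta) = \alpha_i(\delta)/2 = i + \frac{\delta(1-\delta)}{m-1} + \frac{(i-\delta)(1-i+\delta)}{n-m-1}$. The coefficient of $\delta^2$ is $-\frac{1}{m-1}-\frac{1}{n-m-1} < 0$, so $f$ is concave on the interval. Its minimum over the interval (and hence over the integers in it) is therefore attained at one of the endpoints $0$ or $\min(i,m)$. It remains to show $f(\min(i,m)) \le f(0)$, where $f(0) = i - \frac{i(i-1)}{n-m-1}$. The degenerate case $m=1$, where $m-1=0$, needs a separate remark. There $\delta \in \{0,1\}$, so $\delta(1-\delta)=0$, and the first fraction should be read as $0$ because $C_1$ contributes no pairwise distances. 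The concavity argument still applies to the remaining quadratic term, or one simply compares the two values directly.

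\emph{Case $i \le m$.} Here $f(i) = i - \frac{i(i-1)}{m-1}$. Since $i(i-1) \ge 0$ and $m-1 \le n-m-1$ (by $m \le \nh$), we get $\frac{i(i-1)}{m-1} \ge \frac{i(i-1)}{n-m-1}$, hence $f(i) \le f(0)$.

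\emph{Case $i > m$.} Here $f(m) = i - m - \frac{(i-m)(i-m-1)}{n-m-1}$. The inequality $f(m) \le f(0)$ rearranges to
\[ i(i-1) - (i-m)(i-m-1) \le m(n-m-1). \]
The left-hand side expands to $m(2i-m-1)$, so the condition becomes $2i \le n$. This holds because $i \le \nh - 1$.

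I expect this second case to be the main obstacle. It is exactly where the restriction $i < \nh$ is needed, and the algebraic simplification must be done carefully; it indicates why the lemma cannot extend past the median. Combining both cases, $\delta_i = \min(i,m)$, as realized by $\mathcal{C}^L$, minimizes $\alpha_i$ simultaneously for all $i \le \nh - 1$.
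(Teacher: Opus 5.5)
Your proof is correct and follows essentially the same route as the paper. Both arguments reduce $\alpha_i$ to a concave quadratic in $\delta_i$ whose minimum over $[0,\min(i,m)]$ lies at an endpoint; the paper gets this from the signs of the derivative at $0$ and at $\min(i,m)$, while you use concavity directly. Both then compare $\alpha_i(0)$ with $\alpha_i(\min(i,m))$ in the cases $i\le m$ and $m<i\le \nh-1$, where the second case reduces to $m(n-2i)\ge 0$. Your explicit feasible range for $\delta_i$, your use of $m\le \nh$ (the paper writes $m\le\nh-1$, which is stronger than needed), and your remark on the degenerate case $m=1$ are small gains in rigor over the paper's write-up.
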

\begin{proof}
    To prove this, recall from \Cref{eq:alpha-def} that, for a fixed agent~$i$, the value of $\alpha_i(\delta_i)$ depends only on $\delta_i$.  
    Thus, for a given $i$, we can compute the derivative of $\alpha_i(\delta_i)$ with respect to $\delta_i$:
    \begin{align}
        \label{eq:coeff-deriv}
        \frac{d}{d\delta_i} \alpha_i(\delta_i) = \frac{2i-2\delta_i-1}{n-m-1}+\frac{1-2\delta_i}{m-1}.
    \end{align}
    We analyze this derivative separately for the cases $1 \le i \le m$ and $m < i \le \nh-1$.  
    For the first case, $1 \le i \le m$, we have:
    \begin{equation*}
        \frac{d}{d\delta_i} \alpha_i(\delta_i) =
        \begin{cases}
          +, & \delta_i=0, \\
          -, &  \delta_i=i, \\
        \end{cases}
    \end{equation*}
    and for the second case, $m < i \leq \nh-1$, it holds that:
    \begin{equation*}
        \frac{d}{d\delta_i} \alpha_i(\delta_i) =
        \begin{cases}
          +, & \delta_i=0, \\
          -, &\delta_i=m. 
        \end{cases}
    \end{equation*}
    So for a fixed agent $i$ in the first half (i.e., $i \in \{1, \ldots, \nh-1\}$), the minimum of $\alpha_i(\delta_i)$ occurs either at $\delta_i = 0$ or at $\delta_i = \min(i,m)$.
    If $1 \le i \le m$, we have $\delta_i = \min(i,m) = i$, and thus
    \[
        \alpha_i(0) = 2\left(i + \frac{i(1-i)}{n-m-1}\right) , \;\; \alpha_i(i) = 2\left(i + \frac{i(1-i)}{m-1}\right).
    \]
    Since $m \le \nh-1$ and $i(1-i) \le 0$, it follows that
    \begin{align}
        \label{eq:first-half-zero-i}
         \frac{i(1-i)}{n-m-1} \geq \frac{i(1-i)}{m-1} \Rightarrow \alpha_i(0) \geq \alpha_i(i).
    \end{align}
    Hence, for $1 \le i \le m$, the minimum of $\alpha_i(\delta_i)$ occurs at $\delta_i = i$. Thus, for achieving the minimum cost, $\delta_i$ has to grow as fast as possible and thus the $m$ agents in coalition $C_1$ must be the agents having the $m$ smallest values. \\
    On the other hand, for $m < i \le \nh-1$, we have $\delta_i = \min(i,m) = m$, so:
    \begin{align*}
        \alpha_i(m) &= 2\left(i - m + \frac{(i-m)(1-i+m)}{n-m-1}\right) 
        = 2\left(\frac{(i-m)(n-i)}{n-m-1}\right) \\ \notag
        &= 2\left(\frac{in-i^2-mn+im}{n-m-1}\right). 
    \end{align*}
    Also we can rewrite $\alpha_i(0)$ as:
    \begin{align*}
        \alpha_i(0) = 2\left(i + \frac{i(1-i)}{n-m-1}\right) = 2\left(\frac{i(n-m-i)}{n-m-1}\right) = 2\left(\frac{in-im-i^2}{n-m-1}\right).
    \end{align*}
    As a result, it holds that
    \begin{align}
        \label{eq:first-half-zero-m}
        \alpha_i(0) - \alpha_i(m) &= 2\left(\frac{in-im-i^2}{n-m-1} - \frac{in-i^2-mn+im}{n-m-1}\right) \\ \notag
        &= 2\left(\frac{nm-2im}{n-m-1}\right) = 2\left(\frac{m(n-2i)}{n-m-1}\right) \geq 0, 
    \end{align}
    since $i\leq \nh-1$.
    
    Combining (\ref{eq:first-half-zero-i}) and (\ref{eq:first-half-zero-m}), we conclude that $\alpha_i(\delta_i)$ is minimized at $\delta_i = \min(i,m)$ for all $1 \le i \le \nh-1$.  
    This implies that assigning the first $m$ agents to $C_1$ and the remaining agents to $C_2$ yields a coalition structure in which $\alpha_i(\delta_i)$ is minimized for every $1 \le i \le \nh-1$.
\end{proof}

\begin{restatable}{lemma}{lemmaMinAlphaAfterMed}\label{lemma:opt-avg-jump-min-alpha-after-med}
    For a set of agents $[n]$, the coalition structure $\mathcal{C}^R=\{C^R_1, C^R_2\}$ minimizes $\alpha_i$ for all $i \in \{\nhc+1, \ldots, n-1\}$, among all structures with coalitions of sizes $m$ and $n-m$.
\end{restatable}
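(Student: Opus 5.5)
The cleanest route is to reduce this statement to \Cref{lemma:opt-avg-jump-min-alpha-before-med} via the reflection $v \mapsto -v$, rather than redoing the derivative computation. Reflecting the values reverses the order of the agents: agent $j$ becomes agent $n+1-j$. The gap $d_i$ between agents $i$ and $i+1$ becomes the gap between positions $n-i$ and $n+1-i$, that is, $d'_{n-i}$ in the reflected instance. The coefficient of $d_i$ depends only on how many members of $C_1$ (and of $C_2$) lie on each side of the cut between $i$ and $i+1$, and reflection just exchanges the two sides. Concretely, with $\delta'_{n-i} = m - \delta_i$, one checks from \Cref{eq:alpha-def} that
\begin{equation*}
\alpha_i(\delta_i) = \alpha'_{n-i}(m-\delta_i).
\end{equation*}
This identity is visible directly from the first line of \Cref{eq:alpha-def}: it is symmetric under exchanging the counts before and after the cut, replacing $\delta_i$ by $m-\delta_i$ and $i-\delta_i$ by $n-m-i+\delta_i$.

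With the identity in hand, the argument runs as follows. For $i \in \{\nhc+1,\ldots,n-1\}$ we have $n-i \in \{1,\ldots,n-\nhc-1\} = \{1,\ldots,\nh-1\}$. Applying \Cref{lemma:opt-avg-jump-min-alpha-before-med} to the reflected instance, which has the same $m$ and the same coalition sizes, shows that $\alpha'_{n-i}$ is minimized by the structure whose $C_1$ consists of the first $m$ reflected agents. In the original labelling these are exactly the last $m$ agents, so the minimizing structure is $\mathcal{C}^R$, and transferring back through the identity gives the claim for every $i$ in the stated range simultaneously.

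If one prefers a direct argument instead, the same calculation as before can be repeated with $\delta_i$ ranging over $[\max(0,m-(n-i)),\min(i,m)]$. In that version the minimum of the concave function is attained at $\delta_i = \max(0, m-n+i)$, which is exactly the value $\delta_i$ takes under $\mathcal{C}^R$. Comparing the two endpoint values then reduces to the analogue of \Cref{eq:first-half-zero-m}, with sign governed by $2i-n \ge 0$.

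The main obstacle is the index bookkeeping. Specifically, the map $i \mapsto n-i$ must send $\{\nhc+1,\ldots,n-1\}$ exactly onto $\{1,\ldots,\nh-1\}$ for both parities of $n$, and the earlier lemma's assumption $m \le \nh$ must be preserved, which it is because $m$ is unchanged by the reflection. A minor additional subtlety is that the previous proof implicitly assumed $m \le \nh - 1$; I would note that the same restriction is inherited, or check the boundary case $m = \nh$ separately.
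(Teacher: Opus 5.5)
Your argument is correct, but it takes a different route from the paper. The paper's entire proof is the remark that one can repeat the approach of \Cref{lemma:opt-avg-jump-min-alpha-before-med}, i.e., redo the concavity-plus-endpoint comparison on the second half. You instead make the mirror symmetry explicit and use the earlier lemma as a black box.

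Your identity $\alpha_i(\delta_i)=\alpha'_{n-i}(m-\delta_i)$ holds. The first line of \Cref{eq:alpha-def} has the form $\frac{2ab}{m-1}+\frac{2ce}{n-m-1}$, where $a,b$ count the members of $C_1$ before and after the cut and $c,e$ those of $C_2$. Index reversal $j\mapsto n+1-j$ swaps $a\leftrightarrow b$ and $c\leftrightarrow e$. The map $i\mapsto n-i$ sends $\{\nhc+1,\ldots,n-1\}$ onto $\{1,\ldots,\nh-1\}$ for both parities of $n$. Reversal is also a size-preserving bijection on coalition structures that identifies $\mathcal{C}^R$ with the reflected instance's $\mathcal{C}^L$. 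Together these facts complete the reduction.

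What your route buys is that it avoids re-running the case analysis, which is less routine than ``the same approach'' suggests. On the second half the feasible range of $\delta_i$ is $[\max(0,m-n+i),m]$, so a direct proof needs two cases:
\begin{itemize}
\item When $n-i\ge m$, the comparison is the analogue of \Cref{eq:first-half-zero-m}, with sign given by $n-2i<0$.
\item When $j:=n-i<m$, it is the analogue of \Cref{eq:first-half-zero-i}, namely $\frac{j(m-j)}{m-1}\le\frac{j(n-m-j)}{n-m-1}$. The difference of the two sides has the sign of $(j-1)(n-2m)\ge 0$, not of $2i-n$.
\end{itemize}
Your sketch of the direct alternative mentions only the first case.

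Your caveat about $m\le\nh-1$ is unnecessary. The comparison in \Cref{eq:first-half-zero-i} only needs $m-1\le n-m-1$, which is exactly the standing assumption $m\le\nh$, so $m=\nh$ needs no separate check. The genuine implicit requirement in both lemmas is that the denominators $m-1$ and $n-m-1$ are nonzero, and your reflection preserves that as well.
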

\begin{proof}
    This statement can be proven using the same approach as in \Cref{lemma:opt-avg-jump-min-alpha-before-med}.
\end{proof}

\begin{restatable}{corollary}{corrAllBeforeMed}\label{corr:opt-avg-jump-before-med}
    For a given coalition structure $\mathcal{C}=\{C_1, C_2\}$ with $|C_1|=m$ and $|C_2|= n - m$, if all the agents in coalition $C_1$ are in $\bigl\{1, 2, \ldots, \nh\bigr\}$, then the coalition structure $\mathcal{C}^L$ has a smaller or equal cost compared with coalition structure $\mathcal{C}$.
\end{restatable}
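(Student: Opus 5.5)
The plan is to compare the two structures term by term, using the decomposition $\soc{\mathcal{C}} = \sum_{i=1}^{n-1}\alpha_i(\delta_i)\, d_i$. Since the agents are sorted, every $d_i = v_{i+1}-v_i$ is non-negative. It therefore suffices to show that for every index $i \in \{1,\dots,n-1\}$ the coefficient of $d_i$ in $\mathcal{C}^L$ is at most the corresponding coefficient in $\mathcal{C}$. Both structures have coalition sizes $m$ and $n-m$, and, as in the setting of \Cref{lemma:opt-avg-jump-min-alpha-before-med}, we have $m \le \nh$. Hence the same formula \eqref{eq:alpha-def} applies to both, with only $\delta_i$ differing.

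I would split the index range into two parts.

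\emph{First part, $1 \le i \le \nh-1$.} Here I invoke \Cref{lemma:opt-avg-jump-min-alpha-before-med} directly. It states that $\mathcal{C}^L$ minimizes $\alpha_i$ among all structures with coalition sizes $m$ and $n-m$. So $\alpha_i(\delta_i^{L}) \le \alpha_i(\delta_i^{\mathcal{C}})$, where $\delta_i^L$ and $\delta_i^{\mathcal{C}}$ denote the values of $\delta_i$ in $\mathcal{C}^L$ and in $\mathcal{C}$.

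\emph{Second part, $\nh \le i \le n-1$.} Here the hypothesis does the work. Since $C_1 \subseteq \{1,\dots,\nh\}$, every member of $C_1$ has index at most $i$, so $\delta_i^{\mathcal{C}} = m$. In $\mathcal{C}^L$, the coalition $C^L_1 = \{1,\dots,m\}$ with $m \le \nh \le i$, so also $\delta_i^L = m$. Because $\alpha_i$ depends only on $i$ and $\delta_i$ (with $n$ and $m$ fixed), the two coefficients coincide. Summing the inequalities from both parts against the non-negative weights $d_i$ then gives $\soc{\mathcal{C}^L} \le \soc{\mathcal{C}}$.

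The main obstacle is checking the boundary and degenerate cases rather than the main argument.

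First, formula \eqref{eq:alpha-def} divides by $m-1$. If $m=1$, the singleton coalition contributes $0$ (we are in \textsc{HIS}; in \textsc{UIS} both structures are equally affected by the singleton). In that case I would redo the coefficient computation with only the $C_2$ term. In that form $\alpha_i$ depends on whether $i \ge$ the singleton's index, and placing the singleton at agent $1$ still minimizes each coefficient by the same derivative/endpoint comparison.

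Second, I need to verify that the proof of \Cref{lemma:opt-avg-jump-min-alpha-before-med} really covers all $i \le \nh-1$ when $m = \nh$. That proof only used $m \le \nh$ and $i \le \nh - 1$ in the inequality $n - 2i \ge 0$, so I expect it to go through. If it does not, I would fall back on the second-part argument, which already covers every $i \ge m$, and apply the first-case computation (where $\delta_i = i$ is optimal) only for $i < m$.
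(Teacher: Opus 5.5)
Your proposal is correct and follows the paper's proof. For $i \ge \nh$ the hypothesis $C_1 \subseteq \{1,\dots,\nh\}$ gives $\delta_i = m$ in both $\mathcal{C}$ and $\mathcal{C}^L$, so those coefficients coincide; for $i \le \nh-1$, \Cref{lemma:opt-avg-jump-min-alpha-before-med} shows that $\mathcal{C}^L$ minimizes $\alpha_i$; summing against $d_i \ge 0$ finishes. Your extra checks of the degenerate cases $m=1$ and $m=\nh$ are sound and only make explicit what the paper leaves implicit (note also that $m \le \nh$ follows directly from the hypothesis).
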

\begin{proof}
    First, note that by definition, for every $i$, the contribution of $d_i$ to the total cost depends solely on $i$ and $\delta_i$.  
    Thus, as long as only the positions of agents in $C_1$ before or after agent $i$ (including $i$ itself) are changed, $\alpha_i$ remains unchanged. \\
    Since all agents in $C_1$ are chosen from 
    $\bigl\{1, 2, \ldots, \nh \bigr\}$, for every $i > \nh-1$ we have $\delta_i = m$.  
    Hence, selecting other agents from the first half for $C_1$ does not affect $\alpha_i(\delta_i)$ for $i > \nh-1$.\\    
    Consequently, if we choose the $m$ agents from the first half in a way that minimizes $\alpha_i(\delta_i)$ for $i \in \bigl\{1, \ldots, \nh-1 \bigr\}$, we achieve a smaller total cost.\\    
    Applying the result of~\Cref{lemma:opt-avg-jump-min-alpha-before-med}, we conclude that in the coalition structure 
    $\mathcal{C}^L = \{C^L_1, C^L_2\}$, where the first $m$ agents are assigned to $C^L_1$ and the remaining agents to $C^L_2$, all $\alpha_i$ are minimized for $i \in \{1, \ldots, \nh-1\}$.
\end{proof}

\begin{restatable}{corollary}{corrAllAfterMed}\label{lemma:opt-avg-jump-after-med}
    For a given coalition structure $\mathcal{C}=\{C_1, C_2\}$ with $|C_1|=m$ and $|C_2|= n - m$, if all the agents in coalition $C_1$ are in $\bigl\{\nhc+1, \ldots, n\bigr\}$, then the coalition structure $\mathcal{C}^R$, has a smaller total cost.
\end{restatable}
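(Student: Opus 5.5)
The plan is to mirror the proof of \Cref{corr:opt-avg-jump-before-med}, replacing \Cref{lemma:opt-avg-jump-min-alpha-before-med} by \Cref{lemma:opt-avg-jump-min-alpha-after-med}. We write the social cost as $\soc{\mathcal{C}} = \sum_{i=1}^{n-1}\alpha_i(\delta_i)\, d_i$, where the coefficient of $d_i$ depends only on $i$ and on $\delta_i$, the number of agents of $C_1$ among $\{1,\dots,i\}$. Since $d_i \ge 0$, it suffices to show that every coefficient of $\mathcal{C}^R$ is at most the corresponding coefficient of $\mathcal{C}$.

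First we split the indices into a low range and a high range. Assume all agents of $C_1$ lie in $\{\nhc+1,\dots,n\}$. Then for every $i \le \nhc$ no agent of $C_1$ precedes or equals $i$, so $\delta_i = 0$. The same holds in $\mathcal{C}^R$, because $C^R_1$ consists of the last $m \le \nh$ agents, all of which have index at least $n-m+1 \ge \nhc+1$. Hence $\alpha_i$ coincides in $\mathcal{C}$ and $\mathcal{C}^R$ for all $i \in \{1,\dots,\nhc\}$. Here, as in the paper, we use that rearranging members of $C_1$ only on one side of position $i$ leaves $\alpha_i$ unchanged.

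For the remaining indices $i \in \{\nhc+1,\dots,n-1\}$, we invoke \Cref{lemma:opt-avg-jump-min-alpha-after-med}. It states that $\mathcal{C}^R$ minimizes each of these $\alpha_i$ among all structures with coalition sizes $m$ and $n-m$, and in particular $\alpha_i$ of $\mathcal{C}^R$ is at most $\alpha_i$ of $\mathcal{C}$. Summing over $i$ against the nonnegative weights $d_i$ gives $\soc{\mathcal{C}^R} \le \soc{\mathcal{C}}$.

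The main obstacle is the gap that \Cref{lemma:opt-avg-jump-min-alpha-after-med} and the symmetric proof of \Cref{lemma:opt-avg-jump-min-alpha-before-med} leave around the median, where $\alpha_i$ is not proven to be minimized. Our plan handles this by checking that, under the hypothesis, $\delta_i = 0$ holds exactly up to $i = \nhc$, so every index not covered by \Cref{lemma:opt-avg-jump-min-alpha-after-med} falls into the low range where the coefficients are equal.

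A second issue is the word ``smaller'' in the statement. Strictly smaller cost cannot hold in general: if $\mathcal{C} = \mathcal{C}^R$, or if all relevant $d_i$ vanish, the costs are equal. We would therefore prove ``smaller or equal'', as in \Cref{corr:opt-avg-jump-before-med}. If strictness is really intended, it would need $\mathcal{C} \ne \mathcal{C}^R$ together with a strictly positive $d_i$ at an index where the minimizing inequality (the analogue of \eqref{eq:first-half-zero-m}) is strict, and we would remark on this rather than attempt it.
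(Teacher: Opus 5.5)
Your proposal is correct and is the paper's argument made explicit: for $i\le\lceil n/2\rceil$ both $\mathcal{C}$ and $\mathcal{C}^R$ have $\delta_i=0$, and every $i\ge\lceil n/2\rceil+1$ is handled by \Cref{lemma:opt-avg-jump-min-alpha-after-med}, so together the two ranges cover all of $\{1,\dots,n-1\}$. You are also right to read \enquote{smaller} as \enquote{smaller or equal}: $\mathcal{C}=\mathcal{C}^R$ itself satisfies the hypothesis, and this weaker form matches the wording of the companion \Cref{corr:opt-avg-jump-before-med}.
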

\begin{proof}
    This can be proven using the same approach as \Cref{corr:opt-avg-jump-before-med}, and using \Cref{lemma:opt-avg-jump-min-alpha-after-med}.
\end{proof}

Before stating the next lemma, we introduce the notion of an \emph{isolated agent} in a coalition. We say that an agent $i$ is isolated in coalition~$C$, if none of the agents $i-1$ or $i+1$ are in the coalition~$C$.
Note that we do not consider $v_1\in C_1$ or $v_n\in C_1$ as isolated agents, if $v_2\in C_2$ resp. $v_{n-1}\in C_2$, as for them, the left resp. right neighboring agent is not well-defined.
\begin{restatable}{lemma}{lemmaNoIsolated}\label{lemma:opt-avg-jump-no-isolated}
    An optimal coalition structure $\mathcal{C}=\{C_1, C_2\}$ cannot have an isolated agent.
\end{restatable}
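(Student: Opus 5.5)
We argue by an exchange argument that uses the representation $\soc{\mathcal{C}}=\sum_{i=1}^{n-1}\alpha_i(\delta_i)\,d_i$ from \Cref{eq:alpha-def}. Suppose, for contradiction, that $\mathcal{C}=\{C_1,C_2\}$ is optimal with $|C_1|=m$ and $|C_2|=n-m$. Suppose also that agent $i$ (with $2\le i\le n-1$, since the endpoints are excluded by definition) is isolated. By the symmetry $C_1\leftrightarrow C_2$ of \Cref{eq:alpha-def}, with $m$ and $n-m$ exchanged, we may assume $i\in C_1$ and $i-1,i+1\in C_2$. We assume $m,n-m\ge 2$, so that the denominators are defined. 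Any singleton case is handled separately: moving the singleton agent costs nothing extra, or the singleton is not isolated.

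We compare $\mathcal{C}$ with two size-preserving modifications. In $\mathcal{C}^{+}$ we exchange $i$ and $i+1$ between the coalitions. In $\mathcal{C}^{-}$ we exchange $i$ and $i-1$. The key point is that a swap of two neighbours changes exactly one prefix count. Write $\delta:=\delta_{i-1}$, so $\delta_i=\delta+1$. Then the exchange with $i+1$ only lowers $\delta_i$ from $\delta+1$ to $\delta$, and the exchange with $i-1$ only raises $\delta_{i-1}$ from $\delta$ to $\delta+1$. All other $\delta_j$, and hence all other $\alpha_j$, stay unchanged. Writing $f(x)=\frac{x(1-x)}{m-1}$ and $g(t)=\frac{t(1-t)}{n-m-1}$, a direct computation from \Cref{eq:alpha-def} gives
\begin{align*}
\soc{\mathcal{C}^{+}}-\soc{\mathcal{C}} &= 2\Bigl(\tfrac{2\delta}{m-1}-\tfrac{2(i-\delta-1)}{n-m-1}\Bigr)d_i =: A\,d_i,\\
\soc{\mathcal{C}^{-}}-\soc{\mathcal{C}} &= 2\Bigl(-\tfrac{2\delta}{m-1}+\tfrac{2(i-\delta-2)}{n-m-1}\Bigr)d_{i-1} =: B\,d_{i-1}.
\end{align*}
This is routine bookkeeping, using $f(\delta)-f(\delta+1)=\frac{2\delta}{m-1}$ and $g(t)-g(t-1)=-\frac{2(t-1)}{n-m-1}$.

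The crucial observation is $A+B=-\frac{4}{n-m-1}<0$. This is essentially the strict concavity of $\alpha_j$ in $\delta_j$, which was already used in \Cref{lemma:opt-avg-jump-min-alpha-before-med}. Hence $A<0$ or $B<0$. If $A<0$ and $d_i>0$, then $\mathcal{C}^{+}$ is strictly cheaper, which contradicts optimality; symmetrically if $B<0$ and $d_{i-1}>0$. Note that these swaps keep the coalition sizes fixed, so they are admissible comparisons for the optimum.

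The main obstacle is the degenerate case where the improving direction has a zero gap, for example $A<0$ but $d_i=v_{i+1}-v_i=0$. Then agents $i$ and $i+1$ have equal values. Exchanging them does not change the multiset of values in either coalition, so we may relabel equal-valued agents in $\mathcal{C}$ so that $i$ is no longer isolated. The cleanest formulation is therefore to prove the lemma up to relabeling of agents with equal values. Alternatively, one can first merge runs of equal values and apply the argument to the distinct value positions, where all gaps are positive. I would try to state the lemma in this form and verify that the sign computation for $A$ and $B$ still goes through when $i$ sits inside such a run. If both $A$ and $B$ were ever non-negative, the identity $A+B<0$ would already give a contradiction, so only this tie issue needs care.
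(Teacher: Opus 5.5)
Your argument is exactly the paper's proof. Exchanging the isolated agent with $i+1$ or with $i-1$ changes only $\delta_i$ resp.\ $\delta_{i-1}$. Your $A$ and $B$ coincide with the paper's two coefficient differences, and their sum $-\frac{4}{n-m-1}$ forces one exchange to be improving. The zero-gap case you flag is a real oversight in the paper: if all values are equal, every structure is optimal, including ones with isolated agents, so some ``up to relabeling'' qualification is needed. To close it, note that a single relabeling may only move the isolated agent to position $i+1$. Instead, slide it through its whole run of equal values; the relevant coefficient drops by $\frac{4}{n-m-1}$ per step. Stop when its next neighbour lies in its own coalition or the line ends, in which case the number of isolated agents has strictly decreased. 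Otherwise the next gap is positive, and the exchange strictly lowers the cost, contradicting optimality.
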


\begin{proof}
    Let $|C_1| = m$ and $|C_2| = n-m$, and, without loss of generality, let $j$ be an isolated agent in coalition $C_1$, so that agents $j-1$ and $j+1$ belong to $C_2$.  
    The case where $j$ is instead in $C_2$ can be proven analogously.
    Moreover, let $\cev{\mathcal{C}}$ denote the coalition structure obtained by assigning agent $j-1$ to $C_1$ and agent $j$ to $C_2$; that is, by swapping agents $j$ and $j-1$.  
    Note that in $\cev{\mathcal{C}}$, only $\delta_{j-1}$ increases by~1, while all other $\delta_i$ for agents $i \neq j$ remain unchanged.  
    Hence, we have $\cev{\delta}_{j-1} = \delta_{j-1} + 1 = \delta_j$, since in $\mathcal{C}$, we have $\delta_{j-1} = \delta_j - 1$.  
    Let $\cev{\alpha}_{j-1}$ denote the contribution of $d_{j-1}$ to $\soc{\cev{\mathcal{C}}}$. Then:
    \begin{align*}
    \cev{\alpha}_{j-1}
      &= \alpha_{j-1}(\delta_{j-1}+1) = \alpha_{j-1}(\delta_{j})\\
      & = 2\biggl( j-1+\frac{(1-\delta_{j})(\delta_{j})}{m-1}+\frac{(j-1-\delta_{j})(2-j+\delta_{j})}{n-m-1}\biggr),
    \end{align*}
    and its difference to $\alpha_{j-1}=\alpha_{j-1}(\delta_{j-1})$, using \Cref{eq:alpha-def}, will~be:
    \begin{align}
        \label{eq:diff-left-one}
        \cev{\alpha}_{j-1} - \alpha_{j-1} &= 2\biggl(\frac{2(1-\delta_j)}{m-1}+\frac{2(j-\delta_j-1)}{n-m-1}\biggr).
    \end{align}
    Conversely, consider the coalition structure $\vec{\mathcal{C}}$ obtained by assigning agent $j+1$ to $C_1$ and agent $j$ to $C_2$, that is, by swapping agents $j$ and $j+1$.  
    In $\vec{\mathcal{C}}$, only $\delta_j$ decreases by one, while all other $\delta_i$ for $i \neq j$ remain unchanged.  
    Let $\vec{\alpha}_{j}$, denote the contribution of $d_{j}$ to $\soc{\vec{\mathcal{C}}}$.
    Then:
    \begin{align*}
    \vec{\alpha}_j
      &= \alpha_j(\delta_{j}-1)\\
      & = 2\biggl( j+\frac{(\delta_{j}-1)(2-\delta_{j})}{m-1} + \frac{(j-\delta_{j}+1)(\delta_{j}-j)}{n-m-1}\biggr)
    \end{align*}
    and its difference to $\alpha_{j}$, using \Cref{eq:alpha-def}, will be:
    \begin{align}
        \label{eq:diff-right-one}
        \vec{\alpha}_{j} - \alpha_j &= 2\biggl(\frac{2(\delta_j-1)}{m-1}+\frac{2(\delta_j-j)}{n-m-1}\biggr).
    \end{align}
    Combining \Cref{eq:diff-left-one} and \Cref{eq:diff-right-one}, we get:
    \begin{align*}
        \bigl(\cev{\alpha}_{j-1} - \alpha_{j-1}\bigr)+\bigl(\vec{\alpha}_{j} - \alpha_j\bigr) &= \frac{-4}{n-m-1}.
    \end{align*}
    This means that at least one of $\bigl(\cev{\alpha}_{j-1} - \alpha_{j-1}\bigr)$ or $\bigl(\vec{\alpha}_j - \alpha_j\bigr)$ is less than or equal to zero.  
    Consequently, at least one of the coalition structures, $\cev{\mathcal{C}}$ or $\vec{\mathcal{C}}$, results in a lower cost than $\mathcal{C}$.
\end{proof}

\begin{restatable}{lemma}{LemmaMiddleBlob}
    \label{lemma:opt-avg-jump-middle-blob}
    For a coalition structure $\mathcal{C}=\{C_1, C_2\}$, where $C_1$ consists of $m$ consecutive agents, indexed from $\ell$ to $r$, with ${\ell \leq \lfloor{n/2}\rfloor < r}$, at least one of the coalition structures $\mathcal{C}^L$ or $\mathcal{C}^R$ has a smaller cost.
\end{restatable}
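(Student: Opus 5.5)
We work in the $\alpha_i(\delta_i)$ framework of \Cref{eq:alpha-def}, with $m=|C_1|\le \nh$. Since $\soc{\mathcal{C}}=\sum_i \alpha_i d_i$ with all $d_i\ge 0$, we do not compare each $\alpha_i$ separately. Instead, we write the costs of $\mathcal{C}^L$ and $\mathcal{C}^R$ as a convex combination and show this combination is at most $\soc{\mathcal{C}}$ coefficient by coefficient. Then the minimum of the two costs is at most $\soc{\mathcal{C}}$.

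For the block $C_1=\{\ell,\dots,r\}$ we have $\delta_i=0$ for $i<\ell$, $\delta_i=i-\ell+1$ for $\ell\le i\le r$, and $\delta_i=m$ for $i>r$. In $\mathcal{C}^L$ the prefix values are $\delta_i=\min(i,m)$. In $\mathcal{C}^R$ they are $\delta_i=\max(0,i-(n-m))$. Both profiles, and the block profile, are nondecreasing, $1$-Lipschitz, and run from $0$ to $m$.

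\textbf{Step 1.} Use the symmetry $i\mapsto n-i$ (reversing values, which maps $\delta_i$ to $m-\delta_{n-i}$). With it we may assume the block meets the first half at least as much as the second. Concretely, we assume $\nh-\ell+1 \ge r-\nh$, and we aim to show that $\mathcal{C}^L$ beats $\mathcal{C}$ on the relevant coefficients.

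\textbf{Step 2.} Handle the coefficients $i\le \nh-1$. By \Cref{lemma:opt-avg-jump-min-alpha-before-med}, $\alpha_i$ is minimized by $\mathcal{C}^L$ there. Symmetrically, by \Cref{lemma:opt-avg-jump-min-alpha-after-med}, $\mathcal{C}^R$ minimizes the coefficients with $i\ge \nhc+1$. The difficulty is the tail coefficients with $i> r$ under $\mathcal{C}^L$ (respectively $i<\ell$ under $\mathcal{C}^R$). There the block has $\delta_i=m$, equal to the $\mathcal{C}^L$ value, so those coefficients \emph{agree}.

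\textbf{Step 3.} Handle the range $\nh\le i\le r$. For $i\ge r$ the block and $\mathcal{C}^L$ coincide. The only coefficients where $\mathcal{C}^L$ might be worse are $\nh\le i<r$, where the block has $\delta_i=i-\ell+1<m$ and $\mathcal{C}^L$ has $\delta_i=m$. Write $\alpha_i$ as a concave quadratic in $\delta$; the derivative formula \Cref{eq:coeff-deriv} shows it. Then compute
\[
\alpha_i(m)-\alpha_i(\delta)=\frac{2(m-\delta)\bigl(m+\delta-1\bigr)}{m-1}\cdot(-1)+\frac{2(m-\delta)\bigl(2i-m-\delta+1-n+m\bigr)}{n-m-1}\cdot(\ldots)
\]
explicitly, simplified along the lines of \Cref{eq:first-half-zero-m}. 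The expectation is that it is $O((m-\delta)(2i-n))$ plus a negative term. Bound the loss on $[\nh,r)$ by the gains already secured on $[\ell,\nh)$: by concavity the per-coefficient gain at position $\nh-t$ dominates the loss at $\nh+t$. The assumption from Step 1 pairs every loss index with a distinct gain index.

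\textbf{Main obstacle.} The main obstacle is that this pairing compares different $d_i$'s. A gain at $d_{\nh-t}$ cannot pay for a loss at $d_{\nh+t}$ if the gaps differ. To avoid this, I would instead take the convex combination $\theta\,\soc{\mathcal{C}^L}+(1-\theta)\soc{\mathcal{C}^R}$, with $\theta$ chosen from how the block straddles the median, for instance $\theta=(\nh-\ell+1)/m$. I would then verify the per-index inequality
\[
\theta\alpha_i^L+(1-\theta)\alpha_i^R\le\alpha_i^{\mathcal{C}}
\]
for every $i$, using concavity of $\alpha_i$ in $\delta$. The block profile $i-\ell+1$ should lie \emph{between} the $L$ and $R$ profiles in a way that makes Jensen go the right direction. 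Checking this inequality, especially near $i=\nh$ where the block profile crosses neither extreme, is the computation I expect to be hardest. If it fails for a single $\theta$, I would fall back to case analysis on $r-\nh$ versus $\nh-\ell$.
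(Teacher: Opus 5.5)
Your skeleton is the right one; it is the mirror image of the paper's proof, which assumes the block leans right and compares with $\mathcal{C}^R$. But the proposal has a genuine gap at Step 3, the step that carries the content. You never determine what happens on the window $\lfloor n/2\rfloor \le i < r$. The displayed difference is left unfinished (it contains an unspecified factor $(\ldots)$). You then \emph{expect} $\mathcal{C}^L$ to lose there and try to pay for the loss elsewhere. The pairing idea fails for the reason you give: the $d_i$ differ across indices. The fixed-$\theta$ fallback does not follow from concavity either. At each index, Jensen only gives $\alpha_i^{\mathcal{C}} \ge \theta_i\alpha_i^L + (1-\theta_i)\alpha_i^R$, where $\theta_i$ is the $i$-dependent barycentric weight of $\delta_i$ between $\delta_i^R$ and $\delta_i^L$. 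That amounts to $\alpha_i^{\mathcal{C}} \ge \min(\alpha_i^L,\alpha_i^R)$, which does not sum to a comparison with $\soc{\mathcal{C}^L}$ or $\soc{\mathcal{C}^R}$ for one fixed $\theta$. So as written, nothing closes the argument.

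The missing idea is that there is no loss to compensate. For $\lfloor n/2\rfloor \le i < r$ you have $i \ge m$, so $\delta_i^L = m$. Meanwhile $\delta_i = i-\ell+1$ and $m-\delta_i = r-i$. Expanding \Cref{eq:alpha-def} exactly gives
\[
\alpha_i(i-\ell+1)-\alpha_i(m) \;=\; 2(r-i)\left(\frac{i-\ell+1}{m-1}+\frac{n+1-\ell-i}{n-m-1}\right).
\]
Your Step 1 hypothesis $\lfloor n/2\rfloor-\ell+1 \ge r-\lfloor n/2\rfloor$ says $\ell+r \le 2\lfloor n/2\rfloor+1 \le n+1$. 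Equivalently, the right part of $C_2$ is at least as large as its left part. Hence $n+1-\ell-i \ge n+1-\ell-r \ge 0$, and the difference is nonnegative. Now combine three pieces: \Cref{lemma:opt-avg-jump-min-alpha-before-med} for $i \le \lfloor n/2\rfloor-1$, this identity on the window, and equality for $i \ge r$. Then $\mathcal{C}^L$ dominates $\mathcal{C}$ coefficient by coefficient, and since all $d_i \ge 0$ you are done, with no convex combination needed. This is the paper's computation mirrored. There the block is assumed to lean right, $2(\ell-1) \ge n-m$, and the analogous identity shows $\alpha_i(0) \le \alpha_i(i-\ell+1)$ for $\ell \le i \le \lfloor n/2\rfloor$. \Cref{lemma:opt-avg-jump-min-alpha-after-med} then handles $i > \lfloor n/2\rfloor$, and the coefficients with $i<\ell$ agree.
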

\begin{proof}
        Let $\Delta = \lfloor{n/2}\rfloor - \ell$ and w.l.o.g. assume that $\Delta \leq m/2$. In this case, we show that $C^R$ has smaller social cost compared to $C$.
        
        For that, let $\alpha^R_i$ and $\delta^R_i$ denote the contribution of $d_i$ to the cost of $\mathcal{C}^R$ and the number of agents before agent $i$ (including itself) in coalition $C^R_1$, respectively.
        By~\Cref{eq:alpha-def}, we get that the value of $\alpha_i$ for $1\leq i \leq \ell-1$ is the same in both $\mathcal{C}$ and $\mathcal{C}^R$, as $\delta^R_i=\delta_i=0$ for all these agents in both coalition structures.
        On the other hand, by \Cref{lemma:opt-avg-jump-min-alpha-after-med}, we know that in $\mathcal{C}^R$, all distances in the second half, that is $\nh <i\leq n-1$, have the smallest possible $\alpha_i$.
        So it remains to show that $\alpha^R_i$ is not bigger than $\alpha_i$, for  $\ell \leq i \leq \nh$.
        For these distances in $\mathcal{C}$, using \Cref{eq:alpha-def}, we have:
        \begin{align*}
            \alpha_i(\delta_i) &= 2\left(i+\frac{\delta_i(1-\delta_i)}{m-1}+\frac{(i-\delta_i)(1-i+\delta_i)}{n-m-1}\right) \\
            & = 2\left(i + \frac{\delta_i(1-\delta_i)}{m-1} + \frac{(\ell - 1)(2-\ell)}{n-m-1}\right),
        \end{align*}
        and in $\mathcal{C}^R$, we have:
        \begin{align*}
            \alpha^R_i(\delta_i) &= 2\left(i+ \frac{i(1-i)}{n-m-1}\right).
        \end{align*}
        Computing their difference:
        \begin{align*}
            &\alpha_i(\delta_i) - \alpha^R_i(\delta_i) \\
            =\;& 2\left(\frac{\delta_i(1-\delta_i)}{m-1} + \frac{(\ell - 1)(2-\ell)-i(1-i)}{n-m-1}\right)\\
            =\;& 2\left(\frac{\delta_i(1-\delta_i)}{m-1} + \frac{(\ell - 1)(2-\ell)+i(i-1)}{n-m-1}\right).
        \end{align*}
        Note that since $i \geq \ell-1 +\delta_i$, it holds that 
        \begin{align}
            &i(i-1)\notag\\
            \geq\;& (\ell-1 +\delta_i)(\ell-1 +\delta_i-1) \notag \\ 
            =\;& (\ell-1)(\ell +\delta_i-2)+\delta_i(\ell +\delta_i-2).\label{eq:multi-bound}
        \end{align}
        Applying \Cref{eq:multi-bound} to the difference of $\alpha_i(\delta_i)$ and $ \alpha^R_i(\delta_i)$, we get:
        \begin{align*}
            &\alpha_i(\delta_i) - \alpha^R_i(\delta_i)\\
            \geq\;& 2\bigg(\frac{\delta_i(1-\delta_i)}{m-1} + \frac{\delta_i(\ell +\delta_i-2)}{n-m-1} + \frac{(\ell - 1)(2-\ell+\ell +\delta_i-2)}{n-m-1}\bigg) \\
            =\;& 2\bigg(\frac{\delta_i(1-\delta_i)}{m-1} + \frac{\delta_i(\ell +\delta_i-2)}{n-m-1} + \frac{\delta_i(\ell - 1)}{n-m-1}\bigg) \\
            =\;& 2\delta_i \left( \frac{2\ell - 3 +\delta_i}{n-m-1}+\frac{1-\delta_i}{m-1} \right).
        \end{align*}
        Further, it holds that $2(\ell-1) \geq n - m$ as $\ell-1$ is the size of the left part of coalition $C_2$ and $n-m$ is the full size of $C_2$. Thus, it follows
        
        \begin{align*}
            \alpha_i(\delta_i) - \alpha^R_i(\delta_i) &\geq  2\delta_i \left(\frac{n-m-1}{n-m-1}+\frac{\delta_i}{n-m-1}+\frac{1-\delta_i}{m-1} \right) \\
            &= 2\delta_i \left(1+\frac{\delta_i}{n-m-1}+\frac{1-\delta_i}{m-1} \right) \\
            &= 2\delta_i \left(\frac{\delta_i}{n-m-1}+\frac{m-\delta_i}{m-1} \right) > 0.
        \end{align*}
        This is greater or equal to zero, since all variables are non-negative and $\delta_i \leq m$. Additionally, since we only consider indices starting at $\ell$, we know that $\delta_i$ is at least $1$. Thus, this term is even strictly greater than zero.
        As a result, we observe that $\alpha^R_i$ is smaller than $\alpha_i$ for $\ell \leq i \leq \nh$.
        Consequently, $\mathcal{C}^R$ has a smaller cost than $\mathcal{C}$.\\
        For the other case, when $\Delta> m/2$, by the same approach, we can prove that $\mathcal{C}^L$ has a smaller cost.
\end{proof}

Based on the above lemmata and observations, we propose the following conjecture.
\avJumpOptConj*

This conjecture indicates that \textsc{Average-Jump} games may follow a fundamental structural rule. If true, the optimal coalition structure would be sorted for any $k$, as otherwise any two overlapping coalitions $C_i$ and $C_j$ in $\mathcal{C}$ could be reordered to lower the overall cost.

\section{Omitted Details from Section 4.1}
\label{appx:sub:PoA}
\lemmaPoACutoff*
\begin{proof}[Full Proof of \Cref{lemma:PoA-cutoff}]
    We give two different game instances for $k=2$ and $k > 2$ respectively, and show that they have an optimum with social cost $0$ and admit a coalition structure with positive social cost that is both swap and jump stable. For all constructions, we assume a specific~$\lambda$. However, they can be scaled to work for all~$\lambda$.

    For $k=2$, assume $\lambda=1$ and consider the game with $n\coloneqq5$ agents with values $(v_i)_{i \in [5]} \coloneqq (1, 1-\varepsilon, 1, 1+\varepsilon, 2)$ for some $\varepsilon \in (0,\frac{1}{2}\lambda)$. For the \textsc{Swap} instance additionally assume coalition sizes of $k_1\coloneqq2$ and $k_2\coloneqq3$. For both, the \textsc{Swap}- and the \textsc{Jump} instance, the coalition structure
        $$\mathcal{C}^* \coloneqq \big\{\{0,1-\varepsilon,1\},\{1+\varepsilon,2\}\big\}$$
    is the social optimum as it has a social cost of $0$. However, consider the coalition structure
        $$\mathcal{C} \coloneqq \big\{\{0,1,2\},\{1-\varepsilon,1+\varepsilon\}\big\}$$
    which has a social cost of $1$. Note that the coalition structure $\mathcal{C}$ is swap stable because there is one coalition with only friends. Further, it is jump stable, because the only agents with a positive cost have exactly one friend and one enemy in both coalitions and can thus not improve. This shows an unbounded PoA of \textsc{Cutoff$_\lambda$-\{Swap, Jump\}} games with $k=2$ coalitions.

    For $k>2$, consider the game with $n\coloneqq 4(k-1)$ agents and coalitions of size $k_1 \coloneqq 2(k-1)$ and $k_i = 2$ for all $i \in [k] \setminus \{1\}$. The agents have values to fit exactly into the coalitions, i.e., there are $2(k-1)$ agents with value $0$ and for all $\{2,\dots,k\}$ exactly two agents with that value each. In this setting,
        $$\mathcal{C}^* \coloneqq \big\{\{\overbrace{0\ldots0}^{2(k-1)}\},\,\{2,2\},\,\dots,\{k,k\}\big\}$$
    is the social optimum as it has a social cost of $0$. However, placing all $0$-valued agents in the coalitions of size $2$ forces the other agents to be in the large coalition of size $2(k-1)$. This results in the coalition structure
        $$\mathcal{C} \coloneqq \big\{\overbrace{\{0,0\},\dots,\{0,0\}}^{k-1},\,\{2,2,\,3,3,\,\,\dots,\,k,k\}\big\}.$$
    For all $k>2$ there are at least two enemies in the coalition last coalition of $\mathcal{C}$, yielding a strictly positive social cost. Again, this coalition structure is indeed a swap equilibrium because it is sorted (see \Cref{theorem:swap-sorted-eq-exists}).\footnote{Note that this construction can be extended to work for all even $n \geq 4(k-1)$. Then, the coalition sizes are defined to satisfy $k_i \geq 2$ for all $i \in [k]\setminus\{1\}$ and $k_1 = \frac{n}{2}$, and the values fit exactly into these coalitions, i.e., there are $k_i$ many agents with value $i$ for all $i \in \{2,\dots,k\}$ and $k_1 = \frac{n}{2}$ many agents with value $0$.}

    Further, the coalition structure $\mathcal{C}$ is also jump stable, since all agents with value $0$ have minimum cost and all other agents are in the only coalition with their single friend.
\end{proof}

\lemmaPoAMaxAvgJumpSwap*
\begin{proof}[Full Proof of \Cref{lemma:PoA-Max-Avg-Jump-Swap}]
    We will first explain a minimal instance with $k=3$, $n=8$, and then explain how this can be adjusted to work for larger $n$ and $k$. Let $L<M<H$ and $M = \frac{H+L}{2}+1$. The following is an equilibrium with asymptotically maximal cost of $\bigO((H-L)n)$:
        $$\big\{\{L,L\},\{L,L\},\{M,M,H,H\}\big\}.$$
    This is swap stable because of \Cref{theorem:swap-sorted-eq-exists}. It is jump stable, as all but the last coalition have minimal cost, and a jump of any single agent from the last coalition to another would increase their cost.
    All coalitions contain at least two agents, so there is no difference for \textsc{UIS} or \textsc{HIS}. All coalitions except for the last one have cost 0. The last has cost in $\Theta\left((H-L)n\right)$.
    The corresponding social optimum with cost 0 is structured as follows:
        $$\big\{\{L,L,L,L\},\{M,M\},\{H,H\}\big\}.$$
    Again, all coalitions have at least two agents, so there is no difference for \textsc{UIS} or \textsc{HIS}. 
    The instance can be generalized in the following~way:
        $$\big\{\overbrace{\{L,L\}}^{k-3},\{L,L\},\{L,L\},\{\overbrace{M\ldots M}^{\frac{(n-2k-2)}{2}},\overbrace{H \ldots H}^{\frac{(n-2k-2)}{2}}\}\big\},$$
    by adding pairs of $L$-agents for larger $k$, and a balanced amount of $M$'s and $H$'s for larger $n\geq 3k+2$.
    
    For all \textsc{Max} games, consider the following variant of the construction of \Cref{lemma:max-jump-unsorted-eq} with only two coalitions:
    $$\big\{\{L,L,L,L\},\{H,H,H,H\}\big\}$$ 
    having optimal social cost 0, and
    $$\big\{\{L,L,H,H\},\{L,L,H,H\}\big\}$$
    having two coalitions with maximal cost in $\Theta(n~(H~-~L))$.
    The second instance is an equilibrium, independently if it is a \textsc{Swap}, \textsc{Jump-UIS}, or \textsc{Jump-HIS} game, as the cost for any agent is the same in both coalitions. So for \textsc{Max} games, PoA is unbounded even for $k=2$.
\end{proof}

\lemmaAvgJumpHISkTwoPoA*
\begin{proof}[Proof of \Cref{lemma:avg-jump-his-k2-poa}]
    Let us first look at the $\SC(\OPT)=0$ case. We only consider the case with $n\geq 4$ agents and each group has at least two agents, as otherwise the unique optimum becomes the unique equilibrium because agents would simply move into empty coalitions or to their own type. 
    We will show that the optima are the only equilibria.
    Since we have only $k=2$ coalitions, either (1) all agents share the same value, or (2) there are two different agent values $L$ and $H$.
    For (1), no agent in equilibrium pays any cost, so any equilibrium also has cost 0.
    For (2), we have $x$ agents with value $L$, and $n-x$ agents with value $H$.
    In the social optimum, similar to the $k=2$ example in \Cref{lemma:PoA-Max-Avg-Jump-Swap}, there is one coalition with only $L$ agents, and one with only $H$ agents.
    Assume for contradiction there is an equilibrium with a different distribution, where there is at least one coalition with agents of both values. In the \textsc{HIS} setting, the grand coalition would not be stable, as all agents have cost larger than 0 and moving to the empty coalition would decrease their cost.
    If there is only one coalition with 2 different agent values, wlog $C_1$, the agents in the other coalition $C_2$ have the same value, wlog $H$, and thus a cost of $0$. But then, all $H$-agents in coalition $C_1$ can improve by jumping to coalition $C_2$.
    If, however, both coalitions have agents of both types, there is one agent with value $L$ in $C_1$ and $C_2$. We call them $a_1$ and $a_2$, and assume wlog $\cost(a_1,C_1)\leq \cost(a_2,C_2)$. This, however, means that $a_2$ can improve by jumping to $C_1$, as $\cost(a_2,C_1\cup\{a_2\}) = \frac{|C_1|-1}{|C_1|} \cost(a_1,C_1)$.
    Both cases lead to a contradiction, meaning the only stable coalition structure is the structure with one coalition of only $L$ agents, and one with only $H$ agents.

    Before we look at the second case, we can see that any instance with $k=2$ coalitions and at least three different values has $\SC(\OPT)>0$, as per pigeonhole principle, at least two agents with different values need to be in at least one coalition, both having cost larger than 0. This rules out an unbounded PoA for this setting.

    Next, let us consider the case where $\SC(\OPT)>0$. 
    Here, we provide the following instance with PoA in $\Omega(n)$. Let $L=1$ and $H=n$, and an instance be given with one value $L=1$ agent, one $H=n$ agent, and $n-2$ agents of value $L+1=2$. We look at the following two coalition structures:
    $$\big\{\{1,\overbrace{2\ldots2}^{n-2}\},\{n\}\big\},$$
    with social cost $1+\frac{1}{n-2}\cdot(n-2)+1=2$, and
    $$\big\{\{1\},\{\overbrace{2\ldots2}^{n-2},n\}\big\},$$
    with cost $0+\frac{n-2}{n-2}\cdot(n-2)+(n-2)=2(n-2)$. The first is an upper bound to the social optimum, and if the second coalition structure is an equilibrium, the PoA is lower bounded in $\Omega(n)$.
    We can see that the second structure is indeed an equilibrium, as the agent with value $1$ has optimal cost of 0, and the only other possible moves would be (1) agent $n$ moving to the left coalition, which increases its cost from $(n-2)$ to $n-1$, or (2) one agent with value $2$ moving to the left coalition, keeping its cost from $\frac{n-2}{n-2}=1$ to $2-1=1$. In all cases jumping would not be beneficial, and thus, the second structure is an equilibrium.
    By comparing the cost of both coalition structures, we get a cost ratio of $\frac{2(n-2)}{2}=n-2$ meaning the PoA is indeed in $\Omega(n)$.
\end{proof}

\lemmaAvgJumpUISkTwoPoA*
\begin{proof}[Proof of \Cref{lemma:avg-jump-uis-k2-poa}]
    Similar to the construction of \Cref{lemma:max-jump-unsorted-eq}, by comparing $$\big\{\{L,L\},\{H,H\}\big\}$$ having two occupied coalitions and social cost 0, to
    $$\big\{\{\},\{L,L,H,H\}\big\}$$
    having one occupied coalition and cost $\Theta(n(H-L))$ resp. $\Theta(n)$ for \textsc{Cutoff} games with $\lambda <\dist(H,L)$, we can see the PoA is unbounded.
    The first coalition structure is a social optimum, as its social cost is minimal, and the second structure, a grand coalition, is stable for all three \textsc{Jump-UIS} cost settings by \Cref{obs:Jump-UIS-grand-coalition-stable}.
\end{proof}

\section{Omitted Details from Section 4.2}
\label{appx:sub:PoS}
\corCutoffAvgSwapPos*
\begin{proof}
    This follows from the fact that the social cost is a potential function for \textsc{Cutoff$_\lambda$-Swap} and \textsc{Avg-Swap} games (confer to \Cref{lemma:swap-mfhg} and \Cref{cor:swap-potential-games}).

    Assume for contradiction that there is a game instance (with the social cost being a potential function) such that the social optimum is not an equilibrium. Since it is not an equilibrium, there is an improving move, which would then strictly decrease the potential function. As the potential function is the social cost, there would be another coalition structure with a lower social cost than the social optimum, which is a contradiction.
\end{proof}

\lemmaMaxSwapPoS*
\begin{proof}[Proof of \Cref{lemma:max-swap-pos}]
    Assume for contradiction, that there is a coalition structure $\mathcal{C}$ with optimum cost that has an improving swap. Let $x$ and $y$ be two agents with an improving swap, and $\mathcal{C}^\prime$ be the coalition structure $\mathcal{C}$ after the swap of agents $x$ and $y$. We assume wlog that $v_x < v_y$, since agents with the same value would not strictly decrease their cost by swapping. 
    Further, $X \coloneqq \mathcal{C}(x) \setminus \{x\}$ and $Y \coloneqq \mathcal{C}(y) \setminus \{y\}$ the sets of other agents in the coalitions of agent $x$ and agent $y$ respectively, and let $X_{\min}$ ($Y_{\min}$) and $X_{\max}$ ($Y_{\max}$) to be the minimum and maximum value of $X$ (and $Y$). Note that all costs of all agents in the coalitions $\mathcal{C}(x)$ and $\mathcal{C}(y)$ are the distances to these four values. Thus, if $v_x \in [Y_{\min},Y_{\max}]$, the costs of all agents in $Y$ do not change, and similarly, if $v_y \in [X_{\min},X_{\max}]$, the costs of all agents in $X$ do not change. 

    First, assume that both $v_x \in [Y_{\min},Y_{\max}]$ and $v_y \in [X_{\min},X_{\max}]$ hold. In this case, all costs of all agents they the same except for the costs of agents $x$ and $y$ that decrease with the swap by definition. Therefore, the coalition structure $\mathcal{C}^\prime$ would have lower social cost than $\mathcal{C}$ which is a contradiction to $\mathcal{C}$ being a social optimum. 

    As a result, we know that either $v_x \notin [Y_{\min},Y_{\max}]$ or $v_y \notin [X_{\min},X_{\max}]$. We assume wlog that $v_x \notin [Y_{\min},Y_{\max}]$. Since we assume $x < y$, we know that $v_x < Y_{\min} \leq Y_{\max}$. Now, there are two cases depending on whether the cost of agent $y$ are defined by the distance to $Y_{\min}$ or $Y_{\max}$. 

    If $\cost(y,Y) = |y - Y_{\min}|$, we know that the cost for agent $y$ increase by the swap since $X_{\min} \leq x < Y_{\min}$. This is a contradiction to the swap being improving for both agents.

    If, however, the cost of agent $y$ in coalition structure $\mathcal{C}$ are $\cost(y,Y) = |Y_{\max} - y|$, consider the following chain of inequalities:
    \begin{align*}
        |v_y-X_L| \overset{(1)}{>} |v_x-X_L| \overset{(2)}{>} |Y_R-v_x| \overset{(3)}{>} |v_y-Y_R|.
    \end{align*}
    Note that inequalities (1) and (3) simply hold by the assumption that $v_x < v_y$. Additionally, Inequality (2) holds, because otherwise the swap would not be improving for agent $x$. However, in total, this chain of inequality shows, that the swap is then not improving for agent $y$, which is a again a contradiction. 

    As a result, there cannot be a improving swap in a coalition structure with optimal cost, proving a price of stability of $1$ for all \textsc{Max-Swap} games.
\end{proof}

\lemmaAvgMaxJumpPoS*
\begin{proof}[Proof for \textsc{Average}]
    Let us first look at the social optimum. It is unstable, as agent 4 can jump from right to left and decrease its cost from $\frac{2+4+4}{3}=\frac{10}{3}>3$ to $\frac{3+3+3}{3}=3$. We checked that this is indeed a social optimum by using brute force enumeration of all agent assignments. The rationale behind this coalition structure being the optimum, is that all 1's need to stay separate of all 8's, leaving two possible other assignments:
    $$\big\{\{1,1,1,\mathbf{4,6}\},\{8,8\}\big\}$$
    with cost $\frac{3\cdot(3+5)+(3\cdot3+2)+(3\cdot5+2)}{4}=\frac{24+11+17}{4}=13$, and
    $$\big\{\{1,1,1,\mathbf{6}\},\{\mathbf{4},8,8\}\big\}$$
    with cost $\frac{3\cdot5+3\cdot5}{3}+\frac{2\cdot4+2\cdot4}{2}=10+8=18$.
    
    The second structure is a jump equilibrium, as no agent can improve by jumping to the other coalition. Independently of which agent is considered, we can see that in its current coalition, the maximum distance to any other agent is at most as large as the minimum distance to any agent in the other coalition, thus its current cost is less or equal to its cost in the other coalition. Thus, no jump would be improving.  
\end{proof}
\begin{proof}[Proof for \textsc{Maximum}]
    The coalition structure $\mathcal{C}^*$ is unstable as the agent with value $4$ can decrease its cost from $4$ to $3$ by jumping to the other coalition. 

    In the remainder of this proof, we show that $\mathcal{C}^*$ is indeed the unique social optimum (with social cost of $14$), implying that the PoS greater than $1$. First, we show that all coalition structure in which agents with values $1$ and $8$ are in the same coalitions have a higher cost than $14$. In this case, these two agents alone contribute at least a cost of $14$ to the social cost. Then, either there is another agent in their coalition (causing an additional cost of at least $3$), or the agents with values $4$ and $6$ are in the other coalition (causing an additional cost of at least $4$), all of these coalition structures have a higher cost than $\mathcal{C}^*$. For the social cost of the remaining coalition structures, consider the following list:
    \begin{itemize}
        \item $\mathcal{C}^* =\{\{1,1,1\},\{4,6,8,8\}\}$ has social cost $14$
        \item $\{\{1,1,1,4\},\{6,8,8\}\}$ has social cost $18$
        \item $\{\{1,1,1,6\},\{4,8,8\}\}$ has social cost $32$
        \item $\{\{1,1,1,4,6\},\{8,8\}\}$ has social cost $23$
    \end{itemize}
    As a result, $\mathcal{C}^*$ is the unique optimum that is also unstable, showing that the PoS for \textsc{Max-Jump} games is greater than $1$.
\end{proof}

\lemmaPoSCutoffJump*
\begin{proof}
    For the first part, let $\mathcal{I}\coloneqq(n,(v_i)_{i\in [n]},k)$ be a nice \textsc{Cutoff$_\lambda$-Jump} game instance and $\mathcal{B}$ a set of $\lambda$-blocks, that cover the values $V\coloneqq\{v_i\}_{i \in [n]}$ of the game instance $\mathcal{I}$. We assume wlog that all $\lambda$-blocks are disjoint. We show that the coalition structure of $\mathcal{I}$ with the minimum social cost is also a jump equilibrium. For that, note that all agents whose values are part of the same $\lambda$-block are pairwise friends. Therefore, if $V[B]$ is the set of agents whose values are covered by a $\lambda$-block $B \in \mathcal{B}$, then $\mathcal{C} \coloneqq \{V[B]\}_{B \in \mathcal{B}}$ is a coalition structure with a social cost of $\SC(\mathcal{C}) = 0$. This means, that every agent has a cost of $0$ in this coalition structure, implying that $\mathcal{C}$ is both optimum and jump stable. This shows that the PoS is exactly $1$ for nice \textsc{Cutoff$_\lambda$-Jump} games.

    For the second part, we give a \textsc{Cutoff$_\lambda$-Jump} game instance that is not nice, where the social optimum is not jump stable. For that consider the game instance with $n=8$ agents, $k=2$ coalitions and values 
        $$\left\{0, \frac14\varepsilon, \frac24\varepsilon, \frac24\varepsilon, \lambda + \frac14\varepsilon, \lambda + \frac34\varepsilon, 2\lambda + \frac14\varepsilon, 2\lambda+\varepsilon\right\}.$$
    for every $\varepsilon > 0$. Note that the values of this instance can be covered by an interval of size $2\lambda + \varepsilon$, which to just by $\varepsilon$ too large for the instance to be nice. The social optimum of this game instance is
        $$\mathcal{C}^* \coloneqq \left\{\left\{0, \frac14\varepsilon, \frac24\varepsilon, \frac24\varepsilon\right\},\; \left\{ \underline{\lambda + \frac14\varepsilon}, \lambda + \frac34\varepsilon, 2\lambda + \frac14\varepsilon, 2\lambda+\varepsilon\right\}\right\}$$
    with a social cost of $\frac{4}{3}$. However, the agents with value $\lambda + \frac14\varepsilon$ can jump to the other coalition to decrease its cost from $\frac{1}{3}$ to $\frac{1}{4}$.

    In the rest of this proof, we show that $\mathcal{C}^*$ is indeed the optimum coalition structure. To avoid an exhausting case distinction, we show that many sets of coalition structures are not better than $\mathcal{C}^*$. 
    First, note that in all coalition structures that are at least as good as $\mathcal{C}^*$, there may not be any agent with a cost of $1$, as this agent would induce a cost $1$ on its own and an additional cost of $|C|-1\cdot \frac{1}{|C|-1} = 1$ because the other agent has at least one enemies in its coalition. Therefore, such coalition structures have a cost of at $2$, which is greater than the cost of $\frac{4}{3}$ of the coalition structure $\mathcal{C}^*$. Since the agent with value $v_8 = 2\lambda+\varepsilon$ only has the agent with value $v_7 = 2\lambda + \frac14\varepsilon$ has friend, they have to be in the same coalition. 
    
    Next, we rule out that coalition structures with coalitions of size $1$ and $2$. For that, we consider the number of enemies each agent has in the whole game instance (see \Cref{table:pos-cutoff-jump}). Note that the sum of these numbers is $30$. 

    \begin{table}[!h]
    \begin{center}
        \begin{tabular}{r|c|c|c|c|c|c|c|c}
            $i$ & $1$ & $2$ & $3$ & $4$ & $5$ & $6$ & $7$ & $8$ \\ \hline
            value & $0$ & $\frac{\varepsilon}{4}$ & $\frac24\varepsilon$ & $\frac24\varepsilon$ & $\lambda{+}\frac{\varepsilon}{4}$ & $\lambda{+}\frac34\varepsilon$ & $2\lambda{+}\frac{\varepsilon}{4}$ & $2\lambda{+}\varepsilon$ \\ \hline
            $|N_i^-(V)|$ & $4$ & $3$ & $3$ & $3$ & $2$ & $5$ & $4$ & $6$
        \end{tabular}
    \end{center}
    \caption{Number of enemies over all agents in the \textsc{Cutoff$_\lambda$-Jump} game instance used in part (ii) of \Cref{lemma:pos-cutoff-jump}.\label{table:pos-cutoff-jump}}
    \end{table}
    
    If we assume a coalition structure with coalitions of size $1$ and $7$, we know that the cost of the agent in the singleton coalition is at least $0$. For the coalition of size $7$, we know that the sum of enemies in this coalition over all agents is at least $17$ ($30$ in total, minus $6$ for putting the agent with the most enemies in the singleton coalition, and minus $7$ even if the agent in the singleton coalition was an enemies of all other agents). Thus, the total cost of such a coalition structure is at least $\frac{17}{|7|-1} > \frac{4}{3} = \SC(\mathcal{C}^*)$. With the same arguments we can rule out any coalition structure with a coalitions of size $2$ and $6$ that have a cost of least $\frac{30-11-2\cdot 6}{|6|-1} > \frac{4}{3}$. As a result, we know that in the optimum coalition structure both coalitions have to have a size of at most $5$. 

    We already know that the agents $7$ and $8$ have to be in the same coalition. In the following, we make a case distinction over the size of the coalition they are part of. If they are part of a coalition of size $5$, then agent $8$ has at least three enemies in this coalition and thus has a cost of $\frac{3}{4}$ and induces a cost of at least $\frac{3}{4}$ to the other agents. As these cost alone are higher than $\frac{4}{3} = \SC(\mathcal{C}^*)$, we know that agents $7$ and $8$ cannot be part of a coalition of size $5$. If agents $7$ and $8$ are part of a coalition of size $3$, then only agents $5$ and $6$ could be the remaining agent in this coalition, since all other agents had a cost of $1$ in this coalition. These two coalition structures have a social cost of
    \begin{itemize}
        \item $3$ if agent $5$ joins agents $7$ and $8$, and
        \item $1.5$ if agent $6$ joins agents $7$ and $8$,
    \end{itemize}
    which is both more than $\frac{4}{3} = \SC(\mathcal{C}^*)$.

    If agents $7$ and $8$ are part of a coalition of size $4$, there are only a few coalition structures left to check under the above findings (and up to symmetry). These coalition structures have a social cost of 
    \begin{itemize}
        \item $4.0$ for $\big\{\{0, \frac24\varepsilon, 2\lambda{+}\frac14\varepsilon, 2\lambda+\varepsilon\},\; \{ \frac14\varepsilon, \frac24\varepsilon, \lambda{+}\frac14\varepsilon, \lambda{+}\frac34\varepsilon \}\big\}$
        \item $4.0$ for $\big\{\{0, \frac24\varepsilon, \lambda{+}\frac14\varepsilon, 2\lambda+\varepsilon\},\; \{ \frac14\varepsilon, \frac24\varepsilon, \lambda{+}\frac34\varepsilon, 2\lambda{+}\frac14\varepsilon\}\big\}$
        \item $\frac{14}{3}$ for $\big\{\{0, \frac24\varepsilon, \lambda{+}\frac14\varepsilon, \lambda{+}\frac34\varepsilon,\},\; \{ \frac14\varepsilon, \frac24\varepsilon, 2\lambda{+}\frac14\varepsilon, 2\lambda+\varepsilon\}\big\}$
        \item $\frac{14}{3}$ for $\big\{\{0, \frac14\varepsilon, \lambda{+}\frac14\varepsilon, \lambda{+}\frac34\varepsilon,\},\; \{ \frac24\varepsilon, \frac24\varepsilon, 2\lambda{+}\frac14\varepsilon, 2\lambda+\varepsilon\}\big\}$
        \item $\frac{4}{3}$ for $\mathcal{C}^*$
    \end{itemize}
    showing that $\mathcal{C}^*$ is the social optimum.
\end{proof}

\end{document}